\newcommand\fnsep{\textsuperscript{\, ,}}
\renewcommand{\Re}{\mathrm{Re}}
\newcommand{\af}{\mathbf{b}}
\newcommand{\cf}{\mathbf{b}^*}
\newcommand{\ab}{\mathbf{a}}
\newcommand{\cb}{\mathbf{a}^*}
\newcommand{\wf}{\omega^{(\mathrm{f})}}
\newcommand{\wb}{\omega^{(\mathrm{b})}}
\newcommand{\mf}{m_\mathrm{f}}
\newcommand{\mb}{m_\mathrm{b}}
\newcommand{\nop}{n_\mathrm{Op}}
\newcommand{\tnop}{\tilde{n}_{\mathrm{Op}}}
\newtheorem{rk}{Remark}
\newtheorem{lem}{Lemma}[section]
\newtheorem{Thannex}{Theorem}[section]
\newtheorem{Th}{Theorem}
\newtheorem{Hypothesis}{Hypothesis}
\title{Ultraviolet Renormalisation of a quantum field toy model I }
\author{Benjamin Alvarez\footnote{Corresponding author}\fnsep\footnote{Email: \texttt{benjamin.alvarez@univ-tln.fr}} \\ Aix Marseille Univ, Univ Toulon, CNRS, CPT, Marseille, France \\
\\
Jacob Schach M\o ller\footnote{Email: \texttt{jacob@math.au.dk}} \\ Department of Mathematics, Aarhus University, Denmark}
\begin{document}

\maketitle

\begin{center}
\textbf{Abstract}
\end{center}

We consider a class of translation invariant Hamiltonians describing a fermion field coupled to a boson field.  The interaction kernels are assumed bounded in the fermionic momentum variable and decaying like $|q|^{-p}$ for large boson momenta $q$. A realistic physical case would be $p=\frac12$. We impose a spatial cutoff and UV-renormalise the resulting Hamiltonian by subtracting its ground state energy. Our renormalisation procedure works for the physically realistic value $p=\frac12$ in spatial dimensions $1$ and $2$ and for $p>\frac34$ in spatial dimension $d=3$.  

\section{Introduction} 

Ultraviolet (UV) Renormalisation is a well known and important problem in quantum field theory. We refer to \cite{Das:2008zze,Peskin:1995ev} for an introduction to this topic from a physical point of view. Even if this procedure leads to predictions in agreement with experiments, it remains a difficult mathematical problem, which is necessary to tackle if one wants to define a physical model without cutoffs. Nowadays, UV renormalisation goes through a Lagrangian formulation, with a renormalised Hamiltonian arising indirectly as a generator of time translations in a representation of the Poincare group. In  this paper we are concerned with UV renormalisation at the level of Hamiltonians. Among other works, we refer to \cite{GlJa77_01} where both ultraviolet and spatial cutoffs have been removed for the Yukawa model and for $\Phi_2^{2n}$ in $1+1$ dimension. In particular, \cite{GlJa77_01} the Yukawa model is renormalised through a shift in self-energy corresponding to the ground state energy to second order in perturbation theory, together with a second order boson mass shift. Their method to remove the ultraviolet cutoffs is different from ours and relies on $N_{\tau}$ estimates. Here we are concerned with the removal of the ultraviolet cutoff, not with the spatial cutoff, which in \cite{GlJa77_01} is removed using abstract arguments from the theory of automorphisms of operator algebras. 

The method we employ has its origin in works of Eckmann and Hepp \cite{Eck1970,Hepp1969}. See also \cite{Fro1974} for an adaptation of the results of \cite{Eck1970} in the context of the relativistic Nelson model, where the Gross dressing transformation \cite{GriWun2018,Nel1964} does not apply. Our starting point is a more recent adaption from the PhD thesis of W\"unsch \cite{AW2017}, where the strategy of Eckmann was carried out in detail for general models of $N$ fermions interacting with a bosonic scalar field through a linear interaction preserving the fermion number. See also \cite{BrDe07,De03_01,Gross1973,GuJoHi2014,Lampart2020,MaMo2018,Schmidt2019,Sloan1974,TeTu2021} for other self-energy renormalisation methods at the level of Hamiltonians, all pertaining to models with conserved  fermion number, or without fermions present.

The quantum field toy model we propose to study involves a scalar boson field in interaction with a scalar fermion field. 
The main difference when comparing with the other models renormalised through Eckmann's method, is that our toy model has no conserved particle number. This toy model has no physical relevance (fermions being scalar), but we justify it in the following way.  Eckmann's strategy \cite{Eck1970} provides a powerful tool combined with conserved particle number or where only one field is involved \cite{AW2017}. However, most physical models, involves products of at least three creation and/or annihilation operators in their interaction term without relevant conservation of the number of particles. The interaction term in our model involves products of fermionic and bosonic annihilation and creation operators with momentum conservation, thus filling a gap in complexity between the linearly coupled models with conserved fermion number and the Yukawa model studied in \cite{GlJa77_01}. As in the latter physical models, we impose a spatial cutoff that breaks translation invariance.

The starting point for Eckmann's method is a suitable resummation of the Neumann series of the resolvent, grouping singular expressions with counterterms.  When only one particle is involved (or equivalently when the number of particles is conserved), the singular expressions involve only finite sequences of creation and annihilation operators. The counterterms are designed to regularise the fully contracted term that comes out after normal ordering a singular expression. If one includes an additional bosonic field, as in the minimally coupled model, then applying the same strategy would require to normal order completely every term in the Neumann series leading to sequences of factorial growth. The situation is simpler, however, if one considers -- as we do -- a fermion field interacting with a boson field instead, as the boundedness of smeared fermionic annihilation and creation operators may be exploited, something that also plays a crucial role in \cite{GlJa77_01}.

The present paper is a first step, only taking into account leading order contributions to the self-energy. In a follow-up paper \cite{AlMo22}, we systematically take into account higher-order contributions to the self-energy that require the identification and normal ordering, of longer and longer sequences of terms in a Neumann expansion. This will take us to the limit of what one may expect, by comparison with solvable models \cite{De03_01}. Note that first results in this direction have already been obtained, with a different method, in the context of the Polaron model \cite{Lampart2022}.  Finally, in future work, the authors hope to tackle models with an interaction quadratic in bosonic fields. All these projects are dedicated to extend the applicability of Eckmann's method, in order to apply it to more complex physical models, like the Yukawa model. 

We will now introduce, in a more precise way, the toy model. It is built by analogy to the general expression of quantum field interaction terms. The interaction term is quadratic and of the same form as $\int \Psi_\mathrm{b}(x) \Psi_\mathrm{f}(x) dx$, where $\Psi_\mathrm{b}(x)$ is a scalar boson field and $\Psi_\mathrm{f}$ is a scalar fermion field. For the boson field, we may use the relativistic free field
\begin{equation}\label{FieldBoson}
\Psi_\mathrm{b}(x) = (2 \pi)^{-\frac{d}{2}} \int_{\mathbb{R}^d} \biggl[ \frac{ e^{i q \cdot x}}{(2(|q|^2+\mb^2)^{\frac12})^{\frac12}}\ab(q) +  \frac{ e^{-i q \cdot x}}{(2(|q|^2+\mb^2)^{\frac12})^{\frac12}}\cb(q) \biggr]d q.
\end{equation}
We will be working with massive bosons $\mb>0$, although we believe that one may also be able treat massless bosons using the methods of this paper.
There are no scalar fermions, so we are instead guided by the form of the relativistic free spin-$\frac12$ fermion field, which is of them form
\[
 (2 \pi)^{-\frac{d}{2}} \sum_{s\in \{-\frac12,\frac12\} } \int_{\mathbb{R}^d} \biggl[ \frac{u(k,s) e^{i k \cdot x}}{(2(|k|^2+\mf^2)^{\frac12})^{\frac12}}\af_{+}(k,s) +  \frac{v(k,s) e^{-i k \cdot x}}{(2(|k|^2+\mf^2)^{\frac12})^{\frac12}}\cf_{-}(k,s) \biggr]d k,
\]
where the operators $\af_{+}(k,s)$ are annihilation operators for the fermion and the operators $\mathrm{\bf b}_{-}^*(k,s)$ are creation operators for the anti fermion. Since we will be working with scalar fermions, these operator will not appear again.  The second conjugate field is of the same form.
The functions $u$ and $v$ come from Dirac spinors and
\begin{equation}\label{PhysCouplings}
k \mapsto \frac{u(k,s)}{(2(|k|^2+\mf^2)^{\frac12})^{\frac12}} \qquad
\textup{and} \qquad  
k\mapsto \frac{v(k,s) }{(2(|k|^2+\mf^2)^{\frac12})^{\frac12}}
\end{equation}
are bounded functions. This leads us to consider a scalar fermion field of them form
\begin{equation}\label{FieldFermion}
\Psi_\mathrm{f}(x) =  (2 \pi)^{-\frac{d}{2}} \int_{\mathbb{R}^d} \Bigl[ \overline{h(k)} e^{i k \cdot x}\af(k) +  h(k) e^{-i k \cdot x}\cf(k) \Bigr]d k
\end{equation}   
with $h\colon \mathbb R^d\to \mathbb C$ a bounded (measurable) function. We will be using the letter $k$ for fermion momentum throughout the paper, whereas the letter $q$ will denote boson momentum.

The two particle toy model is defined on  a tensor product of a bosonic Fock space with a fermionic Fock space: 
\[
\mathscr{H} = \mathscr{F}_{\mathrm{s}}\bigl(L^2(\mathbb R^d)\bigr) \otimes \mathscr{F}_{\mathrm{a}}\bigl(L^2(\mathbb R^d)\bigr),
\]
where the subscript s stands for symmetric, and a for antisymmetric.
The free Hamiltonian $H_0$ is defined as the sum of the second quantisations of the dispersion relations for each particle:
\[
H_0 = \mathrm{d}{\Gamma}\bigl(\wb\bigr) + \mathrm{d}{\Gamma}\bigl(\wf\bigr),
\]
or equivalently,
\[
H_0 = \int_{\mathbb{R}^d} \wf(k) \cf(k)\af(k)dk + \int_{\mathbb{R}^d}  \wb(q) \cb(q)\ab(q)dq,
\]
where $\wb$ is the dispersion relation for the boson and $\wf$ is the one for the fermions. They are furthermore of the following form:  
\[
\wb(q)= \sqrt{q^2 + \mb^2},\qquad \wf(k)= \sqrt{k^2 + \mf^2} \qquad\text{with }m_{\mathrm{b}},\,m_{\mathrm{f}} > 0.
\] 
The operator $\cf$, respectively $\af$, stands for the creation, respectively annihilation, operator for fermions. The operator $\cb$, respectively $\ab$, stands for the creation, respectively annihilation, operator for bosons. The standard canonical commutation and anticommutation relations hold:
\[
\begin{aligned}
[\cb(q),\cb(q')] & =  [\ab(q),\ab(q')] = 0, & \qquad 
[\cb(q),\cf(k)] & =  [\cb(q),\af(k)] = 0,\\
[\ab(q),\cf(k)] & =  [\ab(q),\af(k)] = 0, & \qquad 
 [\ab(q),\cb(q')] & =  \delta(q-q')\\
\{\cf(k), \cf(k') \} & =   \{\af(k), \af(k') \} = 0, & \qquad 
\{\cf(k), \af(k') \} & =  \delta(k-k').
\end{aligned}
\]
Here $[A,B] = AB-BA$ and $\{A,B\} = AB+BA$.
Moreover, $H_0$ is essentially selfadjoint on:
\[
\mathscr H_\mathrm{fin}=\mathscr{F}_{\mathrm{s,fin}}\bigl(C^{\infty}_0(\mathbb{R}^d)\bigr) \otimes \mathscr{F}_{\mathrm{a,fin}}\bigl(C^{\infty}_0(\mathbb{R}^d)\bigr),
\]
where $\mathscr{F}_{\mathrm{s,fin}}(C^{\infty}_0(\mathbb{R}^d))$, respectively $\mathscr{F}_{\mathrm{a,fin}}(C^{\infty}_0(\mathbb{R}^d))$, is the subspace of $\mathscr{F}_\mathrm{s}$, respectively $\mathscr{F}_\mathrm{a}$, of states $(\phi_0, \phi_1, \dots) \in \mathscr{F}_{\sharp}$ such that  $\phi_l\in \otimes^l_{k=1} C^{\infty}_0(\mathbb{R}^d)$ and the functions $\phi_l$ are vanishing but for finitely many $l$, and the tensor products here are the algebraic tensor products. 

\newcommand{\Ph}{\mathrm{Phys}}

We will consider an interaction of the form
\begin{equation}
\label{InteractionIntro}
H_\mathrm{I}\bigl(G^{(1)}_\Ph,G^{(2)}_\Ph\bigr) = H^{\cb \af}\bigl(G^{(1)}_\Ph\bigr)+H^{ \ab \cf }\bigl(G^{(1)}_\Ph\bigr)+
H^{\ab \af}\bigl(G^{(2)}_\Ph\bigr) + H^{\cb \cf}\bigl(G^{(2)}_\Ph\bigr),
\end{equation}
where 
\[
\begin{aligned}
H^{\cb \af}(G^{(1)}_\Ph)& =  \int \overline{G^{(1)}_\Ph(k,q)} \af(k) \cb(q) dk dq, & H^{ \ab \cf }(G^{(1)}_\Ph) & =  \int G^{(1)}_\Ph(k,q) \cf(k) \ab(q) dk dq,\\
H^{\ab \af}(G^{(2)}_\Ph) & =  \int \overline{G^{(2)}_\Ph(k,q)} \af(k) \ab(q) dk dq, & H^{\cb \cf}(G^{(2)}_\Ph) &=  \int G^{(2)}_\Ph(k,q)\cf(k) \cb(q) dk dq.
\end{aligned}
\]
The interaction $\int \Psi_\mathrm{b}(x) \Psi_\mathrm{f}(x) dx$, with fields given by \eqref{FieldBoson} and \eqref{FieldFermion}, will take the above form with $G^{(1)}(k,q) = G^{(2)}(k,q) = \frac{h(k)}{\wb(q)^\frac12}$.
 The physically motivated kernels are therefore assumed to be of the form: 
\begin{equation}
\label{condiGPh}
G^{(1)}_\Ph(k,q)  =  \frac{h^{(1)}(k,q) }{\wb(q)^\frac12}\delta(k-q),\qquad
G^{(2)}_\Ph(k,q)  =   \frac{h^{(2)}(k,q) }{\wb(q)^\frac12} \delta(k+q),
\end{equation}
where $\delta$ denotes the delta function, which ensures momentum conservation. 
Motivated by the properties of the functions \eqref{PhysCouplings},
we impose the following condition on the functions $h^{(1)}$ and $h^{(2)}$:
\begin{Hypothesis}
\label{Hypothesis-h}
For $j=1,2$,  $h^{(j)}\in L^\infty(\mathbb R^d\times\mathbb R^d)$.
\end{Hypothesis}
Note that $H_\mathrm{I}$, as well as its four terms, are well-defined as forms on $\mathscr H_\mathrm{fin}$. The unrenormalised interacting Hamiltonian, as a form on $\mathscr H_\mathrm{fin}$, is defined by the form sum
\begin{equation}\label{Model}
H= H\bigl(G^{(1)}_\Ph,G^{(2)}_\Ph\bigr) = H_0 + H_\mathrm{I}\bigl(G^{(1)}_\Ph,G^{(2)}_\Ph\bigr).
\end{equation}
In \cite{Al19}, one of us considered a simplified version of this toy model, with $G_\Ph^{(2)}=0$. The resulting model conserves total particle number and is much easier to handle. In fact, for small coupling, it does not require renormalisation.

 As discussed in the introduction, Hypothesis~\ref{Hypothesis-h} is not enough to define \eqref{Model} as a self-adjoint operator. We have to regularise the kernels in two ways. First 
by introducing an ultraviolet cutoff in the form of cutoff function $\chi$ and a real number $\Lambda>0$, setting the scale of the UV cutoff. The cutoff function should satisfy
\begin{Hypothesis}
\label{Hypothesis-chi}
The function $\chi\in L^\infty(\mathbb R^d)$ is real-valued, non-negative, and has compact support $\mathrm{supp}(\chi)$. We furthermore assume that $\chi$ is continuous at $0$ with $\chi(0)=1$. For $\Lambda>0$, we set
$\chi_\Lambda(k) = \chi(k/\Lambda)$.
\end{Hypothesis}
The second regularisation is a spatial cutoff $g$ that replaces the delta function and therefore breaks momentum conservation. We have
\begin{equation}
\label{RegularisedHamiltonian}
H_{\Lambda}= H\bigl(G_{\Lambda}^{(1)},G^{(2)}_\Lambda\bigr) = H_0 +  H_\mathrm{I}\bigl(G^{(1)}_{\Lambda},G^{(2)}_{\Lambda}\bigr),
\end{equation}
where, for $j=1,2$,
\begin{equation}\label{KernelApprox}
G^{(j)}_{\Lambda}(k,q) = G^{(j)}(k,q) \chi_\Lambda(k)\chi_\Lambda(q)
\end{equation}
and 
\begin{equation}
\label{condiG}
G^{(1)}(k,q)  =  \frac{h^{(1)}(k,q) }{\wb(q)^p}g(k-q),\qquad
G^{(2)}(k,q)  =   \frac{h^{(2)}(k,q) }{\wb(q)^p} g(k+q).
\end{equation}
The exponent $p$ is supposed to be equal to $\frac12$ in the physically motivated case. We suppress the dependence on $p$, $g$ and $\chi$ in the notation $G^{(j)}_{\Lambda}$ and $G^{(j)}$, $j=1,2$. 

 

\begin{Hypothesis}
\label{MainHypothesis}
The spatial cutoff $g\in L^\infty(\mathbb R^d)$ has compact support, $\mathrm{supp}(g)$, contained in the unit ball $\mathscr{B}(0,1) = \{z\in\mathbb C\, |\, |z|<1\}$.
\end{Hypothesis}

\begin{rk} In order to remove the spatial cutoff, one should take it to be of the form $g_n(k) = n^d f(n k)$ with $d$ the dimension and $n \in \mathbb{N}$. The function $f\in L^\infty(\mathbb R^d)$ should have support in the unit ball and  satisfy that $f\geq 0$
and
 \begin{equation}
 \int f(k) dk = 1.
 \end{equation}
In the sense of distributions, we would then have $\lim_{n\to \infty}g_n = \delta$. All the central constants appearing in the estimates of this paper are $L^2$-norms of functions involving $G_\Lambda^{(1)}$ and $G_\Lambda^{(2)}$. While these constants are all uniformly bounded in the UV scale $\Lambda$, they diverge when the spatial cutoff is removed, i.e., when $n\to \infty$.
\end{rk}

\begin{rk}
 We have not made any attempts to accommodate more general cutoff functions $\chi$ and $g$. The choices made here are surely not optimal. One could easily work with a larger class of cutoffs that do not have compact support. 
\end{rk}

 As we shall see in Theorem~\ref{ThSA} below, the UV-regularised Hamiltonian $H_{\Lambda}$ defines a self-adjoint operator, which is bounded from below.  We can now state our main result:
\begin{Th}
\label{MainTh}
Assume that the Hamiltonian $H_\Lambda$, $\Lambda>0$, is of the form \eqref{RegularisedHamiltonian} with an interaction kernel given by \eqref{KernelApprox}, where $h^{(1)}, h^{(2)}$ satisfy Hypothesis~\ref{Hypothesis-h}, $\chi$ satisfy Hypothesis~\ref{Hypothesis-chi}, and $g$ fulfils Hypothesis~\ref{MainHypothesis}. Let $E_{\Lambda} = \inf\sigma(H_{\Lambda})$. If moreover $p>\frac{d}{2}-\frac{3}{4}$, then there exists a self-adjoint operator $H$, such that $H_{\Lambda}-E_{\Lambda}$ converges in the norm resolvent sense to $H$. The renormalised Hamiltonian $H$ does not depend on the choice of $\chi$.
\end{Th} 

In spatial dimensions one and two, it is therefore possible to define a quantum field model without ultraviolet cutoffs but spatial cutoffs remain. We believe that this result is not optimal and we currently work on an improvement of Theorem~\ref{MainTh}.

This paper is organised as follows. The second part is dedicated to the definition of the regularised Hamiltonian and the introduction of the counterterm needed during the cancellations   of the divergence. The third part is dedicated to the proof of Theorem \ref{MainTh}.

Throughout this paper, we use the following convention 
for sums and products:  $ \sum^{k-1}_{i=k} a_i = \sum_{i\in\emptyset} a_i = 0$ and $\prod^{k-1}_{i=k} a_i = \prod_{i\in\emptyset} a_i = 1$.

\section{Definition of the Model with Cutoffs}

In this brief section we establish selfadjointness of  $H_\Lambda$.

\begin{Th}\label{ThSA}
Let $G^{(j)}_{\Lambda}$, $j=1,2$, be defined as in \eqref{KernelApprox} and suppose Hypothesis~\ref{Hypothesis-h},~\ref{Hypothesis-chi} and~\ref{MainHypothesis}.  Then the operator defined in \eqref{RegularisedHamiltonian} is self-adjoint with domain $\mathscr{D}( H_{\Lambda}) = \mathscr{D}( H_0)$, and essentially self-adjoint on $\mathscr H_\mathrm{fin}$. Furthermore
\[
H_\Lambda \geq  - C_\Lambda, \qquad \textup{where} \quad C_\Lambda =   1+
\int \Bigl(1+\frac1{\wb(q)}\Bigr) \Bigl(\bigl|G^{(1)}_{\Lambda}(k,q)\bigr|^2 +  \bigl|G^{(2)}_{\Lambda}(k,q)\bigr|^2\Bigr)dk dq.
\]
\end{Th}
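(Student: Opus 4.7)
The plan is a Kato--Rellich argument for self-adjointness combined with a sharper form estimate for the lower bound. For the operator estimate on the pieces of $H_I$, I use that fermion creation and annihilation operators are bounded, with norm equal to the $L^2$-norm of their test function. So for fixed boson momentum $q$, the partial operators $\int G^\sharp_\Lambda(k,q)\af(k)\,dk$ and $\int G^\sharp_\Lambda(k,q)\cf(k)\,dk$ have fermionic operator norm at most $\|G^\sharp_\Lambda(\cdot,q)\|_{L^2_k}$; combined with the standard $\|\ab(f)\psi\|\le\|f\|\|N_b^{1/2}\psi\|$ and $\|\cb(f)\psi\|\le\|f\|\|(N_b+1)^{1/2}\psi\|$ via Cauchy--Schwarz in $q$, this gives
\[
\|H^{\ab\cf}(G^\sharp_\Lambda)\psi\|,\ \|H^{\ab\af}(G^\sharp_\Lambda)\psi\|\le \|G^\sharp_\Lambda\|_{L^2}\|N_b^{1/2}\psi\|,
\]
and the analogous bounds with $(N_b+1)^{1/2}$ for the creation-type pieces $H^{\cb\af}(G^\sharp_\Lambda)$ and $H^{\cb\cf}(G^\sharp_\Lambda)$.

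From $N_b\le m_b^{-1}\mathrm{d}\Gamma(\wb)\le m_b^{-1}H_0$ and $\|N_b^{1/2}\psi\|^2\le m_b^{-1}\|\psi\|\|H_0\psi\|$, Young's inequality yields $\|N_b^{1/2}\psi\|\le\epsilon\|H_0\psi\|+C_\epsilon\|\psi\|$ for any $\epsilon>0$, hence $\|H_I\psi\|\le a\|H_0\psi\|+b\|\psi\|$ with $a$ arbitrarily small. The Kato--Rellich theorem then delivers self-adjointness of $H_\Lambda$ on $\mathscr{D}(H_0)$.

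For the lower bound I sharpen the bosonic estimate so that it carries the weight $\wb(q)^{-1}$. A weighted Cauchy--Schwarz in $q$, together with $\langle\psi,\mathrm{d}\Gamma(\wb)\psi\rangle=\int\wb(q)\|\ab(q)\psi\|^2 dq$, gives $\|H^{\ab\cf}(G)\psi\|^2\le\|G/\sqrt{\wb}\|_{L^2}^2\langle\psi,H_0\psi\rangle$, and similarly for $H^{\ab\af}$. For the creation-type pieces a Wick computation of the anticommutator $\{H^{\ab\cf}(G),H^{\cb\af}(G)\}$---in which two $\|G\|^2$ contributions coming respectively from $[\ab,\cb]=\delta$ and $\{\af,\cf\}=\delta$ cancel one another---yields the identity
\[
\|H^{\ab\cf}(G)\psi\|^2+\|H^{\cb\af}(G)\psi\|^2=\int\Big\|\int G(k,q)\ab(q)\psi\,dq\Big\|^2 dk+\int\Big\|\int G(k,q)\af(k)\psi\,dk\Big\|^2 dq,
\]
whose first integral is bounded by $\|G/\sqrt{\wb}\|_{L^2}^2\langle\psi,H_0\psi\rangle$ and whose second is bounded by $\|G\|_{L^2}^2\|\psi\|^2$ via fermion boundedness, producing $\|H^{\cb\af}(G)\psi\|^2\le\|G/\sqrt{\wb}\|_{L^2}^2\langle\psi,H_0\psi\rangle+\|G\|_{L^2}^2\|\psi\|^2$, and the same bound for $H^{\cb\cf}$. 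Writing $\langle\psi,H_I\psi\rangle=2\mathrm{Re}\langle\psi,H^{\cb\af}(G^{(1)}_\Lambda)\psi\rangle+2\mathrm{Re}\langle\psi,H^{\cb\cf}(G^{(2)}_\Lambda)\psi\rangle$, applying $2\|\psi\|\|B\psi\|\le\|\psi\|^2+\|B\psi\|^2$ to each term and absorbing the $\langle\psi,H_0\psi\rangle$ contribution into the $H_0$ term on the left (with a suitable scaling), one arrives at $\langle\psi,H_\Lambda\psi\rangle\ge -C_\Lambda\|\psi\|^2$ with $C_\Lambda$ collecting the $\|G^\sharp_\Lambda\|^2_{L^2}$ and $\|G^\sharp_\Lambda/\sqrt{\wb}\|^2_{L^2}$ contributions together with an additive constant, precisely as stated.

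The main obstacle is the anticommutator identity used for the creation-type pieces: its derivation requires careful Wick-order bookkeeping of the four $\delta$-contractions between $\af,\cf$ and $\ab,\cb$, with the fermionic sign flip being essential to the cancellation of the two spurious $\|G\|^2$ terms; once the identity is in hand the rest is routine Kato--Rellich and form-inequality work.
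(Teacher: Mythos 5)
Your proposal is correct and follows essentially the same route as the paper: fermion boundedness plus weighted bosonic Cauchy--Schwarz gives a relative bound on $H_I$ in terms of $\|H_0^{1/2}\psi\|$ and $\|\psi\|$ (the paper obtains it directly, you via the cruder $N_b\le m_\mathrm{b}^{-1}H_0$ step and Young's inequality), Kato--Rellich then gives self-adjointness on $\mathscr{D}(H_0)$, and your treatment of the boson-creation pieces is the same normal-ordering computation the paper performs for $\|H^{\cb\af}(G)\psi\|^2$, yielding the identical two-term bound $\|G/\sqrt{\wb}\|^2\langle\psi,H_0\psi\rangle+\|G\|^2\|\psi\|^2$. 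Two minor caveats: your anticommutator identity is correct (up to a complex conjugate on $G$ in the second integral), but what cancels there are the two non-normal-ordered quartic terms --- one carrying the bosonic $+\delta$, the other the fermionic $-\delta$ sign --- rather than scalar $\|G\|^2$ contributions, which never arise for the pair $H^{\ab\cf},H^{\cb\af}$ (a $+\|G\|^2\|\psi\|^2$ does appear for the pair $H^{\ab\af},H^{\cb\cf}$, harmlessly); and the rescaled absorption at the end produces a lower bound with a constant of the same form as, but not literally equal to, the stated $C_\Lambda$ --- a looseness the paper's own ``follows easily'' argument shares.
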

\begin{proof}
Let us consider the following quadratic form associated to $H^{\cf \ab}(G^{(1)}_{\Lambda})$: 
\begin{align*}
 & \Bigl\langle \phi \Big| \int G^{(1)}_{\Lambda}(k,q)\cf(k)\ab(q) dk dq \psi  \Bigr\rangle    =  \int \overline{G^{(1)}_{\Lambda}(k,q)} \bigl\langle \af(k)\phi \big|  \ab(q)  \psi  \bigr\rangle dk dq  \\
 & \qquad =  \int  \Bigl\langle\int \overline{G^{(1)}_{\Lambda}(k,q)} \af(k) dk\phi \Big|  \ab(q)  \psi    \Bigr\rangle dq  
   =  \int  \Bigl\langle \af\Bigl(G^{(1)}_{\Lambda}(.,q) \Bigr)\phi \Big|  \ab(q)  \psi   \Bigr\rangle dq.  \\
  \end{align*}
  Applying Cauchy-Schwarz inequality leads to
  \begin{align*}
& \Bigl| \Bigl\langle \phi \Big| \int G^{(1)}_{\Lambda}(k,q)\cf(k)\ab(q) dk dq \psi  \Bigr\rangle \Bigr|      \leq  \int  \bigl\| \af\bigl( G_{\Lambda}(.,q) \bigr)\phi \bigr\| \bigl\|  \ab(q)  \psi  \bigr\|   dq  
  \leq  \int \bigl\| G^{(1)}_{\Lambda}(.,q)  \bigr\|  \bigl\|  \ab(q)  \psi  \bigr\|   dq \bigl\| \phi \bigr\| \\
&\qquad    \leq \int \Bigl\| \frac{G^{(1)}_{\Lambda}(.,q)}{\wb(q)^{\frac12}}  \Bigr\|  \left\|  \wb(q)^{\frac12} \ab(q)  \psi  \right\|   dq \bigl\| \phi \bigr\| 
       \leq  \biggl(\int  \frac{\bigl| G^{(1)}_{\Lambda}(k,q)\bigr|^2}{\wb(q)} dk dq \biggr)^{\frac12} \bigl\| H^{\frac12}_0  \psi  \bigr\|    \bigl\| \phi \bigr\|.
\end{align*}  

To treat the adjoint operator, we proceed in the following way:
\begin{align*}
 &\biggl\| \int \overline{G^{(1)}_{\Lambda}(k,q)}\af(k)\cb(q) dk dq \psi  \biggr\|^2 
  =  \biggl| \int \Bigl\langle  \af\bigl(G^{(1)}_{\Lambda}(.,q')\bigr)\cb(q')  \psi\Big| \af\bigl(G^{(1)}_{\Lambda}(.,q)\bigr)\cb(q) \psi\Bigr> dq'dq\biggr|\\
 & \qquad \leq \biggl| \int \Bigl<  \af\bigl(G^{(1)}_{\Lambda}(.,q')\bigr)\ab(q)  \psi\Big| \af\bigl(G^{(1)}_{\Lambda}(.,q)\bigr)\ab(q') \psi\Bigr\rangle dq'dq\biggr|
  +\biggl| \int \Bigl<  \af\bigl(G^{(1)}_{\Lambda}(.,q)\bigr) \psi\Big| \af\bigl(G^{(1)}_{\Lambda}(.,q)\bigr) \psi\Bigr> dq\biggr|\\
 & \qquad \leq  \biggl(\int\frac{\bigl|G^{(1)}_{\Lambda}(k,q)\bigr|^2}{\wb(q)}dq dk\biggr) \bigl\| H^{\frac{1}{2}}_0 \psi\bigr\|^2 +  \biggl(\int \bigl|G^{(1)}_{\Lambda}(k,q)\bigr|^2 dq dk\biggr) \bigl\|\psi \bigr\|^2.
  \end{align*} 
  Here we used Cauchy-Schwarz and the fact that smeared fermionic annihilation (and creation) operators are bounded: $\|\af(F)\| \leq \|F\|$, for $F\in L^2(\mathbb R^d)$.
Repeating this procedure for the remaining terms in $H_\mathrm{I}=H_\mathrm{I}(G^{(1)}_\Lambda,G^{(2)}_\Lambda)$ leads to:
\begin{align*}
\bigl\| H_\mathrm{I} \psi \bigr\|^2 & \leq  2\int \frac1{\wb(q)}\Bigl(\bigl|G^{(1)}_{\Lambda}(k,q)\bigr|^2 + \bigl|G^{(2)}_{\Lambda}(k,q)\bigr|^2\Bigr)dq dk\, \bigl\|H^{\frac12}_0 \psi\bigr\|^2 \\
& \quad  + \int \Bigl(\bigl|G^{(1)}_{\Lambda}(k,q)\bigr|^2 + \bigl|G^{(2)}_{\Lambda}(k,q)\bigr|^2\Bigr) dq dk\, \bigl\|\psi \bigr\|^2.\\
\end{align*}
The Kato-Rellich Theorem now implies the stated result. The lower bound on $H_\Lambda$ follows easily from the estimate above.
\end{proof}

In order to renormalise \eqref{RegularisedHamiltonian}, we introduce the following energy counterterm: 
\begin{equation}
\label{CounterTerm}
E^{(2)}_{\Lambda} = - \int\frac{\bigl|G^{(2)}_{\Lambda}(k,q)\bigr|^2}{\wf(k)+\wb(q)} dk dq.
\end{equation}
Physically, $E^{(2)}_{\Lambda}$ can be interpreted as the second order term in perturbation theory of the ground state energy.  The authors are currently working on a method which may be able to include higher order counterterms. In this connection, see \cite{Lampart2020,Lampart2022}, where the next order is taken into account for a different model. The resolvents of the following operators will be studied: 
\[
H_{\Lambda}-E^{(2)}_{\Lambda}.
\]
We already know from Theorem~\ref{ThSA} that $H_\Lambda-E^{(2)}_\Lambda$ is bounded below and consequently, there exists $C>0$ such that for any  $z \in \mathbb{C}$ such that $\Re(z)<-C$ the operator $H_{\Lambda}-E^{(2)}_{\Lambda}-z$ is invertible. Consequently,
\begin{equation*}
R_{\Lambda}(z)  =  \bigl(H_{\Lambda} - E^{(2)}_{\Lambda}-z\bigr)^{-1}
 =  \bigl(H_0+H_\mathrm{I}\bigl(G^{(1)}_{\Lambda},G^{(2)}_\Lambda\bigr) - E^{(2)}_{\Lambda}-z\bigr)^{-1}
 =  R_0(z) \bigl(1 + \bigl(H_\mathrm{I}\bigl(G^{(1)}_{\Lambda},G^{(2)}_\Lambda\bigr) - E^{(2)}_{\Lambda}\bigr)R_0(z)\bigr)^{-1}. 
\end{equation*}
We will investigate the convergence properties of the series:
\begin{equation}
\label{Sum}
R_{\Lambda}(z) =  R_0(z)\sum^{\infty}_{k=0} \Bigl\{ - \bigl(H_\mathrm{I}\bigl(G_{\Lambda}^{(1)},G^{(2)}_\Lambda\bigr) - E^{(2)}_{\Lambda}\bigr)R_0(z)\Bigr\}^k.
\end{equation}
From Lemma~\ref{IE1}, we have:
\[
\Bigl\| \Bigl( H^{\af \ab}\bigl(G^{(2)}_{\Lambda}\bigr) + H^{\cf \ab}\bigl(G^{(1)}_{\Lambda}\bigr)\Bigr) R_0(z)^{\frac12}\Bigr\| \leq \biggl(\int  \frac{\bigl| G^{(1)}_{\Lambda}(k,q)\bigr|^2}{\wb(q)}  dk dq \biggr)^{\frac12} +  \biggl(\int  \frac{\bigl| G^{(2)}_{\Lambda}(k,q)\bigr|^2}{\wb(q)}   dk dq \biggr)^{\frac12}
\]
and 
\[
\Bigl\|R_0(z)^{\frac12} \Bigl( H^{\af \cb}\bigl(G^{(1)}_{\Lambda}\bigr) + H^{\cf \cb}\bigl(G^{(2)}_{\Lambda}\bigr)\Bigr)\Bigr\| \leq  \biggl(\int  \frac{\bigl|G^{(1)}_{\Lambda}(k,q)\bigr|^2}{\wb(q)}   dk dq \biggr)^{\frac12} +  \biggl(\int  \frac{\bigl| G^{(2)}_{\Lambda}(k,q)\bigr|^2}{\wb(q)}  dk dq \biggr)^{\frac12}.
\]
Therefore:
\begin{align*}
\bigl\|R_{\Lambda}(z)\bigr\| &\leq  \bigl\|R_0(z)\bigr\| \sum^{\infty}_{k=0} 5^k C^{k}_{\Lambda} \bigl\| R_0(z) \bigr\|^{\frac{k}{2}},
\end{align*}
where $C_{\Lambda}$ has been defined in Proposition~\ref{ThSA}. Consequently, the series in \eqref{Sum} is convergent in norm, for $z \in \mathbb{C}$ with
\[
 \Re(z) < - 25 C_{\Lambda}^2.
\]


\section{Proof of Theorem \ref{MainTh}}

We advise the reader to read Appendix A to get familiar with the notations used in this section. We refer in particular to the definition of the sets $G_i$ with $i\in\llbracket 1,7\rrbracket$. We start this section proving the following theorem: 
\begin{Th}
Assume that the elements of the Kernels defined in \eqref{KernelApprox} fulfil Hypothesis~\ref{Hypothesis-h},~\ref{Hypothesis-chi}, and~\ref{MainHypothesis}. Assume moreover that $p>\frac{d}{2}-\frac34$. Then there exists a self-adjoint operator $H$, bounded from below, such that  $H_{\Lambda} - E^{(2)}_{\Lambda}$ converges in norm-resolvent sense to $H$. The renormalised Hamiltonian $H$ does not depend on $\chi$.
\end{Th}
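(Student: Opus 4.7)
The plan is to prove that the Neumann series
\[
R_\Lambda(z) = R_0(z) \sum_{k \ge 0} \bigl(-(H_I(G_\Lambda) - E^{(2)}_\Lambda)\,R_0(z)\bigr)^k
\]
converges in operator norm, uniformly in $\Lambda$, for $\mathrm{Re}(z)$ sufficiently negative but independent of $\Lambda$, and that the limit is the resolvent of a self-adjoint operator $H$ that is bounded below. Since $H_I$ decomposes into the four monomials $H^{\ab\af}(G^{(2)}_\Lambda)$, $H^{\cb\cf}(G^{(2)}_\Lambda)$, $H^{\ab\cf}(G^{(1)}_\Lambda)$, $H^{\cb\af}(G^{(1)}_\Lambda)$, the $k$-th term of the series expands into a sum of at most $5^k$ words (the four monomials plus the scalar insertion $E^{(2)}_\Lambda$) separated by $R_0(z)$ factors. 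The naive bounds from Theorem~\ref{ThSA} blow up as $\Lambda \to \infty$, and the point of the counterterm is to cancel the one divergent contraction present in this expansion.

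The main manoeuvre is a Wick reordering: in each word I commute annihilation operators past creation operators using the canonical (anti-)commutation relations. Each commutator produces a scalar contraction of the form $\int \overline{G^{(\sharp)}_\Lambda(k,q)}\,G^{(\sharp')}_\Lambda(k,q)/D(k,q)\,dk\,dq$, where $D$ is a product of energy denominators arising from the intervening $R_0(z)$ factors, while the uncontracted letters survive as a normal-ordered polynomial. The only fully contracted vacuum diagram at second order comes from the word $H^{\ab\af}(G^{(2)}_\Lambda)\,R_0(z)\,H^{\cb\cf}(G^{(2)}_\Lambda)$ applied to the vacuum, and its value is essentially $-\int |G^{(2)}_\Lambda(k,q)|^2/(\wf(k)+\wb(q)-z)\,dk\,dq$. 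The $z=0$ part of this integral equals $-E^{(2)}_\Lambda$, so the explicit counterterm in the series cancels the $\Lambda$-divergence exactly, leaving an analytic-in-$z$ remainder that is bounded uniformly in $\Lambda$.

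To dominate everything else, I would iterate the $H_0^{1/2}$-type bounds already used in the proof of Theorem~\ref{ThSA}, combined with the smoothing property of $g$ from Hypothesis~\ref{MainHypothesis}, which enables the transfer of boson energy weights to fermion energy weights via $\int |g(q\pm k)|^2/(\wb(q)+C)^\alpha\,dq \lesssim \|g\|^2/(\wf(k)+C)^\alpha$ and its symmetric counterpart. After this reduction, every surviving scalar contraction reduces, up to $\Lambda$-independent constants, to a radial integral in $\wb(q)$; the marginal such subintegrals are convergent uniformly in $\Lambda$ exactly when $p > d/2 - 3/4$. Summing the resulting geometric majorant yields the required uniform bound on the series, while termwise dominated convergence as $\Lambda\to\infty$ (using $\chi(|\cdot|/\Lambda)\to 1$) gives a limit $R(z)$. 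Self-adjointness, the lower bound, and the identification $R(z) = (H-z)^{-1}$ then follow from $R_\Lambda(z)^* = R_\Lambda(\bar z)$ and the first resolvent identity by a standard functional-analytic argument.

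The main obstacle is verifying that the single second-order subtraction $E^{(2)}_\Lambda$ genuinely suffices: at every higher order in the Wick expansion, one must show that the divergent part of each contraction either factors through the already-subtracted vacuum diagram (so that the counterterm repeatedly does its job against the Neumann series) or is itself convergent under the threshold $p > d/2 - 3/4$. This is the renormalisability content of the toy model, and its verification requires a careful combinatorial classification of Wick contractions together with the sharp dimensional estimates made available by Hypothesis~\ref{MainHypothesis}.
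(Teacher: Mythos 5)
Your plan reproduces the paper's mechanism in spirit (Neumann series, Wick-type reordering, cancellation of the second-order self-energy by $E^{(2)}_{\Lambda}$, transfer of boson weights to fermion weights via Hypothesis~\ref{MainHypothesis}, threshold $p>\frac{d}{2}-\frac34$), but it stops exactly where the proof has to begin. The statement to be proved is precisely that the single subtraction $E^{(2)}_{\Lambda}$ suffices at \emph{all} orders of the series, and you explicitly defer this (``the main obstacle\dots requires a careful combinatorial classification of Wick contractions''). The paper does not classify all Wick contractions of arbitrarily long words at all; instead it proves a reordering theorem (Theorem~\ref{ReorderingTh}) that regroups each word of the Neumann series into short blocks of at most three kernels drawn from the sets $G_1,\dots,G_7$, with constraints on which block may follow which, pairs the counterterm with $H_{I}^{\ab\af}(G^{(2)}_{\Lambda})R_0(z)H_{I}^{\cb\cf}(G^{(2)}_{\Lambda})$ inside $G_5$, and then distributes fractional powers of the intervening resolvents according to the exponents $\nu_j,\mu_j$ of \eqref{Connumu} with $\nu_j+\mu_j\le\frac34$, so that each block is bounded uniformly in $\Lambda$ by Lemmas~\ref{IE1}--\ref{IE6} and \ref{RegUniformKernel} while at least a power $R_0(z)^{1/4}$ per block survives to make the series a convergent geometric majorant for $\mathrm{Re}(z)$ sufficiently negative, independently of $\Lambda$. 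None of this bookkeeping (which is where the condition $p>\frac{d}{2}-\frac34$ actually arises, via $2\beta+2p>d$, $\beta>d-2p-1$, $\beta>\frac32 d-3p-2$ for the three block types) appears in your proposal, and a full Wick expansion of long words, which you implicitly invoke, would produce factorially many contraction patterns and is harder to control than what the paper does.

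A second, more local inaccuracy: the divergent second-order contraction is not a vacuum expectation value. Wick-ordering $H_{I}^{\ab\af}(F)R_0(z)H_{I}^{\cb\cf}(G)$ produces the operator $\int F(k,q)G(k,q)\,R_0(z-\wb(q)-\wf(k))\,dk\,dq$, a function of $H_0$, and the cancellation against the c-number $E^{(2)}_{\Lambda}=-\int|G^{(2)}_{\Lambda}|^2/(\wf(k)+\wb(q))\,dk\,dq$ is only useful as an operator-norm estimate: the difference equals $\int FG\,\frac{(H_0-z)R_0(z-\wb-\wf)}{\wb+\wf}\,dk\,dq$, which becomes uniformly bounded only after sandwiching with $R_0(z)^{\gamma}$, $R_0(z)^{\delta}$ and invoking the $g$-smoothing of Lemma~\ref{EstimateWtihRespectToOtherPowers}; this is Lemma~\ref{IE2}. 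Your phrase ``the $z=0$ part equals $-E^{(2)}_{\Lambda}$, so the counterterm cancels the divergence exactly, leaving an analytic remainder'' glosses over this operator-valued comparison and over the fact that the remaining decay must then be shared with the neighbouring blocks, which is the whole reason for the constrained concatenation structure. Finally, ``termwise dominated convergence'' must be replaced by a quantitative Cauchy estimate on $R_{\Lambda_1}(z)-R_{\Lambda_2}(z)$ (a telescoping expansion plus the $\epsilon$-statements of Lemma~\ref{RegUniformKernel}), and the identification of the limit as the resolvent of a semibounded self-adjoint $H$ needs the construction theorem (Theorem~\ref{ThAW2017}), including the verification $zR(z)\Psi\to\Psi$, not just the resolvent identity and symmetry. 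As it stands, the proposal is a correct road map but not a proof.
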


\begin{proof}
According to Theorem \ref{ReorderingTh} the series \eqref{Sum} can be written as: 
\begin{equation}
\label{SumToStudy}
R_0(z)+ \sum^{\infty}_{k=1} {\underset{\text{If~} j_{l'} = 1, \text{~then~} j_{l'+1} \neq 2, \text{~and~}4}{ \underset{\text{If~} j_{l'} = 3, \text{~then~} j_{l'+1} \neq 2}{\sum_{\underset{j_1,\dots, j_l \in \llbracket 1,7 \rrbracket, ~~ \sum^l_{l'=1} \nop(j_{l'}) =k}{T^{(j_{l'})}\in G_{j_l'}}}} }} \hspace{-1cm} \hspace{0.5 cm} R_0(z) \hspace{0.5 cm} \prod^l_{k'=1} \Bigl[ T^{  (j_{k'})}_{\Lambda} R_0(z) \Bigr].
\end{equation}
We recall that the function $\nop$ is defined in Appendix A and that $\nop(j)$  counts the number of kernels involved in sequences of operators belonging to $G_j$. Theorem \ref{ReorderingTh} ensures, moreover, that the series \eqref{SumToStudy} is absolutely convergent for $\Re(z)< -25C_{\Lambda}^2$.  We start focusing on a specific summand:
\[
R_0(z) \ \prod^l_{k'=1} \Bigl[ T^{  (j_{k'})}_{\Lambda} R_0(z) \Bigr].
\]
Let
\begin{equation}\label{Connumu}
\begin{aligned}
 &\nu_1 = 0 \text{~and~} \mu_1 = \frac{3}{4}, 
 &\nu_2 = \frac34 \text{~and~} \mu_2 = 0,\\ 
& \nu_3 = 0 \text{~and~} \mu_3 = \frac{1}{2},
& \nu_4 = \frac12 \text{~and~} \mu_4 = 0,\\
& \nu_5 = \frac14 \text{~and~} \mu_5 = \frac 14,
& \nu_6 = 0 \text{~and~} \mu_6 = \frac{1}{4},\\
&\nu_7 = \frac14 \text{~and~} \mu_7 = 0.
\end{aligned}
\end{equation}
Write 
\begin{equation}
\label{ReorderedSumTh3}
R_0(z) \  \prod^l_{k'=1} \Bigl[ T^{  (j_{k'})}_{\Lambda} R_0(z) \Bigr] =  R_0(z)^{1-\nu_{j_1}} \prod^l_{k'=1} \Bigl[\Bigl(  R_0(z)^{\nu_{j_{k'}}}T^{  (j_{k'})}_{\Lambda} R_0(z)^{\mu_{j_{k'}}} \Bigr) R_0(z)^{1-\nu_{j_{k'+1}}-\mu_{j_{k'}}}\Bigr],
\end{equation}
where all the powers of free resolvents are non-negative, due to the constraints on the numbers $j_1,j_2,\dotsc,j_l$, which ensure that $\mu_{j_{k'}}+\nu_{j_{k'+1}} \leq 1$ for $k'=1,2,\dotsc,l$, with the convention that $\nu_{j_{l+1}}=0$. The exponents $\mu_j,\nu_j$, $j=1,2,\dotsc,7$, have been chosen such that the estimates in Appendix~\ref{InitialisationEstimates} apply.

   First,  the lemmas \ref{IE1}, \ref{IE2}, \ref{IE3}, \ref{IE4}, \ref{IE5}, \ref{IE6} and \ref{RegUniformKernel} can be used to prove that the summand \eqref{ReorderedSumTh3} is uniformly bounded with respect to $\Lambda$. Indeed, if one defines 
 \begin{equation}
 M_z = \sup_{\Lambda \geq 1 }\max\Bigl\{K^{(1)}_{z,\frac34}\bigl( G^{(\sharp)}_{\Lambda}\bigr), K^{(2)}_{z,\frac12}\bigl( G^{(\sharp)}_{\Lambda},G^{(\sharp)}_{\Lambda}\bigr),K^{(3)}_{z,\frac14}\bigl( G^{(\sharp)}_{\Lambda},G^{(\sharp)}_{\Lambda},G^{(\sharp)}_{\Lambda}\bigr)\Bigr\}
\end{equation}  
 then there exists $C>0$ such that
 \begin{equation}
 \forall j \in \llbracket 1, 7 \rrbracket: \qquad \bigl\| R_0(z)^{\nu_j} T^{(j)}_{\Lambda} R_0(z)^{\mu_j}\bigr\| \leq C M_z.
\end{equation}  
  Therefore, 
\begin{align*}
\Bigl\| R_0(z) \ \prod^l_{k'=1} \Bigl[ T^{  (j_{k'})}_{\Lambda} R_0(z) \Bigr] \Bigr\| & \leq  \Bigl\| R_0(z)^{1-\nu_{j_1}} \Bigr\| \prod^l_{k'=1} \Bigl\|  R_0(z)^{\nu_{j_{k'}}}T^{  (j_{k'})}_{\Lambda} R_0(z)^{\mu_{j_{k'}}} \Bigr\| \Bigl\| R_0(z)^{1-\nu_{j_{k'+1}}-\mu_{j_{k'}}}\Bigr\|\\
 & \leq   \Bigl\| R_0(z)^{1-\nu_{j_1}} \Bigr\| \prod^l_{k'=1} C M_z \Bigl\| R_0(z)^{1-\nu_{j_{k'+1}}-\mu_{j_{k'}}}\Bigr\|.
\end{align*} 
 
 Let us moreover note that 
 \begin{equation}
 \sum^{l}_{k=1}\left( \nu_{j_k} +\mu_{j_k} \right)\leq \frac34 l
 \end{equation}
and consequently:
\begin{align*}
\Bigl\| R_0(z) \  \prod^l_{k'=1} \Bigl[ T^{  (j_{k'})}_{\Lambda} R_0(z) \Bigr] \Bigr\|  
 & \leq   C^l M^l_z \bigl\| R_0(z) \bigr\|^{1+\frac{l}{4}}.
\end{align*}  
As a consequences, the series \eqref{SumToStudy}, and then the resolvent of the Hamiltonian, can be estimated in the following way:
\begin{align*}
\bigl\| R_{\Lambda}(z)  \bigr\|  
 & \leq  \|R_0(z)\|+\sum^{\infty}_{k=1} {\underset{\text{If~} j_{l'} = 1, \text{~then~} j_{l'+1} \neq 2, \text{~and~}4}{ \underset{\text{If~} j_{l'} = 3, \text{~then~} j_{l'+1} \neq 2}{\sum_{\underset{j_1,\dots, j_l \in \llbracket 1,7 \rrbracket, ~~ \sum^l_{l'=1} \nop(j_{l'}) =k}{T^{(j_{l'})}\in G_{j_l'}}}} }} \hspace{-1cm}  \Bigl\|  R_0(z) \  \prod^l_{k'=1} \Bigl[ T^{  (j_{k'})}_{\Lambda} R_0(z) \Bigr] \Bigr\|\\
 & \leq  \|R_0(z)\|+\sum^{\infty}_{k=1} {\underset{\text{If~} j_{l'} = 1, \text{~then~} j_{l'+1} \neq 2, \text{~and~}4}{ \underset{\text{If~} j_{l'} = 3, \text{~then~} j_{l'+1} \neq 2}{\sum_{\underset{j_1,\dots, j_l \in \llbracket 1,7 \rrbracket, ~~ \sum^l_{l'=1} \nop(j_{l'}) =k}{T^{(j_{l'})}\in G_{j_l'}}}} }} \hspace{-1cm}C^l M^l_z \bigl\| R_0(z) \bigr\|^{1+\frac{l}{4}}  
 \end{align*}
 Moreover, there exists $\lambda_0<-1$, such that for any $z$ fulfilling $\Re(z)< \lambda_0$, we have  $C M_z<1$ and $\Re(z)<-12^4$. Consequently,
 \begin{align*}
 \bigl\| R_{\Lambda}(z)  \bigr\|   & \leq  \|R_0(z)\|+ \sum^{\infty}_{k=1} \bigl\| R_0(z) \bigr\|^{1+\frac{k}{4}} {\underset{\text{If~} j_{l'} = 1, \text{~then~} j_{l'+1} \neq 2, \text{~and~}4}{ \underset{\text{If~} j_{l'} = 3, \text{~then~} j_{l'+1} \neq 2}{\sum_{\underset{j_1,\dots, j_l \in \llbracket 1,7 \rrbracket, ~~ \sum^l_{l'=1} \nop(j_{l'}) =k}{T^{j_{l'}}\in G_{j_l'}}}} }} \hspace{-1cm} 1    \\
 &\leq  \displaystyle   \sum^{\infty}_{k=0} 12^k  \bigl\| R_0(z) \bigr\|^{1+\frac{k}{4}} =  \frac{\| R_0(z) \|}{1-12  \| R_0(z) \|^{\frac14}}.
\end{align*} 
For later use, we remark that for $z$ with $\Re(z)<\lambda_0$, we may similarly derive the estimate
\begin{equation}\label{ReconThm3}
\bigl\|z\bigl( R_{\Lambda}(z)- R_0(z)\bigr)  \bigr\| \leq  |z|^{-\frac14}\bigl(1-12  \| R_0(z) \|^{\frac14}\bigr)^{-1}.
\end{equation}
We conclude that for any $\Lambda$, the sequences $R_{\Lambda}(z)$ can be expanded as an absolutely convergent Neumann series for $z$ with $\Re(z)<\lambda_0$.

We then claim that there exists $\lambda_0'\leq \lambda_0$, such that for $\Re(z)<\lambda_0'$, the limit $\displaystyle \lim_{\Lambda \to \infty}R_{\Lambda}(z)$ exists. Indeed, for two real numbers $\Lambda_1, \Lambda_2$ we have: 
\begin{align*}
R_{\Lambda_1}(z) - R_{\Lambda_2}(z) &= \displaystyle\sum^{\infty}_{k=1}  {\underset{\text{If~} j_{l'} = 1, \text{~then~} j_{l'+1} \neq 2, \text{~and~}4}{ \underset{\text{If~} j_{l'} = 4, \text{~then~} j_{l'+1} \neq 2}{\sum_{\underset{j_1,\dots, j_l \in \llbracket 1,7 \rrbracket, ~~ \sum^l_{l'=1} \nop(j_{l'}) =k}{T^{j_{l'}}\in G_{j_l'}}}} }} \hspace{-1cm}  R_0(z) \sum^{l}_{i=1}\prod^{i-1}_{k'=1}\Bigl(  T^{  (j_{k'})}_{\Lambda_2} R_0(z) \Bigr)  \Bigl(  \Bigl\{T^{  (j_{i})}_{\Lambda_1}-T^{  (j_{i})}_{\Lambda_2}\Bigr\} R_0(z) \Bigr)  \prod^{l}_{k'=i+1}\Bigl(  T^{  (j_{k'})}_{\Lambda_1} R_0(z) \Bigr) .
\end{align*}
We again start focusing on a specific summand, which is here: 
\begin{align}
\label{NewSumToEst}
 R_0(z) \prod^{i-1}_{k'=1}\Bigl(  T^{  (j_{k'})}_{\Lambda_2} R_0(z) \Bigr)  \Bigl(  \Bigl\{T^{  (j_{i})}_{\Lambda_1}-T^{  (j_{i})}_{\Lambda_2}\Bigr\} R_0(z) \Bigr)  \prod^{l}_{k'=i+1}\Bigl(  T^{  (j_{k'})}_{\Lambda_1} R_0(z) \Bigr).
\end{align}
As before, we split the resolvents using the exponents $\nu_j,\mu_j$, $j=1,2,\dotsc,7$, from \eqref{Connumu}.
Then \eqref{NewSumToEst} can be written in the following way:
\begin{align*}
&  R_0(z)^{1-\nu_{j_1}} \prod^{i-1}_{k'=1}\Bigl(  R_0(z)^{\nu_{j_{k'}}}T^{  (j_{k'})}_{\Lambda_2} R_0(z)^{\mu_{j_{k'}}} \Bigr) R_0(z)^{1-\nu_{j_{k'+1}}-\mu_{j_{k'}}}\\
 & \quad  \Bigl(  R_0(z)^{\nu_{j_{i}}}\Bigl\{T^{  (j_{i})}_{\Lambda_1}-T^{  (j_{i})}_{\Lambda_2}\Bigr\} R_0(z)^{\mu_{j_{i}}} \Bigr) R_0(z)^{1-\nu_{j_{i+1}}-\mu_{j_{i}}}\\
 & \quad \prod^{l}_{k'=i+1}\Bigl(  R_0(z)^{\nu_{j_{k'}}}T^{  (j_{k'})}_{\Lambda_1} R_0(z)^{\mu_{j_{k'}}} \Bigr) R_0(z)^{1-\nu_{j_{k'+1}}-\mu_{j_{k'}}} ,
\end{align*}
again with the convention that $\nu_{l+1}=0$.
We have to estimate the operators $T^{  (j_{i})}_{\Lambda_1}-T^{  (j_{i})}_{\Lambda_2}$. If $j_i$ is equal to $1$ or $2$ then $T^{  (j_{i})}_{\Lambda_{\sharp}}$ is of the form $H^{\ab^{\sharp }\af^{\sharp}}(G^{(\sharp)}_{\Lambda_{\sharp}})$. Consequently
\begin{align*}
\Bigl\|  R_0(z)^{\nu_{j_{i}}}\Bigl\{T^{  (j_{i})}_{\Lambda_1}-T^{  (j_{i})}_{\Lambda_2}\Bigr\} R_0(z)^{\mu_{j_{i}}}\Bigr\| & = \Bigl\| R_0(z)^{\nu_{j_{i}}}\Bigl\{H^{\ab^{\sharp }\af^{\sharp}}\bigl(G^{(\sharp)}_{\Lambda_{1}}\bigr)-H^{\ab^{\sharp }\af^{\sharp}}\bigl(G^{(\sharp)}_{\Lambda_{2}}\bigr)\Bigr\} R_0(z)^{\mu_{j_{i}}}\Bigr\|\\
& =  \Bigl\|R_0(z)^{\nu_{j_{i}}} H^{\ab^{\sharp }\af^{\sharp}}\bigl(G^{(\sharp)}_{\Lambda_{1}}-G^{(\sharp)}_{\Lambda_{2}}\bigr) R_0(z)^{\mu_{j_{i}}} \Bigr\|\\
& \leq  K^{(1)}_{z,\nu_{j_{i}}+ \mu_{j_{i}} }\bigl(G^{(\sharp)}_{\Lambda_{1}}-G^{(\sharp)}_{\Lambda_{2}}\bigr),
\end{align*}
where Lemma \ref{IE1} has been used. In the same way, if $j_i$ is equal to $3$ or $4$ then $T^{  (j_{i})}_{\Lambda_{1}} - T^{  (j_{i})}_{\Lambda_{2}}$ can be equal to 
\begin{equation*}
    H^{\ab \af }\bigl(G^{(2)}_{\Lambda_{1}}\bigr) R_0(z) H^{\cb \af }\bigl(G^{(1)}_{\Lambda_{1}}\bigr) - H^{\ab \af }\bigl(G^{(2)}_{\Lambda_{2}}\bigr) R_0(z) H^{\cb \af }\bigl(G^{(1)}_{\Lambda_{2}}\bigr)
\end{equation*}
or 
\begin{equation*}
    H^{\ab \cf }\bigl(G^{(1)}_{\Lambda_{1}}\bigr) R_0(z) H^{\cb \cf }\bigl(G^{(2)}_{\Lambda_{1}}\bigr) - H^{\ab \cf }\bigl(G^{(1)}_{\Lambda_{2}}\bigr) R_0(z) H^{\cb \cf }\bigl(G^{(2)}_{\Lambda_{2}}\bigr).
\end{equation*}

Let us treat the first case
\begin{align*}
&\Bigl\|  R_0(z)^{\nu_{j_{i}}}\Bigl\{T^{  (j_{i})}_{\Lambda_1}-T^{  (j_{i})}_{\Lambda_2}\Bigr\} R_0(z)^{\mu_{j_{i}}}\Bigr\|\\
 & =  \Bigl\| R_0(z)^{\nu_{j_{i}}}\Bigl\{H^{\ab \af}\bigl(G^{(2)}_{\Lambda_{1}}\bigr) R_0(z) H^{\cb \af}\bigl(G^{(1)}_{\Lambda_{1}}\bigr)-  H^{\ab\af}\bigl(G^{(2)}_{\Lambda_{2}}\bigr) R_0(z) H^{\cb\af}\bigl(G^{(1)}_{\Lambda_{2}}\bigr)\Bigr\} R_0(z)^{\mu_{j_{i}}}\Bigr\|\\
 & =  \Bigl\| R_0(z)^{\nu_{j_{i}}}\Bigl\{H^{\ab\af}\bigl(G^{(2)}_{\Lambda_{1}}-G^{(2)}_{\Lambda_{2}}\bigr) R_0(z) H^{\cb\af}\bigl(G^{(1)}_{\Lambda_{1}}) + H^{\ab\af}(G^{(2)}_{\Lambda_{2}}\bigr) R_0(z) H^{\cb\af}\bigl(G^{(1)}_{\Lambda_{1}}-G^{(1)}_{\Lambda_{2}}\bigr)\Bigr\} R_0(z)^{\mu_{j_{i}}}\Bigr\|,
 \end{align*}
 which can be estimated using Lemma \ref{IE3} by
 \begin{equation*}
   K^{(2)}_{z,\nu_{j_{i}}+ \mu_{j_{i}} }\bigl(G^{(2)}_{\Lambda_{1}}-G^{(2)}_{\Lambda_{2}}, G^{(1)}_{\Lambda_{1}} \bigr) + K^{(2)}_{z,\nu_{j_{i}}+ \mu_{j_{i}} }\bigl(G^{(2)}_{\Lambda_{2}}, G^{(1)}_{\Lambda_{1}}-G^{(1)}_{\Lambda_{2}} \bigr).
\end{equation*}
If $j_i$ is equal to $5$, then either 
\begin{equation}
T^{(j_i)}_{\Lambda}  =  H^{\ab \cf}\bigl(G^{(1)}_{\Lambda}\bigr) R_0(z) H^{\cb \af}\bigl(G^{(1)}_{\Lambda}\bigr),
\end{equation}
which, by using Lemma \ref{IE4}, can be treated in the same way as before,  or
\begin{equation}
T^{(j_i)}_{\Lambda} = H^{\ab \af }\bigl(G^{(2)}_{\Lambda}\bigr) R_0(z) H^{ \cb \cf }\bigl(G^{(2)}_{\Lambda}\bigr) + E\bigl(G^{(2)}_{\Lambda},G^{(2)}_{\Lambda}\bigr).
\end{equation}
Recall from \eqref{EFG} the notation $E(F,G)$ and note that $E_\Lambda^{(2)}=E(G_\Lambda^{(2)},G_\Lambda^{(2)})$.
In this case: 
\begin{align*}
&\Bigl\|  R_0(z)^{\nu_{j_{i}}}\Bigl\{T^{  (j_{i})}_{\Lambda_1}-T^{  (j_{i})}_{\Lambda_2}\Bigr\} R_0(z)^{\mu_{j_{i}}}\Bigr\|\\
 & \quad =  \Bigl\| R_0(z)^{\nu_{j_{i}}}\Bigl\{H^{\ab \af }\bigl(G^{(2)}_{\Lambda_1}\bigr) R_0(z) H^{ \cb \cf }\bigl(G^{(2)}_{\Lambda_1}\bigr) + E\bigl(G^{(2)}_{\Lambda_1},G^{(2)}_{\Lambda_1}\bigr) \\
 & \qquad -  H^{\ab \af }\bigl(G^{(2)}_{\Lambda_2}\bigr) R_0(z) H^{ \cb \cf }\bigl(G^{(2)}_{\Lambda_2}\bigr) - E\bigl(G^{(2)}_{\Lambda_2},G^{(2)}_{\Lambda_2}\bigr)\Bigr\} R_0(z)^{\mu_{j_{i}}}\Bigr\|\\
 & \quad  =  \Bigl\| R_0(z)^{\nu_{j_{i}}}\Bigl\{H^{\ab \af }\bigl(G^{(2)}_{\Lambda_1}-G^{(2)}_{\Lambda_2}\bigr) R_0(z) H^{ \cb \cf }\bigl(G^{(2)}_{\Lambda_1}\bigr) + E\bigl(G^{(2)}_{\Lambda_1}-G^{(2)}_{\Lambda_2},G^{(2)}_{\Lambda_1}\bigr) \\
 & \qquad +  H^{\ab \af }\bigl(G^{(2)}_{\Lambda_2}\bigr) R_0(z) H^{ \cb \cf }\bigl(G^{(2)}_{\Lambda_1}-G^{(2)}_{\Lambda_2}\bigr) + E\bigl(G^{(2)}_{\Lambda_2},G^{(2)}_{\Lambda_1}-G^{(2)}_{\Lambda_2}\bigr)\Bigr\} R_0(z)^{\mu_{j_{i}}}\Bigr\|,
 \end{align*}
 which by Lemma \ref{IE2} can be estimates to be less than or equal to
\begin{equation*}
  K^{(2)}_{z,\nu_{j_{i}}+ \mu_{j_{i}} }\bigl(G^{(2)}_{\Lambda_{1}}-G^{(2)}_{\Lambda_{2}}, G^{(2)}_{\Lambda_{1}} \bigr) + K^{(2)}_{z,\nu_{j_{i}}+ \mu_{j_{i}} }\bigl(G^{(2)}_{\Lambda_{2}}, G^{(2)}_{\Lambda_{1}}-G^{(2)}_{\Lambda_{2}} \bigr).
\end{equation*}
Finally, if $j_i$ is equal to $6$ or $7$ then $T^{  (j_{i})}_{\Lambda_{\sharp}}$ can be one of the operators considered in Lemma \ref{IE5} and \ref{IE6}.  We will treat the following case: 
\[
 H^{\ab \af}\bigl(G^{(2)}_{\Lambda_{1}}\bigr) R_0(z)  H^{\ab \cf}\bigl(G^{(1)}_{\Lambda_{1}}\bigr) R_0(z) H^{\cb \cf} \bigl(G^{(2)}_{\Lambda_{1}}\bigr) - H^{\ab \af}\bigl(G^{(2)}_{\Lambda_{2}}\bigr) R_0(z)  H^{\ab \cf}\bigl(G^{(1)}_{\Lambda_{2}}\bigr) R_0(z) H^{\cb \cf} \bigl(G^{(2)}_{\Lambda_{2}}\bigr),
\]
the other ones being similar. Then, by using Lemma \ref{IE5}:
\begin{align*}
&\Bigl\|  R_0(z)^{\nu_{j_{i}}}\Bigl\{T^{  (j_{i})}_{\Lambda_1}-T^{  (j_{i})}_{\Lambda_2}\Bigr\} R_0(z)^{\mu_{j_{i}}}\Bigr\|\\
 & \quad  =  \Bigl\| R_0(z)^{\nu_{j_{i}}}\Bigl\{H^{\ab\af}\bigl(G^{(2)}_{\Lambda_{1}}\bigr) R_0(z) H^{\ab\cf}\bigl(G^{(1)}_{\Lambda_{1}}\bigr)R_0(z)H^{\cb\cf}\bigl(G^{(2)}_{\Lambda_{1}}\bigr)\\
 & \qquad - H^{\ab\af}\bigl(G^{(2)}_{\Lambda_{2}}\bigr) R_0(z) H^{\ab\cf}\bigl(G^{(1)}_{\Lambda_{2}}\bigr) R_0(z) H^{\cb\cf}\bigl(G^{(2)}_{\Lambda_{2}}\bigr)\Bigr\} R_0(z)^{\mu_{j_{i}}}\Bigr\|\\
  & \quad = \Bigl\| R_0(z)^{\nu_{j_{i}}}\Bigl\{H^{\ab\af}\bigl(G^{(2)}_{\Lambda_{1}}-G^{(2)}_{\Lambda_{2}}\bigr) R_0(z) H^{\ab\cf}\bigl(G^{(1)}_{\Lambda_{1}}\bigr)R_0(z)H^{\cb\cf}\bigl(G^{(2)}_{\Lambda_{1}}\bigr)\\
  & \qquad + H^{\ab\af}\bigl(G^{(2)}_{\Lambda_{2}}\bigr) R_0(z) H^{\ab\cf}\bigl(G^{(1)}_{\Lambda_{1}}-G^{(1)}_{\Lambda_{2}}\bigr)R_0(z)H^{\cb\cf}\bigl(G^{(2)}_{\Lambda_{1}}\bigr)\\
 & \qquad +  H^{\ab\af}\bigl(G^{(2)}_{\Lambda_{2}}\bigr) R_0(z) H^{\ab\cf}\bigl(G^{(1)}_{\Lambda_{2}}\bigr) R_0(z) H^{\cb\cf}\bigl(G^{(2)}_{\Lambda_{1}}-G^{(2)}_{\Lambda_{2}}\bigr)\Bigr\} R_0(z)^{\mu_{j_{i}}}\Bigr\|\\
& \quad \leq  \Bigl(64 C_s^{4+2(\nu_{j_{i}}+ \mu_{j_{i}})}+ 3C_s^{\frac{4+2(\nu_{j_{i}}+ \mu_{j_{i}})}{3}} + 2 C_s\Bigr) \\
&\qquad \Bigl(K^{(3)}_{z,\nu_{j_{i}}+ \mu_{j_{i}} }\bigl(G^{(2)}_{\Lambda_{1}}-G^{(2)}_{\Lambda_{2}}, G^{(1)}_{\Lambda_{1}},G^{(2)}_{\Lambda_{1}} \bigr) + K^{(3)}_{z,\nu_{j_{i}}+ \mu_{j_{i}} }\bigl(G^{(2)}_{\Lambda_{2}}, G^{(2)}_{\Lambda_{1}}-G^{(1)}_{\Lambda_{2}},G^{(2)}_{\Lambda_{1}} \bigr)
+  K^{(3)}_{z,\nu_{j_{i}}+ \mu_{j_{i}} }\bigl(G^{(2)}_{\Lambda_{2}}, G^{(1)}_{\Lambda_{2}}, G^{(2)}_{\Lambda_{1}}-G^{(2)}_{\Lambda_{2}} \bigr)\Bigr).
\end{align*}

 Consequently,  there exists $C>0$ such that
 \begin{equation*}
  \Bigl\|  R_0(z)^{\nu_{j_{i}}}\Bigl\{T^{  (j_{i})}_{\Lambda_1}-T^{  (j_{i})}_{\Lambda_2}\Bigr\} R_0(z)^{\mu_{j_{i}}}\Bigr\| \leq C 
  \begin{cases}  
      K^{(1)}_{z,\frac34}\bigl(G^{(\sharp)}_{\Lambda_{1}}-G^{(\sharp)}_{\Lambda_{2}}\bigr), & \textup{if } j_i \in \{ 1,2\}  
      \vspace{3mm}
      \\
     \begin{aligned}
        & K^{(2)}_{z,\frac12}\bigl(G^{(\sharp)}_{\Lambda_{1}}-G^{(\sharp)}_{\Lambda_{2}}, G^{(\sharp)}_{\Lambda_{1}} \bigr)   \\
         & \qquad + K^{(2)}_{z,\frac12}\bigl( G^{(\sharp)}_{\Lambda_{2}},G^{(\sharp)}_{\Lambda_{1}}-G^{(\sharp)}_{\Lambda_{2}} \bigr),
     \end{aligned}    & \textup{if } j_i \in \{ 3,4,5 \}
       \vspace{3mm}
     \\
     \begin{aligned}
         & K^{(3)}_{z,\frac14}\bigl(G^{(\sharp)}_{\Lambda_{1}}-G^{(\sharp)}_{\Lambda_{2}}, G^{(\sharp)}_{\Lambda_{1}}, G^{(\sharp)}_{\Lambda_{1}}\bigr)   \\
         & \qquad + K^{(3)}_{z,\frac14}\bigl( G^{(\sharp)}_{\Lambda_{2}},G^{(\sharp)}_{\Lambda_{1}}-G^{(\sharp)}_{\Lambda_{2}}, G^{(\sharp)}_{\Lambda_{1}}\bigr)\\
         & \qquad + K^{(3)}_{z,\frac14}\bigl( G^{(\sharp)}_{\Lambda_{2}}, G^{(\sharp)}_{\Lambda_{2}}, G^{(\sharp)}_{\Lambda_{1}}-G^{(\sharp)}_{\Lambda_{2}}\bigr),
     \end{aligned} & \textup{if } j_i \in \{ 6,7 \}.
  \end{cases}   
 \end{equation*}
 Let $\epsilon>0$, according to Lemma \ref{RegUniformKernel}, for any $a,b,c \in \{1,2\}$ there exists $R>0$ such that for any $\Lambda_1,\Lambda_2 \geq R$, we have:
 \[
\begin{aligned}
K^{(1)}_{z,\frac34}\bigl(G^{(a)}_{\Lambda_1}-G^{(a)}_{\Lambda_2} \bigr) & \leq  \epsilon, \qquad  &
K^{(2)}_{z,\frac12}\bigl(G^{(a)}_{\Lambda_1}-G^{(a)}_{\Lambda_2},G^{(b)}_{\Lambda_{1}}  \bigr) & \leq  \epsilon,   \\
K^{(2)}_{z,\frac12}\bigl(G^{(a)}_{\Lambda_{2}} ,G^{(b)}_{\Lambda_1}-G^{(b)}_{\Lambda_2} \bigr) & \leq \epsilon,  \qquad &
K^{(3)}_{z,\frac14}\bigl(G^{(a)}_{\Lambda_1}-G^{(a)}_{\Lambda_2},G^{(b)}_{\Lambda_{1}}, G^{(c)}_{\Lambda_{1}}  \bigr) & \leq  \epsilon, \\
K^{(3)}_{z,\frac14}\bigl(G^{(a)}_{\Lambda_{2}} ,G^{(b)}_{\Lambda_1}-G^{(b)}_{\Lambda_2},G^{(c)}_{\Lambda_{1}} \bigr) & \leq  \epsilon,   \qquad &
K^{(3)}_{z,\frac14}\bigl(G^{(a)}_{\Lambda_{2}} , G^{(b)}_{\Lambda_{2}} , G^{(c)}_{\Lambda_1}-G^{(c)}_{\Lambda_2} \bigr) & \leq  \epsilon.
\end{aligned}
\]
Consequently, there exists a constant $C$, such that
\begin{align*}
\biggl\| R_0(z) \prod^{i-1}_{k'=1}\Bigl(  T^{  (j_{k'})}_{\Lambda_2} R_0(z) \Bigr)  \Bigl(  \Bigl\{T^{  (j_{i})}_{\Lambda_1}-T^{  (j_{i})}_{\Lambda_2}\Bigr\} R_0(z) \Bigr)  \prod^{l}_{k'=i+1}\Bigl(  T^{  (j_{k'})}_{\Lambda_1} R_0(z) \Bigr)\biggr\| & \leq \epsilon\, C^l \prod^{i-1}_{k'=1}M_z  \prod^{l}_{k'=i+1}M_z \bigl\| R_0(z) \bigr\|^{1+\frac{l}{4}}.
\end{align*}

Moreover, there exists $\lambda_0'\leq \lambda_0$ such that for any $z\in\mathbb C$ fulfilling $\Re(z)<\lambda_0'$, we have $C M_z < 1$ and $\Re(z)<-12^4$. Consequently, for such $z$ we have
\begin{align*}
\bigl\|R_{\Lambda_1}(z) - R_{\Lambda_2}(z) \bigr\| & \leq   \epsilon \sum^{\infty}_{k=1} {\underset{\text{If~} j_{l'} = 1, \text{~then~} j_{l'+1} \neq 2, \text{~and~}4}{ \underset{\text{If~} j_{l'} = 4, \text{~then~} j_{l'+1} \neq 2}{\sum_{\underset{j_1,\dots, j_l \in \llbracket 1,7 \rrbracket, ~~ \sum^l_{l'=1} \nop(j_{l'}) =k}{T^{j_{l'}}\in G_{j_l'}}}} }}  \sum^{l}_{i=1}C^l\prod^{i-1}_{k'=1}M_z  \prod^{l}_{k'=i+1}M_z \bigl\| R_0(z) \bigr\|^{1+\frac{l}{4}} \\
& \leq   \epsilon\, C  \sum^{\infty}_{k=1} 12^k  k \bigl\| R_0(z) \bigr\|^{1+\frac{k}{4}}  =   \epsilon\, C   \frac{12\bigl\|R_0(z)\bigr\|^{\frac54}}{\bigl[1-12 \bigl\| R_0(z) \bigr\|^{\frac{1}{4}}\bigr]^2},
\end{align*}
which proves that $R_{\Lambda}(z)$ is a Cauchy sequence and hence $R_\Lambda(z)$ converges to a limiting bounded operator $R(z)$, as $\Lambda\to \infty$.

Recalling \eqref{ReconThm3}, we now conclude the construction of the renormalised Hamiltonian $H$ by appealing to Theorem~\ref{ThAW2017}.  To see that $H$ does not depend on the choice of $\chi$, let $\chi,\chi'$ be two cutoffs satisfying Hypthesis~\ref{Hypothesis-chi}. Denote by $H_\chi$ and $H_{\chi'}$ the two renormalised Hamiltonians. For $z\in\mathbb C$ with $\Re(z)<0$ small enough, compute
\[
\bigl(H_\chi-z\bigr)^{-1} - \bigl(H_{\chi'}-z\bigr)^{-1} = \lim_{\Lambda\to \infty}\Bigl\{ \bigl(H_{\chi,\Lambda}+E^{(2)}_{\chi,\Lambda}-z\bigr)^{-1} - \bigl(H_{\chi',\Lambda}+E^{(2)}_{\chi',\Lambda}-z\bigr)^{-1}\Bigr\}
\]
by inserting on the right hand side the resummed powerseries representations \eqref{SumToStudy} of the two UV cutoff resolvents. Performing a telescoping expansion of the terms as when we established that $R_\Lambda(z)=(H_\Lambda+E^{(2)}_\Lambda-z)^{-1}$ has the Cauchy property, one may conclude the result. 
\end{proof}

\begin{proof}[Proof of Theorem \ref{MainTh}]
What have been proved so far, is that for $p>\frac{d}{2}-\frac{3}{4}$, there exists $H$ which is the norm resolvent limit of $H_{\Lambda}-E^{(2)}_{\Lambda}$. It remains to prove that the counterterm $E^{(2)}_{\Lambda}$ can be replaced by $E_{\Lambda} = \inf\sigma(H_{\Lambda})$. 

Since $H_{\Lambda}-E^{(2)}_{\Lambda}$ converges in norm resolvent sense to $H$, we conclude from \cite[Theorem~VIII.20]{RS} that
\[
E_{\Lambda}- E_{\Lambda}^{(2)} \to \inf\sigma(H), \quad \textup{for } \Lambda\to \infty,
\]
which implies that $H_{\Lambda}-E_{\Lambda}$ converges in norm resolvent sense to $H - \inf\sigma(H)$ and completes the proof.
\end{proof}

\appendix

\section{Reordering Theorem}
\label{ReorderingTheorem}
The purpose of this section is to prove that the order of summation of \eqref{Sum} can be changed, which enable us to highlight the cancellations that occur. We start by defining the  following sets of linear operators:

\begin{align*}
G_{1} & =  \Bigl\{- H^{\ab \af}\bigl(G^{(2)}_{\Lambda}\bigr), - H^{\ab \cf}\bigl(G^{(1)}_{\Lambda}\bigr)  \Bigr\}\\
G_2 & =  \Bigl\{ -H^{\cb \cf}\bigl(G^{(2)}_{\Lambda}\bigr), -H^{\cb \af}\bigl(G^{(1)}_{\Lambda}\bigr) \Bigr\}\\
G_3 & =  \Bigl\{ H^{\ab \af}\bigl(G^{(2)}_{\Lambda}\bigr) R_0(z)H^{\cb \af}\bigl(G^{(1)}_{\Lambda}\bigr) \Bigr\}\\
G_4 & =  \Bigl\{  H^{\ab \cf}\bigl(G^{(1)}_{\Lambda}\bigr) R_0(z) H^{\cb \cf }\bigl(G^{(2)}_{\Lambda}\bigr) \Bigr\}\\
G_5 & =  \Bigl\{  \left(H^{\ab \af }\bigl(G^{(2)}_{\Lambda}\bigr) R_0(z) H^{ \cb \cf }\bigl(G^{(2)}_{\Lambda}\bigr) + E^{(2)}_{\Lambda}\right), H^{\ab \cf}\bigl(G^{(1)}_{\Lambda}\bigr) R_0(z) H^{\cb \af}\bigl(G^{(1)}_{\Lambda}\bigr)  \Bigr\}\\
G_6 & =  \Bigl\{-\left(H^{\ab \af}\bigl(G^{(2)}_{\Lambda}\bigr) R_0(z)H^{\ab \cf}\bigl(G^{(1)}_{\Lambda}\bigr) R_0(z) H^{\cb \cf}\bigl(G^{(2)}_{\Lambda}\bigr)\right)\\
& \qquad\ - \left( H^{\ab \cf}\bigl(G^{(2)}_{\Lambda}\bigr) R_0(z)H^{\ab \cf}\bigl(G^{(1)}_{\Lambda}\bigr) R_0(z) H^{\cb \cf}\bigl(G^{(1)}_{\Lambda}\bigr)  \right)  \Bigr\}\\
G_7 & =  \Bigl\{-\left(H^{\ab \af}\bigl(G^{(2)}_{\Lambda}\bigr) R_0(z)H^{\cb \af}\bigl(G^{(1)}_{\Lambda}\bigr) R_0(z) H^{\cb \cf}\bigl(G^{(2)}_{\Lambda}\bigr)\right),\\
&\qquad  - \left(H^{\ab \af}\bigl(G^{(2)}_{\Lambda}\bigr) R_0(z) H^{\cb \af}\bigl(G^{(1)}_{\Lambda}\bigr) R_0(z) H^{\cb \af}\bigl(G^{(1)}_{\Lambda}\bigr) \right)\Bigr\}
\end{align*}
Moreover, if an operator $T$ belongs to some $G_i$ with $i\in \llbracket 1,7\rrbracket$ we will denote it $T^{(i)}$. Finally, we introduce the function $\nop$  defined on $\llbracket 1,7 \rrbracket$ in the following way: 
\begin{align*}
\nop(1)=\nop(2)  &= 1,\\  
\nop(3)=\nop(4)=\nop(5) & =  2, \\
\qquad \nop(6)= \nop(7) &=  3.
\end{align*}
The idea is that the function $n_{\mathrm{Op}}$ counts the number of kernels involved in the sequences of operator in the sets $G_i$.

Let us give some informal motivations to the $G$ sets. The main idea of the proof of Theorem \ref{MainTh} is to control uniformly with respect to the ultraviolet cut off the sequences of operators involved in the Neumann series. At each order it is convenient to split these sequences into elements of finite length which can be controlled more easily. This is the purpose of  Theorem~\ref{ReorderingTh}. The question that one has now to answer is what sets of finite sequences of operators have to be chosen. 

From Lemma \ref{IE1} and \ref{RegUniformKernel} it can be seen that $H^{\ab\af}(G^{(2)}_{\Lambda}) R_0(z)^{\beta}$ and $R_0(z)^{\beta} H^{\cb \cf}(G^{(2)}_{\Lambda}) $ are uniformly bounded in norm, with respect to the ultraviolet cut off  $\Lambda$, if 
\begin{equation}\label{BasicConstraint}
2\beta + 2p > d.
\end{equation}
It implies that it may not be possible to control uniformly the sequence $H^{\ab\af}(G^{(2)}_{\Lambda})  R_0(z)H^{\cb \cf}(G^{(2)}_{\Lambda}) $ by using Lemma \ref{IE1}. It actually seems impossible as soon as $p\leq \frac{d}{2} - \frac{1}{2}$. Lemma \ref{IE2} enable us to overcome this difficulty. However, we see that a counterterm has to be included. What is the best regularity condition one can hope to obtain without introducing more counterterms? Let us consider, for $\alpha,\beta\in [0,1]$,  the following expression: 
\[
\Bigl\{H^{\ab\af}(G^{(2)}_{\Lambda})  R_0(z)^{1-\alpha}\Bigr\}\Bigl\{R_0(z)^{\alpha} \left(H^{\ab\af}(G^{(2)}_{\Lambda})  R_0(z)H^{\cb \cf}(G^{(2)}_{\Lambda})  + E^{(2)}_{\Lambda} \right)R_0(z)^{\beta} \Bigr\} \Bigl\{ R_0(z)^{1-\beta}H^{\cb \cf}(G^{(2)}_{\Lambda})\Bigr\}. 
\]
Recall the following facts from Lemma~\ref{IE1} and \ref{IE2} (recall the constraint \eqref{BasicConstraint}), as well as Lemma~\ref{RegUniformKernel}
\begin{itemize}
    \item $H^{\ab\af}(G^{(2)}_{\Lambda})  R_0(z)^{1-\alpha}$ can be estimated uniformly in $\Lambda$ if $2(1-\alpha) + 2p > d$,
    \item  $R_0(z)^{1-\beta}H^{\cb \cf}(G^{(2)}_{\Lambda})$ can be estimated uniformly in $\Lambda$ if $2(1-\beta) + 2p > d$,
    \item $R_0(z)^{\alpha} \left(H^{\ab\af}(G^{(2)}_{\Lambda})  R_0(z)H^{\cb \cf}(G^{(2)}_{\Lambda})  + E^{(2)}_{\Lambda} \right)R_0(z)^{\beta}$ can be estimated uniformly in $\Lambda$ if $4p + 2 +2(\alpha+\beta) > 2d$.
\end{itemize}
One may choose $\alpha,\beta\in [0,1]$, such that the above constraints are satisfied, provided $4p+3  > 2d$,
leading to the requirement
\[
p>\frac{d}{2} - \frac34.
\]
In fact, $\alpha=\beta=\frac14$ will work under the above constraint.

 However, other incompatible concatenations of sequences may appear, preventing us to reach this condition. One then has to check all the possible concatenations, which leads to the construction of the $G$ sets. From now on, let us assume that $p>\frac{d}{2} - \frac34$.  The sets $G_1$ and $G_2$  are natural to consider. Moreover, from  Lemma \ref{IE1} and \ref{RegUniformKernel} we know that 
\begin{itemize}
    \item For any $T_1\in G_1$, $T_1 R_0(z)^{\frac34}$ is uniformly bounded with respect to the cut off. 
    \item For any $T_2\in G_2$, $R_0(z)^{\frac34} T_2 $ is uniformly bounded with respect to the cut off.
\end{itemize}
$T_1$ and $T_2$ are then incompatible between them as these estimates cannot be used to control $T_1R_0(z)T_2$. By exploring all the possible combinations, we isolate the elements of $G_3$, $G_4$ and $G_5$. 
\begin{itemize}
\item According to Lemma \ref{IE3} and Lemma \ref{RegUniformKernel}, for any $T_3\in G_3$,  $T_3 R_0(z)^{\frac12}$ is uniformly bounded with respect to the cut off.
\item According to Lemma \ref{IE3} and Lemma \ref{RegUniformKernel}, for any $T_4\in G_4$,  $ R_0(z)^{\frac12} T_4$ is uniformly bounded with respect to the cut off.
\item According to Lemma \ref{IE2}, \ref{IE4} and \ref{RegUniformKernel}, for any $T_5\in G_5$,  $ R_0(z)^{\gamma} T_4R_0(z)^{\delta}$ with $\delta + \gamma = \frac12$ is uniformly bounded with respect to the cut off.
\end{itemize}
Consequently, the following two concatenations cannot be uniformly controlled with the above cited Lemma's:
\[
   T_1 R_0(z) T_4\qquad \textup{and} \qquad T_3 R_o(z) T_2,
\]
i.e., the elements of $G_6$ and $G_7$. From Lemma~\ref{IE5}, \ref{IE6} and~\ref{RegUniformKernel}, we can see that
\begin{itemize}
    \item For any $T_6\in G_6$, $T_6 R_0(z)^{\frac14}$ is uniformly bounded with respect to the cut off. 
    \item For any $T_7\in G_7$, $R_0(z)^{\frac14} T_7 $ is uniformly bounded with respect to the cut off.
\end{itemize}
These last sequences can be concatenated with any other sequences.

\begin{Thannex}
\label{ReorderingTh}
Assume that $\Re(z)<-25 C^2_{\Lambda}$,  then the series \eqref{Sum} is convergent and can be reordered in the following way:
\begin{equation}
\label{ReorderedSUM}
 R_0(z)+ \sum^{\infty}_{k=1} {\underset{\text{If~} j_{l'} = 1, \text{~then~} j_{l'+1} \neq 2, \text{~and~}4}{ \underset{\text{If~} j_{l'} = 3, \text{~then~} j_{l'+1} \neq 2}{\sum_{\underset{j_1,\dots, j_l \in \llbracket 1,7 \rrbracket, ~~ \sum^l_{l'=1} \nop(j_{l'}) =k}{T^{(j_{l'})}\in G_{j_l'}}}} }} \hspace{-1cm} \hspace{0.5 cm} R_0(z) \hspace{0.5 cm} \prod^l_{k'=1} \left[ T^{  (j_{k'})}_{\Lambda} R_0(z) \right],
\end{equation}
\end{Thannex}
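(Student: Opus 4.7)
The plan is to rewrite the Neumann series \eqref{Sum} as a sum over words in a five-letter alphabet formed from the four elementary interaction operators together with the counterterm, and then regroup the resulting words into maximal blocks from $G_1,\dots,G_7$. Absolute convergence of \eqref{Sum} for $\mathrm{Re}(z)<-25C_\Lambda^2$, established by the estimates immediately preceding the theorem, justifies any rearrangement of the series.

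First I would decompose each factor in \eqref{Sum} by writing
\[
-(H_I(G_\Lambda)-E^{(2)}_\Lambda)=A_1+A_2+B_1+B_2+C,
\]
with $A_1=-H_I^{\ab\af}(G^{(2)}_\Lambda)$, $A_2=-H_I^{\ab\cf}(G^{(1)}_\Lambda)$, $B_1=-H_I^{\cb\cf}(G^{(2)}_\Lambda)$, $B_2=-H_I^{\cb\af}(G^{(1)}_\Lambda)$, and $C=E^{(2)}_\Lambda$. Expanding the $k$-th power yields a sum of terms of the shape $R_0(z)\,X_1 R_0(z)\,X_2\cdots X_k R_0(z)$ indexed by words $(X_1,\dots,X_k)$ in this alphabet; the $A_i$'s each annihilate a boson, the $B_j$'s each create a boson, and $C$ is a scalar.

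Next I would introduce a deterministic greedy left-to-right parsing: at each position, identify the longest admissible block in $G_1\cup\dots\cup G_7$ that can start there. An isolated $A$-letter or $B$-letter that cannot be extended becomes a $G_1$ or $G_2$ block; a two-letter pattern $A_i R_0(z) B_j$ becomes a $G_3$, $G_4$ or $G_5$ block according to the subscripts; and whenever such a pair further extends to a matching three-letter pattern it is absorbed into a $G_6$ or $G_7$ block instead. Each isolated $C$ is also absorbed into a $G_5$ block: the element $(H_I^{\ab\af}(G^{(2)}_\Lambda) R_0(z) H_I^{\cb\cf}(G^{(2)}_\Lambda)+E^{(2)}_\Lambda)$ of $G_5$ is precisely the sum of the vacuum-bubble pair $A_1 R_0(z) B_1$ and the isolated scalar $C$, so the parsing naturally collects these two contributions into a single summand. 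The forbidden-successor rules ``$j_{l'}=1\Rightarrow j_{l'+1}\neq 2,4$'' and ``$j_{l'}=3\Rightarrow j_{l'+1}\neq 2$'' encode exactly those concatenations of two blocks whose combined letters form a larger block in $G_3$--$G_7$; forbidding them enforces maximality, hence uniqueness, of the parsing.

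Then \eqref{ReorderedSUM} follows by summing over all parsed block sequences: each constrained sequence $(T^{(j_1)},\dots,T^{(j_l)})$ with $\sum_{l'=1}^l \nop(j_{l'})=k$ corresponds under the inverse parsing to the collection of words that produce it, the function $\nop$ being designed so that the outer index $k$ matches the number of elementary letters absorbed (a lone $C$ is counted as $2$ because it is packaged inside a two-slot $G_5$ block). The hard part will be the combinatorial bookkeeping around the counterterm: one must verify that each isolated $C$ and each vacuum pair $A_1 R_0(z) B_1$ are indeed combined inside the same $G_5$ block without double counting, and check case by case that the greedy parsing interacts correctly with the triple blocks $G_6, G_7$ whenever the pair-plus-letter extension is triggered. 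Once this case analysis is carried out and the signs from the alternating $(-1)^k$ factors are tracked, the reordering is an identity of absolutely convergent series.
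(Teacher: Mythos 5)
Your proposal is essentially the paper's own argument: the paper likewise expands the resolvent series into words in the five letters $-H_I^{\ab\af}(G^{(2)}_\Lambda)$, $-H_I^{\ab\cf}(G^{(1)}_\Lambda)$, $-H_I^{\cb\cf}(G^{(2)}_\Lambda)$, $-H_I^{\cb\af}(G^{(1)}_\Lambda)$, $E^{(2)}_\Lambda$, merges the adjacent pair $H_I^{\ab\af}(G^{(2)}_\Lambda)R_0(z)H_I^{\cb\cf}(G^{(2)}_\Lambda)$ with the counterterm into the $G_5$ block, and relies on the successor constraints to make the block factorization unique, with absolute convergence for $Re(z)<-25C_\Lambda^2$ justifying the reordering (the weight $\nop(5)=2$ playing exactly the role you describe for a lone counterterm). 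The only difference is one of bookkeeping: the paper performs the order-crossing merge of $E^{(2)}_\Lambda$ with the vacuum-bubble pair via explicit partial sums and a tail estimate rather than by invoking unconditional summability of the word-indexed family, and it leaves the final block-parsing check as a relabelling, so your route is the same in substance.
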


\begin{proof}
The main line of the proof is similar to the analogous results proved in \cite{AW2017}. Let us introduce the following notations:
\[
\begin{aligned}
\widetilde{T}^{(1)} & = - H^{\ab \af}\bigl(G^{(2)}_{\Lambda}\bigr), & \qquad
\widetilde{T}^{(2)} & = - H^{\cb \cf}\bigl(G^{(2)}_{\Lambda}\bigr),\\
\widetilde{T}^{(3)} & =  - H^{\ab \cf}\bigl(G^{(1)}_{\Lambda}\bigr), & \qquad
\widetilde{T}^{(4)} & =  -H^{\cb \af}\bigl(G^{(1)}_{\Lambda}\bigr),\\
\widetilde{T}^{(5)} & =  E^{(2)}_{\Lambda}, & \qquad 
\widetilde{T}^{(6)} & =  H^{\ab \af}\bigl(G^{(2)}_{\Lambda}\bigr) R_0(z) H^{\cb \cf}\bigl(G^{(2)}_{\Lambda}\bigr),\\
\widetilde{T}^{(7)} & =  H^{\ab \af}\bigl(G^{(2)}_{\Lambda}\bigr) R_0(z) H^{\cb \cf}\bigl(G^{(2)}_{\Lambda}\bigr) + E^{(2)}_{\Lambda}. & &
\end{aligned}
\]
The series \eqref{Sum} can be written in the following way: 
\begin{equation}
\label{Step1}
R_0(z)+\sum^{\infty}_{k=1} \sum_{(j_1, \dots, j_k) \in \llbracket 1,\dots,5\rrbracket^{k}} R_0(z) \prod^k_{i=1}\Bigl[ \widetilde{T}^{(j_i)}R_0(z)\Bigr].
\end{equation}
We may introduce an analogous function of $\nop$ associated to these new operators:

\begin{align*}
\tnop(\llbracket 1, 4 \rrbracket) & =  1 \\
\tnop(\llbracket 5, 7 \rrbracket) & =  2.
\end{align*}
It is sufficient to show that the series \eqref{Step1} can be rewritten as: 
\begin{equation}
\label{Goal}
R_0(z)+\sum^{\infty}_{k=1} \sum_{\underset{\sum^l_{j'=1} \tnop(j_{l'}) = k}{\underset{\text{If~} j_{l'} = 1 \text{~then~} j_{l'+1}\neq 2}{(j_1, \dots, j_l) \in (\llbracket 1,\dots,4\rrbracket\cup\{7\})^{l}}}} R_0(z) \prod^l_{i=1}\Bigl[ \widetilde{T}^{(j_i)}R_0(z)\Bigr].
\end{equation} 
Indeed, one can easily check that \eqref{Goal} is equal to \eqref{ReorderedSUM} after relabelling all the indices. First, one can check that \eqref{Step1} can be relabelled in the following way:
\begin{equation}
\label{Step2}
R_0(z)+\sum^{\infty}_{k=1} \sum_{\underset{\text{If~} j_{l'} = 1 \text{~then~} j_{l'+1}\neq 2}{(j_1, \dots, j_l) \in \llbracket 1,\dots,6\rrbracket^{l}}} R_0(z) \prod^l_{i=1}\Bigl[ \widetilde{T}^{(j_i)}R_0(z)\Bigr].
\end{equation}  
To build terms of the form $\widetilde{T}^{(7)}$ it is important to collect the terms $\widetilde{T}^{(5)}$ and $\widetilde{T}^{(6)}$ together. We may proceed in the following way: 
\begin{align*}
&\sum^{n}_{k=1} \sum_{{\underset{\text{If~} j_{l'} = 1 \text{~then~} j_{l'+1}\neq 2}{(j_1, \dots, j_l) \in \llbracket 1,\dots,6\rrbracket^{l}}}} R_0(z) \prod^l_{i=1}\Bigl[ \widetilde{T}^{(j_i)}R_0(z)\Bigr]\\
& \quad =  \sum^{n}_{k=1}  \sum_{\underset{m \leq \left \lfloor \frac{k}{2} \right \rfloor}{(s_1, \dots, s_m) \in \{5,6 \}^{m}}} \sum_{\underset{\underset{\forall l' \notin \{l_1, \dots, l_{m}\}, \text{If~} j_{l'} = 1 \text{~then~} j_{l'+1}\neq 2}{ (j_1, \dots, j_{k-2m}) \in \llbracket 1,\dots, 4 \rrbracket^{k-2m}}}{{0=l_0\leq l_1 \leq \dots \leq l_{m} \leq k-2m}}} R_0(z) \prod^{m}_{i=1} \Biggl[\ \prod^{l_i}_{t=l_{i-1}+1}\Bigl[ \widetilde{T}^{(j_t)}R_0(z)\Bigr]\widetilde{T}^{(s_i)} R_0(z)\Biggr]  \prod^{k-2m}_{i=l_m+1}\Bigl[ \widetilde{T}^{(j_i)}R_0(z)\Bigr]  \\
& \qquad  + \sum^{n}_{k=1} \sum_{\underset{\sum^l_{j'=1} \tnop(j_{l'}) > n }{\underset{\text{If~} j_{l'} = 1 \text{~then~} j_{l'+1}\neq 2}{(j_1, \dots, j_l) \in \llbracket 1,\dots,6\rrbracket^{l}}}} R_0(z) \prod^l_{i=1}\Bigl[ \widetilde{T}^{(j_i)}R_0(z)\Bigr].
\end{align*}  

Let us note that: 
\begin{align*}
&\sum^{n}_{k=1}  \sum_{\underset{m \leq \left \lfloor \frac{k}{2} \right \rfloor}{(s_1, \dots, s_m) \in \{5,6 \}^{m}}} \sum_{\underset{\underset{ \forall l' \notin \{l_1, \dots, l_{m}\}, \text{If~} j_{l'} = 1 \text{~then~} j_{l'+1}\neq 2}{  (j_1, \dots, j_{k-2m}) \in \llbracket 1,\dots, 4 \rrbracket^{k-2m}}}{{0=l_0\leq l_1 \leq \dots \leq l_{m} \leq k-2m}}} R_0(z)  \prod^{m}_{i=1} \Biggl[\ \prod^{l_i}_{t=l_{i-1}+1}\Bigl[ \widetilde{T}^{(j_t)}R_0(z)\Bigr]\widetilde{T}^{(s_i)} R_0(z)\Biggr] \prod^{k-2m}_{i=l_m+1}\Bigl[ \widetilde{T}^{(j_i)}R_0(z)\Bigr]\\
& \quad =  \sum^{n}_{k=1} \sum_{\underset{\sum^l_{j'=1} \tnop(j_{l'}) = k}{\underset{\text{If~} j_{l'} = 1 \text{~then~} j_{l'+1}\neq 2}{(j_1, \dots, j_l) \in \left(\llbracket 1,\dots,4\rrbracket\cup\{7\}\right)^{l}}}} R_0(z) \prod^l_{i=1}\Bigl[ \widetilde{T}^{(j_i)}R_0(z)\Bigr].
\end{align*}
Consequently, all the terms of a finite sequences of the series \eqref{Goal} appears once and only once in \eqref{Step1}. As \eqref{Step1} is absolutely convergent as long as $\Re(z)<-25C^2_{\Lambda}$, we can therefore conclude that \eqref{Step1} is also absolutely convergent. To see that these two series converge to the same limit, one can see that: 
\begin{align*}
&\Biggl\|\sum^{n}_{k=1} \sum_{\underset{\text{If~} j_{l'} = 1 \text{~then~} j_{l'+1}\neq 2}{(j_1, \dots, j_l) \in \llbracket 1,\dots,6\rrbracket^{l}}} R_0(z) \prod^l_{i=1}\Bigl[ \widetilde{T}^{(j_i)}R_0(z)\Bigr] - \sum^{n}_{k=1} \sum_{\underset{\sum^l_{j'=1} n'_{\mathrm{op}}(j_{l'}) = k}{\underset{\text{If~} j_{l'} = 1 \text{~then~} j_{l'+1}\neq 2}{(j_1, \dots, j_l) \in \left(\llbracket 1,\dots,4\rrbracket\cup\{7\}\right)^{l}}}} R_0(z) \prod^l_{i=1}\Bigl[ \widetilde{T}^{(j_i)}R_0(z)\Bigr] \Biggr\|\\
&\quad  =  \Biggl\|\sum^{n}_{k=1} \sum_{\underset{\sum^l_{j'=1} \tnop(j_{l'}) > n }{\underset{\text{If~} j_{l'} = 1 \text{~then~} j_{l'+1}\neq 2}{(j_1, \dots, j_l) \in \llbracket 1,\dots,6\rrbracket^{l}}}} R_0(z) \prod^l_{i=1}\Bigl[ \widetilde{T}^{(j_i)}R_0(z)\Bigr]\Biggr\| \\
& \quad \leq  \sum^{\infty}_{k=\left\lfloor \frac{n}{2}\right\rfloor} \sum_{\underset{\text{If~} j_{l'} = 1 \text{~then~} j_{l'+1}\neq 2}{(j_1, \dots, j_l) \in \llbracket 1,\dots,6\rrbracket^{l}}} \bigl\|R_0(z)^{\frac12}\bigr\|^2 \prod^l_{i=1}\Bigl\| R_0(z)^{\frac{1}{2}} \widetilde{T}^{(j_i)}R_0(z)^{\frac{1}{2}}\Bigr\|.
\end{align*}
We can finally use the absolute convergence of \eqref{Step1} to conclude.
\end{proof}

\section{Some Technical Estimates}
The goal of this section is to provide some technical tools required to prove the operator estimates in the next section.
 
\begin{lem}
\label{RegFermionProp}
Let $z\in\mathbb C$ with $\Re(z)<-1$, $ C' \geq C \geq 0$ and $F\in L^2(\mathbb R^d)$. Then, for any  $0\leq \alpha \leq 1$ we have $b(F)\colon \mathscr D(H_0^\alpha)\to \mathscr D(H_0^\alpha)$ and 
\begin{align*}
\left\| (H_0-z+C)^{\alpha} \af(F) R_0(z-C')^{\alpha} \right\| & \leq  2 \| F \|\\
\left\| R_0(z-C')^{\alpha} \cf(F) (H_0-z+C)^{\alpha} \right\| & \leq  2 \| F \|
\end{align*}
\end{lem}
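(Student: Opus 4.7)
The plan is to prove both estimates at the endpoints $\alpha\in\{0,1\}$ and then interpolate for $\alpha\in(0,1)$; the second estimate follows from the first by taking adjoints and replacing $z$ with $\bar z$, whose real part is still $<-1$.

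At $\alpha=0$ the claim reduces to $\|\af(F)\|\leq \|F\|$, which is standard for fermions: the CAR yields $\cf(F)\af(F)\leq \|F\|^2$, hence $\|\af(F)\psi\|^2=\langle\psi,\cf(F)\af(F)\psi\rangle\leq \|F\|^2\|\psi\|^2$. At $\alpha=1$, the pull-through identity $[H_0,\af(F)]=-\af(\wf F)$ gives
\[
(H_0-z+C)\af(F)R_0(z-C') = \af(F)(H_0-z+C)R_0(z-C') - \af(\wf F)R_0(z-C').
\]
The first summand has norm at most $\|F\|$ because spectral calculus gives $\|(H_0-z+C)R_0(z-C')\|\leq 1$ (on the spectrum $\lambda\geq 0$ of $H_0$, $|\lambda-z+C|\leq |\lambda-z+C'|$ follows from $C\leq C'$ and $\operatorname{Re}(z)<0$). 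For the second summand, Cauchy-Schwarz in the $k$-integral followed by the antisymmetry of the $n$-particle wavefunctions produces
\[
\|\af(\wf F)\phi\|^2\leq \|F\|^2 \sum_n \int \sum_{i=1}^{n}\wf(k_i)^2|\phi_n|^2\,dk^n \leq \|F\|^2\|\mathrm d\Gamma(\wf)\phi\|^2,
\]
since $\sum_i\wf(k_i)^2\leq\bigl(\sum_i\wf(k_i)\bigr)^2$. Applied to $\phi=R_0(z-C')\psi$, together with $\|\mathrm d\Gamma(\wf) R_0(z-C')\|\leq 1$ (from the commuting positive operators $0\leq\mathrm d\Gamma(\wf)\leq H_0$ and the spectral bound $\|H_0 R_0(z-C')\|\leq 1$), I get $\|\af(\wf F)R_0(z-C')\|\leq\|F\|$. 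Combining yields the endpoint bound $2\|F\|$.

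For $\alpha\in(0,1)$ I would interpolate the operator-valued analytic function
\[
\zeta\mapsto F(\zeta) := (H_0-z+C)^\zeta\af(F)R_0(z-C')^\zeta
\]
on the strip $\{0\leq \operatorname{Re}\zeta\leq 1\}$. The endpoint bounds $\|F(0)\|\leq\|F\|$ and $\|F(1)\|\leq 2\|F\|$, together with the admissibility of the exponential growth along the imaginary direction (of order $e^{2|\operatorname{Im}\zeta|\arctan(|\operatorname{Im}(z)|/|\operatorname{Re}(z)|)}$, with exponent strictly less than $\pi$), allow Stein's complex interpolation theorem — after a Gaussian regularization such as $e^{\varepsilon\zeta(\zeta-1)}$, with $\varepsilon\to 0$ at the end — to produce $\|F(\alpha)\|\leq 2\|F\|$ for all $\alpha\in[0,1]$. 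The main obstacle is this interpolation step for complex $z$: since $H_0-z+C$ is not self-adjoint, the powers $(H_0-z+C)^{\pm iy}$ are non-unitary and one must verify that the boundary growth stays below the Phragmén-Lindelöf threshold, which it does precisely because $|\operatorname{Im}(z)/\operatorname{Re}(z)|$ is finite; for real $z$ the argument is immediate since these powers are unitary.
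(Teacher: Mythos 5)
Your proposal follows essentially the same route as the paper: the $\alpha=0$ endpoint from $\|\af(F)\|\leq\|F\|$, the $\alpha=1$ endpoint via the commutator $[H_0,\af(F)]=\mp\af(\wf F)$ together with $\|(H_0-z+C)R_0(z-C')\|\leq 1$ (using $C\leq C'$), and then complex interpolation on the strip; the paper differs only cosmetically, bounding the commutator term by pulling $\af(k)$ through the resolvent and applying Cauchy--Schwarz against $\bigl\|\mathrm{d}\Gamma(\wf)^{\frac12}R_0(z-C')^{\frac12}\Phi\bigr\|$ instead of your direct $n$-particle estimate, and running Hadamard's three-line theorem on the scalar function $\zeta\mapsto\bigl\langle (H_0-\bar z+C)^{\bar\zeta}\Psi,\af(F)(H_0-z+C')^{-\zeta}\Phi\bigr\rangle$. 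Your concern about the non-unitary imaginary powers for $\mathrm{Im}\,z\neq 0$ is well placed (the paper simply asserts uniform boundary bounds, which is exact only for real $z$); note, however, that with boundary bounds growing like $e^{a|\lambda|}$ the Stein/Phragm\'en--Lindel\"of conclusion acquires an extra factor (e.g.\ $e^{a\sqrt{\alpha(1-\alpha)}}$ after optimising your Gaussian regularisation), so this route really gives a universal constant times $\|F\|$ rather than literally $2\|F\|$ --- a blemish shared by the paper's own argument and harmless for every subsequent application, where only $\Lambda$-independent bounds are used.
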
 
\begin{proof}
First, recall that $\af(F)$ is a bounded operator with $\|\af(F)\|\leq \|F\|$.
Secondly, compute for $F\in L^2(\mathbb R^d)$ with compact support and $\Phi\in \mathscr H$ with only finitely many particles
\begin{align*}
 (H_0-z+C) \af(F) R_0(z-C')\Phi & =  \af(F) (H_0-z+C) R_0(z-C')\Phi + [H_0,\af(F)] R_0(z-C')\Phi\\
 &= \af(F) (H_0-z+C) R_0(z-C')\Phi  + \af(\wf F) R_0(z-C')\Phi   
\end{align*}
and estimate, using the pull-through formula and Cauchy-Schwarz
\begin{align*}
\|\af(\wf F) R_0(z-C')\Phi\| & \leq \int \wf(k) |F(k)|\| \af(k)R_0(z-C')\Phi\| dk\\
 & \leq  \int \wf(k) |F(k)|\|(H_0-z + \wf(k)+C')^{-\frac12} \af(k)R_0(z-C')^{\frac12}\Phi\| dk\\
 & \leq \int |F(k)| \| \wf(k)^\frac12 \af(k)R_0(z-C')^{\frac12}\Phi\| dk\\
 &\leq \|F\| \|\Phi\|.
\end{align*}
This implies, by a density argument, that for all $F\in L^2(\mathbb R^d)$, we have $b(F)\colon \mathscr D(H_0)\to \mathscr D(H_0)$ and
\[
\|(H_0-z+C)\af(F)R_0(z-C')\|\leq 2\|F\|.
\]

Let $\Psi\in \mathscr D(H_0)$, $\Phi\in \mathscr H$ and define for $\zeta\in\mathbb C$ with $\Re(\zeta) \in [0,1]$ the function
\[
f(\zeta) = \bigl\langle (H_0-\bar{z}+C)^{\bar{\zeta}}\Psi, \af(F) (H_0-z+C')^{-\zeta}\Phi\bigr\rangle. 
\]
This function is continuous and bounded in $\zeta$ with $\Re(\zeta)\in [0,1]$, holomorphic in $\zeta$ with $\Re(\zeta)\in (0,1)$. By the initial estimates established in this proof, we furthermore have for $\lambda\in\mathbb R$:
\[
|f(i\lambda)| \leq \|F\| \|\Psi\|\|\Phi\| \qquad \textup{and}\qquad |f(1+i\lambda)|\leq 2\|F\|  \|\Psi\|\|\Phi\|.
\]
By Hadamard's Three-line Theorem \cite{RS}, we may now conclude the lemma.
\end{proof}

 \begin{lem}
 \label{EstiamtesofAsharp}
 Let $\Re(z)<-1$ and $C, C'\geq 0$, then for any square integrable function $F$, for any positive real number $\delta$ and $\gamma$ such that $\frac12\leq\gamma+\delta \leq 1$ we have for any $\Psi \in \mathscr{H}$:
 \begin{align*}
 \int \wb(q) \bigl[\wb(q)+|\Re(z)|\bigr]^{2(\delta+\gamma)-1} \Bigl\| R_0\bigl(z-\wb(q)-C\bigr)^{\delta} \ab(q) \af(F) R_0\bigl(z-C'\bigr)^{\gamma}\Psi\Bigr\|^2  dq & \leq  4 \bigl\|F\bigr\|^2 \bigl\| \Psi \bigr\|^2, \\
  \int \wb(q) \bigl[\wb(q)+|\Re(z)|\bigr]^{2(\delta+\gamma)-1} \Bigl\| R_0\bigl(z-\wb(q)-C\bigr)^{\delta} \cf(F) \ab(q)  R_0\bigl(z-C'\bigr)^{\gamma}\Psi\Bigr\|^2 dq  & \leq  4 \bigl\|F\bigr\|^2 \bigl\| \Psi \bigr\|^2, \\
   \int \wf(p) \bigl[\wb(p)+|\Re(z)|\bigr]^{2(\delta+\gamma)-1} \Bigl\| R_0\bigl(z-\wf(p)-C\bigr)^{\delta} \af(p) \af(F) R_0\bigl(z-C'\bigr)^{\gamma}\Psi \Bigr\|^2 dp  & \leq  4\bigl\|F\bigr\|^2 \bigl\| \Psi \bigr\|^2, \\
  \int \wf(p) \bigl[\wf(p)+|\Re(z)|\bigr]^{2(\delta+\gamma)-1} \Bigl\| R_0\bigl(z-\wf(p)-C\bigr)^{\delta} \cf(F) \af(p)  R_0\bigl(z-C'\bigr)^{\gamma} \Psi \Bigr\|^2 dp  & \leq  4\bigl\|F\bigr\|^2 \bigl\| \Psi \bigr\|^2. 
  \end{align*}
 \end{lem}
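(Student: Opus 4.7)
The plan is to prove all four estimates by a unified strategy: pull-through of the annihilation operator to the far left, the second-quantization identity relating $\int h(q)\|\ab(q)\Phi\|^2\,dq$ to $\langle\Phi,d\Gamma(h)\Phi\rangle$, and reduction to Lemma \ref{RegFermionProp} via a factorization that uses the mixed $(H_0-z+C)^\gamma\af(F)R_0(z-C')^\gamma$ bound. I describe the argument for the first estimate in detail; the others follow the same pattern.

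Using the pull-through relation $\ab(q) f(H_0)=f(H_0+\wb(q))\ab(q)$ with $f(\lambda)=(\lambda-z+C)^{-\delta}$ gives $R_0(z-\wb(q)-C)^\delta\ab(q)=\ab(q)R_0(z-C)^\delta$. Combined with the boson-fermion commutation $\ab(q)\af(F)=\af(F)\ab(q)$, the integrand becomes $\|\ab(q)R_0(z-C)^\delta\af(F)R_0(z-C')^\gamma\Psi\|^2$, in which the resolvents are now independent of $q$. Setting $\Phi:=R_0(z-C)^\delta\af(F)R_0(z-C')^\gamma\Psi$ and $h(q):=\wb(q)[\wb(q)+|Re(z)|]^{2(\delta+\gamma)-1}$, the identity $\int h(q)\|\ab(q)\Phi\|^2\,dq=\|d\Gamma(h)^{1/2}\Phi\|^2$ reduces the problem to proving $\|d\Gamma(h)^{1/2}\Phi\|\leq 2\|F\|\|\Psi\|$.

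Write $r:=2(\delta+\gamma)-1\in[0,1]$. The pointwise bound $\sum_i\wb(k_i)[\wb(k_i)+|Re(z)|]^r\leq(\sum_j\wb(k_j))[\sum_j\wb(k_j)+|Re(z)|]^r$, valid on each $n$-particle sector (from $\wb(k_i)\leq\sum_j\wb(k_j)$ and $r\geq 0$), lifts to $d\Gamma(h)\leq H_0(H_0+|Re(z)|)^r$, and operator monotonicity of the square root gives $d\Gamma(h)^{1/2}\leq H_0^{1/2}(H_0+|Re(z)|)^{r/2}$. The key factorization is
\[
H_0^{1/2}(H_0+|Re(z)|)^{r/2}R_0(z-C)^\delta\af(F)R_0(z-C')^\gamma = \bigl[H_0^{1/2}(H_0+|Re(z)|)^{r/2}R_0(z-C)^{\delta+\gamma}\bigr]\bigl[(H_0-z+C)^\gamma\af(F)R_0(z-C')^\gamma\bigr],
\]
obtained from $R_0(z-C)^\delta=R_0(z-C)^{\delta+\gamma}(H_0-z+C)^\gamma$. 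The first bracket has norm $\leq 1$ by spectral calculus (with $|\lambda-z+C|\geq\lambda+|Re(z)|$ and $r/2-(\delta+\gamma)=-1/2$, the multiplier is $\leq\sqrt{\lambda/(\lambda+|Re(z)|)}\leq 1$), while the second is bounded by $2\|F\|$ by Lemma \ref{RegFermionProp} with $\alpha=\gamma\in[0,1]$.

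Estimates 2 and 3 proceed by the same template. Estimate 2 replaces $\af(F)$ by $\cf(F)$ and uses the second identity of Lemma \ref{RegFermionProp} via the factorization $R_0(z-C)^\delta\cf(F)R_0(z-C')^\gamma=[R_0(z-C)^\delta\cf(F)(H_0-z+C')^\delta]R_0(z-C')^{\delta+\gamma}$. Estimate 3 is the fermion analogue of estimate 1 with $\af(p)$ replacing $\ab(q)$ (the sign from $\af(p)\af(F)=-\af(F)\af(p)$ is harmless). The main obstacle will be estimate 4: the nontrivial anticommutator $\{\cf(F),\af(p)\}=F(p)$ forces the decomposition $\cf(F)\af(p)=-\af(p)\cf(F)+F(p)$, producing an estimate-3-type term (easily handled by the template above) together with a residual scalar contribution $F(p)R_0(z-\wf(p)-C)^\delta R_0(z-C')^\gamma\Psi$; bounding the latter requires a careful direct spectral-calculus estimate that exploits $F\in L^2$ and the precise form of the weight $\wf(p)[\wf(p)+|Re(z)|]^{2(\delta+\gamma)-1}$.
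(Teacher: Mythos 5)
Your reduction works for the first and third estimates, and there it is essentially the paper's own argument in different packaging: the paper cancels the weight pointwise in $q$ using the shifted resolvents and then uses $\int \wb(q)\|\ab(q)R_0(z-C')^{\frac12}\Xi\|^2dq\leq\|\Xi\|^2$ with $\Xi=(H_0-z+C')^{\gamma}\af(F)R_0(z-C')^{\gamma}\Psi$, while you convert the $q$-integral into $\|\mathrm{d}\Gamma(h)^{1/2}\Phi\|^2$ and dominate $\mathrm{d}\Gamma(h)\leq H_0(H_0+|Re(z)|)^{2(\delta+\gamma)-1}$; the final ingredient, Lemma~\ref{RegFermionProp}, is the same. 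One caveat even there: your bracket $(H_0-z+C)^{\gamma}\af(F)R_0(z-C')^{\gamma}$ invokes Lemma~\ref{RegFermionProp}, whose hypothesis is $C'\geq C$; this is not assumed in the present lemma and is violated in its applications (e.g.\ in Lemma~\ref{IE3} it is used with $C=\wb(q)>0=C'$). The repair is the paper's device: insert $R_0(z-C')^{\gamma}(H_0-z+C')^{\gamma}$ instead of factoring $R_0(z-C)^{\delta}$, so both constants in the Lemma~\ref{RegFermionProp} factor are $C'$, and absorb the weight into $R_0(z-C)^{\delta}R_0(z-C')^{\gamma}$.

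The genuine gaps are estimates 2 and 4, where the middle operator is the creation operator $\cf(F)$. In estimate 2 your factorization $R_0(z-C)^{\delta}\cf(F)R_0(z-C')^{\gamma}=[R_0(z-C)^{\delta}\cf(F)(H_0-z+C')^{\delta}]R_0(z-C')^{\delta+\gamma}$ leaves the unbounded weight $H_0^{1/2}(H_0+|Re(z)|)^{\delta+\gamma-\frac12}$ standing to the left of a merely bounded bracket, and it cannot be commuted through $\cf(F)$: since $(H_0-z+C')^{\gamma}\cf(k)=\cf(k)(H_0+\wf(k)-z+C')^{\gamma}$, passing the excess power $\gamma$ through the creation operator costs a factor of order $\wf(k)^{\gamma}$ inside the $F$-integration, i.e.\ would require $\wf^{\gamma}F\in L^2$, while the adjacent resolvent power $\delta$ alone cannot absorb the full weight (only $\delta+\gamma\geq\frac12$ is assumed). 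If you want to keep your reduction here, dominate instead by the purely bosonic weight $\mathrm{d}\Gamma(\wb)^{1/2}[\mathrm{d}\Gamma(\wb)+|Re(z)|]^{\delta+\gamma-\frac12}$, which does commute with $\cf(F)$, and split it into two commuting factors absorbed by the left and right resolvents respectively. In estimate 4 the anticommutator split is fatal rather than merely technical: the residual scalar contribution gives $\int \wf(p)[\wf(p)+|Re(z)|]^{2(\delta+\gamma)-1}|F(p)|^2\|R_0(z-\wf(p)-C)^{\delta}R_0(z-C')^{\gamma}\Psi\|^2dp$, and already for $\Psi$ a fixed vector of low energy the integrand behaves like $\wf(p)^{2\gamma}|F(p)|^2$ at large $p$, which is not integrable for general $F\in L^2$ whenever $\gamma>0$; moreover the companion term $-\af(p)\cf(F)$ is of estimate-2 type (creation in the middle), not estimate-3 type, so it carries the matching divergence --- the two infinities cancel only in their sum. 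The paper's intended ``analogous'' proofs keep these terms normal-ordered and pull the external annihilation operator toward $\Psi$ rather than to the far left: e.g.\ for estimate 4 write $\af(p)R_0(z-C')^{\gamma}=R_0(z-\wf(p)-C')^{\gamma-\frac12}\af(p)R_0(z-C')^{\frac12}$, bound $\|R_0(z-\wf(p)-C)^{\delta}\cf(F)R_0(z-\wf(p)-C')^{\gamma-\frac12}\|\lesssim\|F\|[\wf(p)+|Re(z)|]^{\frac12-\delta-\gamma}$ by Lemma~\ref{RegFermionProp} with shifted spectral parameter (writing $R_0^{\gamma-\frac12}$ as a positive power of $H_0-z+\wf(p)+C'$ when $\gamma<\frac12$), cancel the weight pointwise in $p$, and finish with $\int\wf(p)\|\af(p)R_0(z-C')^{\frac12}\Psi\|^2dp\leq\|\Psi\|^2$; estimate 2 is handled the same way with $\ab(q)$ in place of $\af(p)$.
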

 \begin{proof} Let $\Psi\in\mathscr H$.
 We prove only the first assertion, the other ones being analogous. We therefore have:
 \begin{align*}
  R_0\bigl(z-\wb(q)-C\bigr)^{\delta} \ab(q) \af(F) R_0\bigl(z-C'\bigr)^{\gamma} \Psi 
  &=R_0\bigl(z-\wb(q)-C\bigr)^{\delta} \ab(q) R_0\bigl(z-C'\bigr)^{\gamma} \bigl(H_0-z+C'\bigr)^{\gamma} \af(F) R_0\bigl(z-C'\bigr)^{\gamma} \Psi \\
  & =  R_0\bigl(z-\wb(q)-C\bigr)^{\delta} R_0\bigl(z-\wb(q)-C'\bigr)^{\gamma-\frac12}\ab(q) R_0\bigl(z-C'\bigr)^{\frac12}\\
  &\quad\bigl(H_0-z+C'\bigr)^{\gamma} \af(F) R_0\bigl(z-C'\bigr)^{\gamma} \Psi 
 \end{align*}
 So that: 
  \begin{equation*}
 \Bigl\| R_0\bigl(z-\wb(q)-C\bigr)^{\delta} \ab(q) \af(F) R_0\bigl(z-C\bigr)^{\gamma} \Psi \Bigr\| 
  \leq  \frac{1}{\bigl[\wb(q)+|\Re(z)|\bigr]^{\gamma+\delta-\frac12}}
  \Bigl\|  \ab(q) R_0\bigl(z\bigr)^{\frac12} \bigl(H_0-z+C'\bigr)^{\gamma} \af(F) R_0\bigl(z-C'\bigr)^{\gamma} \Psi\Bigr\|, \\
 \end{equation*}
from which we deduce that: 
\begin{align*}
 &  \int \wb(q) \bigl[\wb(q)+|\Re(z)|\bigr]^{2(\delta+\gamma)-1} \Bigl\| R_0\bigl(z-\wb(q)-C\bigr)^{\delta} \ab(q) \af(F) R_0\bigl(z-C'\bigr)^{\gamma}\Bigr\|^2  dq   \\
& \quad \leq   \Bigl\|\bigl(H_0-z-C'\bigr)^{\gamma} \af(F) R_0\bigl(z-C'\bigr)^{\gamma} \Psi\Bigr\|^2.
\end{align*}
We conclude using Lemma \ref{RegFermionProp}
 \end{proof}
 In the same way, the following Lemma can be proved:
\begin{lem}
\label{EstiamtesofAsharp1}
Let $\Re(z)<$-1 and $C', C \geq 0$, then for any positive real number $\delta$ and $\gamma$ such that $\frac12\leq\gamma+\delta \leq 1$ we have for any $\Psi \in \mathscr{H}$:
\begin{align*}
\int  \wb(q) \bigl[\wb(q) + |\Re(z)|\bigr]^{2 (\delta +\gamma)-1} \Bigl\| R_0\bigl(z-\wb(q)-C\bigr)^{\delta} \ab(q) R_0\bigl(z-C'\bigr)^{\gamma} \Psi \Bigr\|^2 dq & \leq  \bigl\| \Psi \bigr\|^2,\\
\int  \wf(k) \bigl[\wf(k) + |\Re(z)|\bigr]^{2 (\delta +\gamma)-1} \Bigl\| R_0\bigl(z-\wf(k)-C\bigr)^{\delta} \af(k) R_0\bigl(z-C'\bigr)^{\gamma} \Psi \Bigr\|^2 dq & \leq  \bigl\| \Psi \bigr\|^2.
\end{align*}
\end{lem}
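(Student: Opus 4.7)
The plan is to follow the template of Lemma \ref{EstiamtesofAsharp}, which is simpler here because no fermionic kernel separates $\ab(q)$ from the free resolvent. The strategy is to use the pull-through formula $\ab(q)R_0(\zeta)^\alpha = R_0(\zeta-\wb(q))^\alpha\ab(q)$ (valid for $\alpha\ge 0$) to rewrite the integrand as a product of an operator with norm controlled by $[\wb(q)+|Re(z)|]^{-(\delta+\gamma-\frac{1}{2})}$ and a factor $\ab(q)\,R_0(\cdots)^{\frac{1}{2}}\Psi$ whose remaining $R_0$-powers acting on $\Psi$ total $\frac{1}{2}$. Squaring, multiplying by $\wb(q)[\wb(q)+|Re(z)|]^{2(\delta+\gamma)-1}$, and integrating in $q$ then cancels the $[\wb(q)+|Re(z)|]$-factors and reduces the problem to the standard estimate $\int \wb(q)\|\ab(q)\Phi\|^{2}dq = \langle\Phi,\mathrm{d}\Gamma(\wb)\Phi\rangle \le \|H_0^{\frac{1}{2}}\Phi\|^2$, which combined with $\|H_0^{\frac{1}{2}}\Phi\|\le\|\Psi\|$ yields the claim.

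To achieve the rewriting I would split into two subcases. If $\gamma\ge\frac{1}{2}$, factorise $R_0(z-C')^\gamma = R_0(z-C')^{\gamma-\frac{1}{2}}R_0(z-C')^{\frac{1}{2}}$ and pull $\ab(q)$ across the first factor, obtaining
\[
R_0(z-\wb(q)-C)^{\delta}R_0(z-\wb(q)-C')^{\gamma-\frac{1}{2}}\,\ab(q)\,R_0(z-C')^{\frac{1}{2}}\Psi,
\]
whose leftmost resolvent product has combined exponent $\delta+\gamma-\frac{1}{2}\ge 0$ and hence operator norm $\le [\wb(q)+|Re(z)|]^{-(\delta+\gamma-\frac{1}{2})}$. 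If instead $\gamma<\frac{1}{2}$, the hypothesis $\gamma+\delta\ge\frac{1}{2}$ forces $\delta\ge\frac{1}{2}-\gamma\ge 0$, and I would factorise $R_0(z-\wb(q)-C)^\delta = R_0(z-\wb(q)-C)^{\delta+\gamma-\frac{1}{2}}R_0(z-\wb(q)-C)^{\frac{1}{2}-\gamma}$ and apply the reverse pull-through to the second factor, giving
\[
R_0(z-\wb(q)-C)^{\delta+\gamma-\frac{1}{2}}\,\ab(q)\,R_0(z-C)^{\frac{1}{2}-\gamma}R_0(z-C')^{\gamma}\Psi,
\]
with the same operator-norm bound on the leftmost factor.

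In the first case $\Phi=R_0(z-C')^{\frac{1}{2}}\Psi$ and $\|H_0^{\frac{1}{2}}\Phi\|\le\|\Psi\|$ follows from $\sup_{x\ge 0}x^{\frac{1}{2}}/|x-z+C'|^{\frac{1}{2}}\le 1$. In the second case $\Phi=R_0(z-C)^{\frac{1}{2}-\gamma}R_0(z-C')^{\gamma}\Psi$, and the same bound comes from the pointwise inequality
\[
\frac{x^{\frac{1}{2}}}{|x-z+C|^{\frac{1}{2}-\gamma}\,|x-z+C'|^{\gamma}}\;\le\;\frac{x^{\frac{1}{2}}}{(x+|Re(z)|)^{\frac{1}{2}}}\;\le\;1,\qquad x\ge 0,\;C,C'\ge 0,
\]
applied through the spectral theorem for $H_0$. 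The second inequality of the lemma is proved identically with $\af$ and $\wf$ replacing $\ab$ and $\wb$ throughout, using the fermionic pull-through $\af(k)R_0(\zeta)^{\alpha}=R_0(\zeta-\wf(k))^{\alpha}\af(k)$ and $\mathrm{d}\Gamma(\wf)\le H_0$. The only real obstacle is the bookkeeping of the case split: since $\gamma+\delta\ge\frac{1}{2}$ does not force either exponent individually above $\frac{1}{2}$, one cannot peel off a factor $R_0^{\frac{1}{2}}$ from a single resolvent without first distinguishing the two subcases (or, equivalently, invoking a complex-interpolation argument in their stead).
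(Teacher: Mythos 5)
Your proof is correct and takes essentially the same route as the paper, which states this lemma is proved ``in the same way'' as Lemma~\ref{EstiamtesofAsharp}: pull the annihilation operator through the resolvent, peel off a factor $R_0(\cdot)^{\frac12}$ next to it, bound the leftover resolvent powers in operator norm by $[\wb(q)+|Re(z)|]^{-(\delta+\gamma-\frac12)}$, and conclude with $\int \wb(q)\|\ab(q)\Phi\|^2dq=\langle\Phi,\mathrm{d}\Gamma(\wb)\Phi\rangle\leq\|H_0^{\frac12}\Phi\|^2$ and the spectral bound $\|H_0^{\frac12}R_0(z-C')^{\frac12}\|\leq 1$. Your explicit case split on $\gamma\geq\frac12$ versus $\gamma<\frac12$ is only a minor tidying (avoiding a formally negative exponent) of the same argument.
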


 \begin{lem}
 \label{RegTermAlone}
 Let us assume that $\Re(z)<-1$, $C,C'\geq 0 $ and $\delta, \gamma \in [0,1]$ with $\delta+\gamma \geq\frac12$. Then: 
 \begin{align*}
 \Bigl\| R_0\bigl(z-\wb(q)-C\bigr)^{\delta} \af\bigl(F(.,q)\bigr) R_0\bigl(z-C'\bigr)^{\gamma}  \Bigr\|  \leq &  \biggl(\int \frac{|F(k,q)|^2}{\wf(k)\bigl[\wf(k)+|\Re(z)|\bigr]^{2(\delta+\gamma)-1}} dk\biggr)^{\frac12}\\
 & +   \biggl(\int \frac{|F(k,q)|^2}{\bigl[\wb(q)+|\Re(z)|\bigr]^{2\delta}\bigl[\wb(q)+\wf(k)+|\Re(z)|\bigr]^{2\gamma}}   dk\biggr)^{\frac12}.
  \end{align*}
   \end{lem}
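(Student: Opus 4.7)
I plan to prove the lemma by expanding the annihilation operator as an integral in $k$, applying the pull-through formula to move $\af(k)$ through the right-hand resolvent, and then reducing the estimate to a weighted $L^2(dk)$-bound that can be controlled by Lemma~\ref{EstiamtesofAsharp1}. Concretely, writing $\af(F(\cdot,q)) = \int F(k,q)\af(k)\,dk$ and using $\af(k) R_0(z-C')^\gamma = R_0(z-C'-\wf(k))^\gamma \af(k)$, one arrives at
\[
\bigl\|R_0(z-\wb(q)-C)^\delta\af(F(\cdot,q)) R_0(z-C')^\gamma \Psi\bigr\| \leq \int |F(k,q)|\,\bigl\|R_0(z-\wb(q)-C)^\delta R_0(z-C'-\wf(k))^\gamma \af(k)\Psi\bigr\|\,dk.
\]
Splitting the $k$-integration into the two regions $\{\wf(k)\leq\wb(q)\}$ and $\{\wf(k)>\wb(q)\}$ produces the two summands $A(q)$ and $B(q)$ respectively, after a weighted Cauchy--Schwarz.

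On the region $\wf(k)\leq\wb(q)$, operator monotonicity of $x\mapsto x^{-\delta}$ for $\delta\in[0,1]$ (Loewner--Heinz), applied to the positive commuting functions of $H_0$ given by $H_0-z+\wb(q)+C\geq H_0-z+\wf(k)+C$, yields the pointwise norm bound $\|R_0(z-\wb(q)-C)^\delta\phi\| \leq \|R_0(z-\wf(k)-C)^\delta\phi\|$. After this substitution the integrand is exactly of the form controlled by Lemma~\ref{EstiamtesofAsharp1}, and Cauchy--Schwarz with weight $\wf(k)[\wf(k)+|Re(z)|]^{2(\delta+\gamma)-1}$ (using that this weight is precisely the one that matches the first factor of $A(q)$) yields the $A(q)\|\Psi\|$ contribution.

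On the region $\wf(k)>\wb(q)$, I would instead use the scalar bound $\|R_0(z-\wb(q)-C)^\delta\|\leq(\wb(q)+|Re(z)|)^{-\delta}$ and undo the pull-through to write $R_0(z-C'-\wf(k))^\gamma\af(k)\Psi = \af(k) R_0(z-C')^\gamma\Psi$. A second weighted Cauchy--Schwarz combined with Lemma~\ref{EstiamtesofAsharp1} (applied with $\delta=0$ when $\gamma\geq\tfrac12$, or otherwise with a decoupled choice of exponents consistent with $\delta+\gamma\geq\tfrac12$) produces a denominator involving $(\wf(k)+|Re(z)|)^{2\gamma}$ inside the $L^2(dk)$-integral in $F$. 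Since in this region $\wb(q)+\wf(k)+|Re(z)|$ and $\wf(k)+|Re(z)|$ differ by at most a factor $2$, the resulting weight may be rewritten in the $(\wb(q)+\wf(k)+|Re(z)|)^{2\gamma}$ form appearing in $B(q)$ (combined with the already-present $(\wb(q)+|Re(z)|)^{-2\delta}$).

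The main obstacle I anticipate is recovering the sharp form of $B(q)$, in particular the combined denominator $(\wb(q)+\wf(k)+|Re(z)|)^{2\gamma}$ rather than the weaker $(\wf(k)+|Re(z)|)^{2\gamma}$, along with the sharp constant. The region-splitting strategy naturally delivers $\|T\Psi\|\leq A(q)\|\Psi\|+C_\gamma\,B(q)\|\Psi\|$ with an absolute constant $C_\gamma$ of order $2^\gamma$; matching the lemma statement's clean $\leq A(q)+B(q)$ likely requires either a smoother partition of unity in place of a hard indicator split, or a refined simultaneous estimate of the product $R_0(z-\wb(q)-C)^\delta R_0(z-C'-\wf(k))^\gamma$ exploiting that $\af(k)$ shifts the $H_0$-spectral content of $\Psi$ by $-\wf(k)$ via the pull-through.
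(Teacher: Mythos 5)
Your region-1 argument (pull-through, functional-calculus monotonicity to replace $\wb(q)$ by $\wf(k)$ in the left resolvent, then weighted Cauchy--Schwarz against Lemma~\ref{EstiamtesofAsharp1}) is sound, up to the cosmetic point that Lemma~\ref{EstiamtesofAsharp1} is stated only for $\frac12\leq\gamma+\delta\leq1$ while the present lemma allows $\gamma+\delta$ up to $2$. The genuine gap is in region 2. There you discard $R_0(z-\wb(q)-C)^{\delta}$ as the scalar $[\wb(q)+|Re(z)|]^{-\delta}$ and then claim that Cauchy--Schwarz plus Lemma~\ref{EstiamtesofAsharp1} (with $\delta=0$) produces the denominator $[\wf(k)+|Re(z)|]^{2\gamma}$ in the $F$-integral. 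It does not: the weight that Lemma~\ref{EstiamtesofAsharp1} allows on the $\Psi$-side is $\wf(k)[\wf(k)+|Re(z)|]^{2\gamma-1}$, with one \emph{plain} factor $\wf(k)$, because the underlying input is $\int\wf(k)\|\af(k)\Phi\|^{2}dk\leq\|H_0^{1/2}\Phi\|^{2}$ and the fermion number operator is not controlled uniformly in $|Re(z)|$. Hence your region-2 bound carries the denominator $[\wb(q)+|Re(z)|]^{2\delta}\,\wf(k)\,[\wf(k)+|Re(z)|]^{2\gamma-1}$, and the ratio of the lemma's denominator $[\wb(q)+|Re(z)|]^{2\delta}[\wb(q)+\wf(k)+|Re(z)|]^{2\gamma}$ to yours is of size $(\wf(k)+|Re(z)|)/\wf(k)$ on $\{\wb(q)<\wf(k)\ll|Re(z)|\}$, which is unbounded. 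So the discrepancy is not the "constant of order $2^{\gamma}$" you anticipate; your scheme proves a strictly weaker inequality, and the loss is in exactly the regime that matters, since downstream the lemma is used for arbitrarily negative $Re(z)$. In addition, for $\gamma<\frac12$ your region-2 step has no resolvent left to feed Lemma~\ref{EstiamtesofAsharp1} at total exponent $\geq\frac12$, because the $\delta$-resolvent was already converted into a scalar; the "decoupled choice of exponents" is not available as described.

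The paper's proof avoids this by a different decomposition: it writes $R_0(z-\wb(q))^{\delta}\af(F(.,q))R_0(z)^{\gamma}=\af(F(.,q))R_0(z-\wb(q))^{\delta}R_0(z)^{\gamma}+\bigl[R_0(z-\wb(q))^{\delta},\af(F(.,q))\bigr]R_0(z)^{\gamma}$, and these two terms give precisely the two terms of the stated bound. The commutator is expanded through $1-(1-\wf(k)R_0(z-\wb(q)-\wf(k)))^{\delta}=\sum_{n\geq1}C_n(\delta)\wf(k)^{n}R_0(z-\wb(q)-\wf(k))^{n}$ with $C_n(\delta)\geq0$ summing to at most $1$; in the resulting estimate exactly one half-power $\wf(k)^{\frac12}$ is kept on the $\Psi$-side in the square-integrable combination $\|\wf(k)^{\frac12}\af(k)R_0(z)^{\frac12}\Psi\|$, while all surplus inverse powers are powers of the \emph{combined} energy $\wb(q)+\wf(k)+|Re(z)|$ (and of $\wb(q)+|Re(z)|$ from the outer resolvent). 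That is the mechanism that produces the second term's denominator with no plain $\wf(k)$ in the $F$-integral, and it is what a hard (or smooth) splitting in $k$ combined with Lemma~\ref{EstiamtesofAsharp1} cannot reproduce. If you want to salvage your route, the fix is essentially to re-derive this commutator/series estimate rather than to refine the partition.
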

 \begin{proof} 
 It is tempting to try to use Hadamard's Three-line Theorem, as in the proof of Lemma~\ref{RegFermionProp}, but that does not seem feasible due to the way the exponents enter into the constants on the right-hand side. Instead, we use a more cumbersome direct estimate that in fact could also be used to give an alternative proof of Lemma~\ref{RegFermionProp}.
 
  First, let us consider the case $C=C'=0$ and compute
 \begin{equation}
 \label{TotEstiLB4}
  R_0\bigl(z-\wb(q)\bigr)^{\delta} \af\bigl(F(.,q)\bigr) R_0\bigl(z\bigr)^{\gamma}   =  \af\bigl(F(.,q)\bigr) R_0\bigl(z-\wb(q)\bigr)^{\delta} R_0\bigl(z\bigr)^{\gamma} +  \bigl[ R_0\bigl(z-\wb(q)\bigr)^{\delta}, \af\bigl(F(.,q)\bigr)\bigr] R_0\bigl(z\bigr)^{\gamma}.  \end{equation}
 Let $\Psi,\Phi\in\mathscr H$. As for the first term on the right-hand side of \eqref{TotEstiLB4}, we estimate:
 \begin{align*}
\Bigl|\Bigl\langle \Phi \Big| \af\bigl(F(.,q)\bigr) R_0\bigl(z-\wb(q)\bigr)^{\delta} R_0\bigl(z\bigr)^{\gamma} \Psi\Bigr\rangle\Bigr| & \leq  \int \bigl|F(k,q)\bigr|\, \Bigl|\Bigl\langle \Phi \Big| \af(k) R_0\bigl(z-\wb(q)\bigr)^{\delta} R_0\bigl(z\bigr)^{\gamma} \Psi\Bigr\rangle\Bigr| dk \\
&\leq \int \frac{|F(k,q)|}{\bigl[\wf(k)+|\Re(z')|\bigr]^{\delta+\gamma - \frac12}} \bigl\| \Phi \bigr\|  \Bigl\| \af(k)  R_0\bigl(z\bigr)^{\frac12} \Psi\Bigr\| dk \\
& \leq  \biggl(\int \frac{|F(k,q)|^2}{\wf(k)\bigl[\wf(k)+|\Re(z)|\bigr]^{2(\delta+\gamma)-1}} dk\biggr)^{\frac12} \bigl\|\Phi \bigr\| \bigl\| \Psi \bigr\|.
 \end{align*}
As for the second term on the right-hand side of \eqref{TotEstiLB4}, we compute:
 \begin{align*}
 &\bigl[ R_0\bigl(z-\wb(q)\bigr)^{\delta}, \af\bigl(F(.,q)\bigr)\bigr] R_0\bigl(z\bigr)^{\gamma} \\ & \quad = R_0\bigl(z-\wb(q)\bigr)^{\delta}\int F(k,q) \Bigl\{1-\Bigl(1 - \wf(k)R_0\bigl(z-\wb(q)-\wf(k)\bigr)\Bigr)^{\delta}   \Bigr\} \af(k) dk  R_0\bigl(z'\bigr)^{\gamma}.
 \end{align*}
 For $n\geq 0$ an integer, let $C_n(\delta) = (-1)^n \binom{\delta}{n} \geq  0$. Then 
\[
1-\Bigl(1 - \wf(k)R_0\bigl(z-\wb(q)-\wf(k)\bigr)\Bigr)^{\delta}  = \sum^{\infty}_{n=1} C_n(\delta) \wf(k)^n R_0\bigl(z-\wb(q)-\wf(k)\bigr)^n
\] 
 and therefore: 
  \begin{align*}
& \Bigl|\Bigl\langle \Phi \Big| \bigl[ R_0\bigl(z-\wb(q)\bigr)^{\delta}, \af\bigl(F(.,q)\bigr)\bigr] R_0\bigl(z\bigr)^{\gamma} \Psi \Bigr\rangle \Bigr| \\
& \quad \leq  \int \bigl|F(k,q)\bigl| \sum^{\infty}_{n=1} C_n(\delta) \wf(k)^n \bigl\| \Phi \bigr\| \Bigl\| R_0\bigl(z-\wb(q)\bigr)^{\delta}  R_0\bigl(z-\wf(k)-\wb(q)\bigr)^n \af(k)  R_0\bigl(z\bigr)^{\gamma} \Psi \Bigr\| \\
& \quad \leq 
\int \frac{|F(k,q)|}{\bigl[\wb(q)+|\Re(z)|\bigr]^{\delta}} \sum^{\infty}_{n=1} C_n(\delta) \frac{\wf(k)^{n-\frac12}}{\bigl[\wf(k) + \wb(q) + |\Re(z)|\bigr]^{n-\frac12+\gamma}} \bigl\| \Phi \bigr\| \Bigl\| \wf(k)^{\frac12} \af(k)   R_0\bigl(z\bigr)^{\frac12} \Psi \Bigr\| dk \\
& \quad  \leq  
\int \frac{|F(k,q)|}{\bigl[\wb(q)+|\Re(z)|\bigr]^{\delta}
\bigl[\wb(q)+\wf(k)+|\Re(z)|\bigr]^{\gamma}}  \bigl\| \Phi \bigr\| \Bigl\| \wf(k)^{\frac12} \af(k)   R_0\bigl(z\bigr)^{\frac12} \Psi \Bigr\| dk \\
 &\quad \leq 
\biggl(\int \frac{|F(k,q)|^2}{\bigl[\wb(q)+|\Re(z)|\bigr]^{2\delta}\bigl[\wb(q)+\wf(k)+|\Re(z)|\bigr]^{2\gamma}}   dk\biggr)^{\frac12} \bigl\| \Phi \bigr\| \bigl\|  \Psi \bigr\|.
 \end{align*}
 
To conclude the proof, we note that:
\begin{equation*}
\Bigl\| R_0\bigl(z-\wb(q)-C\bigr)^{\delta} \af\bigl(F(.,q)\bigr) R_0\bigl(z-C'\bigr)^{\gamma}  \Bigr\|  \leq   \Bigl\| R_0\bigl(z-\wb(q)\bigr)^{\delta} \af\bigl(F(.,q)\bigr)R_0\bigl(z\bigr)^{\gamma} \bigl(H_0-z\bigr)^{\gamma} R_0\bigl(z-C'\bigr)^{\gamma}  \Bigr\|.
\end{equation*}
  \end{proof}

\begin{lem}
\label{EstimateWtihRespectToOtherPowers}
Let $F$ be a square integrable function  of the form $F(k,q)=h(k,q)g(k\pm q)$ where $h$ is a bounded function and $g$ fulfils  Hypothesis~\ref{MainHypothesis}. Then, For any two sets of exponents $\alpha, \beta, \gamma, \delta \geq 0$, and $\alpha', \beta', \gamma', \delta' \geq 0$ with 
\begin{equation*}
\alpha' + \gamma'  =  \alpha + \gamma \qquad \textup{and} \qquad
\beta' + \delta' =  \beta + \delta,
\end{equation*}
 there exists a constant $C_s>0$, depending only on $\mb,\mf$, such that we have 
\begin{align*}
\forall \lambda\geq 0:\quad &\int\frac{|F(k,q)|^2}{\wb(q)^{\alpha}\bigl[\wb(q)+\lambda\bigr]^{\beta}\wf(k)^{\gamma}\bigl[\wf(k)+\lambda\bigr]^{\delta}} dk dq\\
&\qquad \leq C_s^{\alpha'-\alpha + \beta'-\beta}  \int\frac{|F(k,q)|^2}{\wb(q)^{\alpha'}\bigl[\wb(q)+\lambda\bigr]^{\beta'}\wf(k)^{\gamma'}\bigl[\wf(k)+\lambda\bigr]^{\delta'}} dk dq.
\end{align*}
\end{lem}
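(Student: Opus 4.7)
The plan is to exploit the localisation provided by Hypothesis \ref{MainHypothesis}: since $g(k)=n^d f(nk)$ with $f$ supported in $\mathscr B(0,1)$, the factor $g(k\pm q)$ in $F(k,q)$ vanishes unless $|k\pm q|\leq 1/n$. On this set, $|k|$ and $|q|$ differ by at most $1/n$, so $\wb(q)$ and $\wf(k)$ are comparable up to a multiplicative constant depending only on $n,m_{\mathrm b},m_{\mathrm f}$. Once this comparability is established, the two integrals differ only by a bounded multiplicative ratio and the inequality follows trivially.

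Concretely, my first step is to prove the pointwise comparison: there exists a constant $M\geq 1$, depending only on $n,m_{\mathrm b},m_{\mathrm f}$, such that whenever $g(k\pm q)\neq 0$,
\[
\tfrac{1}{M}\wf(k) \;\leq\; \wb(q) \;\leq\; M\wf(k).
\]
This is essentially the content of the commented Lemma \ref{RegOfSpatCut}: write $q=\mp k + v/n$ with $|v|\leq 1$; for $|k|\geq 2/n$ one has $\wb(q)^2\geq \wf(k)^2/4$ after discarding the positive cross term, while for $|k|\leq 2/n$ both $\wb(q)$ and $\wf(k)$ are uniformly comparable to the masses and hence to each other. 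The reverse inequality is symmetric.

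Second, I would upgrade this to a comparison uniform in $\lambda\geq 0$. Since $M\geq 1$, adding $\lambda$ to both sides of $\wb(q)\leq M\wf(k)$ yields $\wb(q)+\lambda \leq M(\wf(k)+\lambda)$, and analogously in the other direction. Hence on the support of $g(k\pm q)$, for every $\lambda\geq 0$,
\[
\tfrac{1}{M}\bigl(\wf(k)+\lambda\bigr)\;\leq\;\wb(q)+\lambda\;\leq\; M\bigl(\wf(k)+\lambda\bigr).
\]

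Third, using $\alpha'-\alpha=\gamma-\gamma'$ and $\beta'-\beta=\delta-\delta'$, the pointwise ratio of the two integrands on the support of $F$ is
\[
\Bigl(\tfrac{\wb(q)}{\wf(k)}\Bigr)^{\gamma'-\gamma}\Bigl(\tfrac{\wb(q)+\lambda}{\wf(k)+\lambda}\Bigr)^{\delta'-\delta}\;\leq\; M^{|\gamma'-\gamma|+|\delta'-\delta|}=:C,
\]
which is independent of $\lambda$. Multiplying the second integrand by $C$ therefore dominates the first integrand pointwise a.e., and integration gives the desired bound.

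The only subtle point is the uniform comparability step when $|k|$ is small; the positivity of both masses $m_{\mathrm b},m_{\mathrm f}>0$ together with the finite range $1/n$ of $g$ makes this routine, and no further estimate on $h$ beyond measurability is needed since $h$ enters only through its bounded contribution in $|F|^2$.
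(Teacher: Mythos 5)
Your proposal is correct and follows essentially the same route as the paper's proof: the support condition on $g(k\pm q)$ forces $\wb(q)$ and $\wf(k)$ to be comparable (via the two cases $|k|\geq 2/n$ and $|k|\leq 2/n$, using $\mb,\mf>0$), the comparison survives adding $\lambda\geq 0$ because the constant is $\geq 1$, and the claim then follows from pointwise domination of the integrands. The only differences are cosmetic: you establish the two-sided comparison and thereby avoid the paper's reduction ``WLOG $\alpha\leq\alpha'$, $\beta\leq\beta'$'' (and its change of variables $v=n(k\mp q)$), and your constant $M^{|\gamma'-\gamma|+|\delta'-\delta|}$, exactly like the paper's $c^{\alpha'-\alpha+\beta'-\beta}$, depends on the exponent differences, which suffices since in all applications the exponents lie in a bounded range.
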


\begin{proof}
Due to the symmetry between the two variables, $q$ and $k$, we may assume that $\alpha\leq  \alpha'$ and $\beta\leq \beta'$. By assumption, we can assume that the function $F$ is of the form $F(k,q) = g(k\pm q) h(k,q)$
and we may assume without loss of generality that 
\[
F(k,q) = g(k- q) h(k,q),
\]
the other case being similar. Observe that 
\begin{align*}
    \int\frac{|F(k,q)|^2}{\wb(q)^{\alpha}\bigl[\wb(q)+\lambda\bigr]^{\beta}\wf(k)^{\gamma}\bigl[\wf(k)+\lambda\bigr]^{\delta}} dk dq  & = \int\frac{|g(k- q)|^2 |h(k,q)|^2}{\wb(q)^{\alpha}\bigl[\wb(q)+\lambda\bigr]^{\beta}\wf(k)^{\gamma}\bigl[\wf(k)+\lambda\bigr]^{\delta}} dk dq \\ 
&    = \int \frac{|g(v)|^2 |h(k ,v+k)|^2}{\wb(v+k)^{\alpha}\bigl[\wb(v+k)+\lambda\bigr]^{\beta}\wf(k)^{\gamma}\bigl[\wf(k)+\lambda\bigr]^{\delta}}  dv dk.
   \end{align*}
 First, if $|k| \geq 2$ and $|v|\leq 1$, we have: 
\begin{align*}
\wb(v+k)^2 & \geq  |k|^2 + |v|^2 - 2|k| |v| + \mb^2 =  \bigl(|k|^2+\mb^2\bigr)\Bigl( 1 + \frac{|v|^2}{\wb(k)^2} - 2\frac{|k| |v|}{ \wb(k)^2} \Bigr)\\
& \geq  \wb(k)^2 \Bigl( 1 -  \frac{|v|}{\wb(k)} \Bigr)^2 \geq  \wb(k)^2 \frac{1}{4} \geq \frac14 \min\Bigl\{1,\frac{\mb^2}{\mf^2}\Bigr\} \wf(k)^2.
\end{align*}
If, on the other hand, $|k|\leq 2$, we have
\begin{align*}
    \wf(k)^2 = k^2 +\mf^2 \leq 4 + \mf^2 = \frac{4+\mf^2}{\mb^2}\, \mb^2\leq \frac{4+\mf^2}{\mb^2}\, \wb(v+k)^2.
\end{align*}

Taken together, we conclude the existence of a $c>0$, depending only on $\mb$ and $\mf$, such that for all $\lambda\geq 0$ and $k,v\in\mathbb R^d$ with $\|v\|\leq 1$:
\[
\frac1{\wb(v+k)+\lambda} \leq \frac{c}{\wf(k)+\lambda}.
\]
This implies that: 
\begin{align*}
& \frac{|g(v)|^2 |h(k ,v+k)|^2}{\wb(v+k)^{\alpha}\bigl[\wb(v+k)+\lambda\bigr]^{\beta}\wf(k)^{\gamma}\bigl[\wf(k)+\lambda\bigr]^{\delta}}   \\
 & \qquad \leq c^{\alpha'-\alpha + \beta'-\beta} \frac{|g(v)|^2 |h(k ,v+k)|^2}{\wb(v+k)^{\alpha'}\bigl[\wb(v+k)+\lambda\bigr]^{\beta'} \wf(k)^{\gamma'}\bigl[\wf(k)+\lambda\bigr]^{\delta'}}  
\end{align*}
and completes the proof.
\end{proof}

\section{Operator Estimates}
\label{InitialisationEstimates}

In this section, we will assume that the functions $F, G, F_1, F_2, F_3$ fulfil the hypothesis of Lemma~\ref{EstimateWtihRespectToOtherPowers}. 


\begin{lem}
\label{IE1}
Let us assume that $\Re(z)<-1$, $\frac12\leq  \beta \leq 1$. Then, 
\[
\begin{aligned}
\bigl\|H^{\ab \af}(F) R_0(z)^{\beta}\bigr\| & \leq  K^{(1)}_{z,\beta}(F), & \qquad 
\bigl\|R_0(z)^{\beta} H^{\cb \cf}(F) \bigr\| & \leq  K^{(1)}_{z,\beta}(F),\\
\bigl\|  H^{\ab \cf}(F) R_0(z)^{\beta}\bigr\| & \leq  K^{(1)}_{z,\beta}(F), & \qquad
\bigl\| R_0(z)^{\beta} H^{\cb \af}(F) \bigr\| & \leq  K^{(1)}_{z,\beta}(F).
\end{aligned}
\]
where
\begin{equation*}
K^{(1)}_{z,\beta}(F) = \left(\int \frac{|F(k,q)|^2 }{[\wf(k)+|\Re(z)|]^{2 \beta -1 } \wf(k)} dk dq\right)^{\frac12}. 
\end{equation*}
\end{lem}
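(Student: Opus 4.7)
The plan is as follows. The four inequalities in the statement are equivalent up to taking adjoints, since $K^{(1)}_{z,\beta}(F)=K^{(1)}_{\bar z,\beta}(\bar F)$; for instance, $\|R_0(z)^\beta H_I^{\cb\cf}(F)\| = \|H_I^{\ab\af}(\bar F)R_0(\bar z)^\beta\|$. Thus it suffices to bound $\|H_I^{\ab\af}(F)R_0(z)^\beta\|$, and the three remaining estimates follow by the same argument or by direct duality.

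For $\chi\in\mathscr{D}(H_0^\beta)$, I will first apply the pull-through formula $\af(k)\ab(q)R_0(z)^\beta=R_0(z-\wf(k)-\wb(q))^\beta\af(k)\ab(q)$ to rewrite
\[
H_I^{\ab\af}(F)R_0(z)^\beta\chi = \int F(k,q)\,R_0(z-\wf(k)-\wb(q))^\beta\,\af(k)\ab(q)\chi\,dk\,dq.
\]
Since $\wb(q)\geq 0$, the operator norm satisfies $\|R_0(z-\wf(k)-\wb(q))^\beta\| \leq (\wf(k)+|\mathrm{Re}(z)|)^{-\beta}$, which is precisely the origin of the $\wf(k)$-dependence in $K^{(1)}_{z,\beta}(F)$. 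I then take the norm, pair with a unit test vector $\phi\in\mathscr H$, and apply Cauchy-Schwarz in $(k,q)$ with the weight $\wf(k)[\wf(k)+|\mathrm{Re}(z)|]^{2\beta-1}$, so that the first Cauchy-Schwarz factor is exactly $K^{(1)}_{z,\beta}(F)\|\phi\|$.

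The main obstacle I anticipate is controlling the second Cauchy-Schwarz factor, which has the form of an integral over $(k,q)$ of $\|\af(k)\ab(q)\chi\|^2$ against an inverse weight. To handle this I would split the resolvent $R_0(z)^\beta=R_0(z)^{1/2}R_0(z)^{\beta-1/2}$ (allowed since $\beta\geq 1/2$) and pair $R_0(z)^{1/2}$ with the boson annihilation $\ab(q)$ to absorb $d\Gamma(\wb)^{1/2}$ via the standard estimate $\int\wb(q)\|\ab(q)R_0(z)^{1/2}\chi\|^2\,dq\leq \|\chi\|^2$, while the remaining $R_0(z)^{\beta-1/2}$, together with the fermion dispersion bound $\int \wf(k)\|\af(k)\Psi\|^2 dk \leq \|H_0^{1/2}\Psi\|^2$, supplies the factor $[\wf(k)+|\mathrm{Re}(z)|]^{2\beta-1}$ in the weight. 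The endpoint $\beta=1/2$ falls out directly; the delicate point is that the exponent $2\beta-1$ in $K^{(1)}_{z,\beta}$ is exactly what is needed for the resulting Cauchy-Schwarz balancing to close, matching the powers of $\wf$ and the powers of $|\mathrm{Re}(z)|$ that the fermion- and boson-dispersion bounds contribute.
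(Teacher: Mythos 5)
Your reduction to the first bound by adjoints and your initial pull-through step are fine, but the core of the argument does not close, and the reason is structural. Note first that your sketch never uses the standing assumption of the appendix that $F(k,q)=h(k,q)g(k\pm q)$ with $g$ as in Hypothesis~\ref{MainHypothesis}; yet the stated bound, whose weight involves only the \emph{fermion} variable, is false for general square-integrable $F$. Indeed, take $F(k,q)=u(k)v(q)$ with $u$ concentrated at fermion momentum of size $K$ and $v$ concentrated near $q=0$; testing $H_I^{\ab\af}(F)R_0(z)^{1/2}$ on states with one fermion in $u$ and $n$ bosons in $v$ gives a norm bounded below by $\sup_n \sqrt{n}\,(\wf(K)+n\mb+|Re(z)|)^{-1/2}\to \mb^{-1/2}$, while $K^{(1)}_{z,1/2}(F)\approx \wf(K)^{-1/2}\to 0$ as $K\to\infty$. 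So no proof can deliver a purely $\wf$-weighted bound without exploiting that $g$ forces $|k\mp q|\leq 1$, i.e.\ $\wb(q)\sim\wf(k)$ on the support of $F$. The paper's proof respects this: it first treats $\af(F(\cdot,q))$ as a \emph{bounded} operator of norm $\|F(\cdot,q)\|$, pairs $\ab(q)$ with $R_0(z)^{1/2}$ via $\int\wb(q)\|\ab(q)R_0(z)^{1/2}\Phi\|^2dq\leq\|\Phi\|^2$, obtaining the intermediate bound $\bigl(\int \|F(\cdot,q)\|^2\,\wb(q)^{-1}[\wb(q)+|Re(z)|]^{-(2\beta-1)}dq\bigr)^{1/2}$ with a \emph{boson} weight, and only then converts $\wb$-weights into $\wf$-weights through Lemma~\ref{EstimateWtihRespectToOtherPowers}, which is exactly where the hypothesis on $g$ enters.

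Second, the specific way you propose to close the Cauchy--Schwarz step is internally inconsistent. With your weight $\wf(k)[\wf(k)+|Re(z)|]^{2\beta-1}$, after bounding $\|R_0(z-\wf(k)-\wb(q))^{\beta-\frac12}\|\leq[\wf(k)+|Re(z)|]^{-(\beta-\frac12)}$ (which, incidentally, is what actually produces the $[\,\cdot\,+|Re(z)|]^{2\beta-1}$ factor, not the fermion dispersion bound), the second factor becomes
\begin{equation*}
\int \wf(k)\,\bigl\|\af(k)\ab(q)R_0(z)^{\frac12}\chi\bigr\|^2\,dk\,dq
=\bigl\|N_{\mathrm{b}}^{\frac12}\,\mathrm{d}\Gamma(\wf)^{\frac12}R_0(z)^{\frac12}\chi\bigr\|^2,
\end{equation*}
where $N_{\mathrm{b}}$ is the boson number operator; this is \emph{not} $O(\|\chi\|^2)$ (it is comparable to $\|H_0R_0(z)^{\frac12}\chi\|^2$ up to mass factors). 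The single factor $R_0(z)^{1/2}$ can absorb either the boson weight $\wb(q)$ needed for $\ab(q)$ or a fermion weight, but not both, and your chosen weight contains no $\wb(q)$ at all, so the ``standard estimate'' $\int\wb(q)\|\ab(q)R_0(z)^{1/2}\chi\|^2dq\leq\|\chi\|^2$ you invoke cannot be applied. To repair the argument you must put the weight on the boson variable (treating the fermionic annihilation operator as bounded) and then invoke Lemma~\ref{EstimateWtihRespectToOtherPowers}, i.e.\ essentially follow the paper's route.
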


\begin{proof}
We prove only the first estimate as the other ones are analogous. Let $\Psi,\Phi\in\mathscr{H}$. We then have:
\begin{align*}
\Bigl| \bigl\langle \Psi \big| H^{\ab \af}(F) R_0(z)^{\beta}  \Phi \bigr\rangle \Bigr|   & =  \biggl| \int  \bigl\langle \Psi \big|  \af\bigl(F(.,q)\bigr) \ab(q) R_0(z)^{\beta}  \Phi \bigr\rangle  dq  \biggr|\\
& \leq   \int  \Bigl|\bigl\langle \Psi \big|  \af\bigl(F(.,q)\bigr) \ab(q) R_0(z)^{\beta}  \Phi \bigr\rangle\Bigr|  dq  \\
& \leq  \int  \bigl\| \Psi \bigr\| \Bigl\|  \af\bigl(F(.,q)\bigr) \ab(q) R_0(z)^{\beta}  \Phi \Bigr\|  dq \\
& \leq  \bigl\| \Psi \bigr\|  \int \frac{\| F(.,q) \|}{\wb(q)^{\frac12}\bigl[\wb(q)+|\Re(z)|\bigr]^{\beta-\frac12}} \Bigl\| \wb(q)^{\frac12}   \ab(q) R_0(z)^{\frac12}  \Phi \Bigr\|  dq\\
&\leq  \bigl\| \Psi \bigr\|  \biggl(\int \frac{\| F(.,q)\|^2}{\wb(q)\bigl[\wb(q)+|\Re(z)|\bigr]^{2\beta-1}}dq\biggr)^{\frac12} \biggl(\int \Bigl\| \wb(q)^{\frac12}   \ab(q) R_0(z)^{\frac12}  \Phi \Bigr\|^2  dq\biggr)^{\frac12}\\
& \leq  \bigl\| \Phi \bigr\| \bigl\| \Psi \bigr\|  \biggl(\int \frac{\| F(.,q) \|^2}{\wb(q)\bigl[\wb(q)+|\Re(z)|\bigr]^{2\beta-1}}dq\biggr)^{\frac12} .
\end{align*}
We conclude the proof by appealing to Lemma~\ref{EstimateWtihRespectToOtherPowers}. 
\end{proof}

\begin{lem}
\label{IE2}
Let us assume that $\Re(z) < -1$, $0 \leq \gamma \leq 1$ and $0 \leq \delta \leq 1$. Then, 
\[
\Bigl\|R^{\gamma}_0(z) \bigl( H^{\ab\af}(F) R_0(z) H^{\cb \cf}(G) + E(F,G) \bigr)R^{\delta}_0(z)\Bigr\| \leq \left(5+C_s^{1+\gamma+\delta}\right) K^{(2)}_{z, \gamma+\delta}(F,G),
\]
with 
\begin{equation}\label{EFG}
E(F,G) = -\int \frac{\overline{F(k,q)}G(k,q)}{\wf(k) + \wb(q)} dk dq,
\end{equation}
\begin{equation}
\label{K2}
K^{(2)}_{z, \gamma+\delta}(F,G)= A_{z,\delta+\gamma}(F)A_{z,\delta+ \gamma}(G)
\end{equation}
and
\[
A_{z,\delta+\gamma}(F) = \biggl(\int \frac{|F(k,q)|^2}{\wb(q) \bigl[\wb(q) + |\Re(z)|\bigr]^{\gamma + \delta}} dk dq\biggr)^{\frac12} 
\]
\end{lem}

\begin{proof}
Let $\Phi,\Psi\in\mathscr{H}$. First, we have that:
\begin{align}\label{C2-Step1}
\nonumber H^{\ab \af}(F) R_0(z) H^{\cb \cf}(G) = & \int \cb(q_2) \af\bigl(F(.,q_1)\bigr) R_0\bigl(z-\wb(q_1) - \wb(q_2)\bigr) \cf\bigl(G(.,q_2)\bigr) \ab(q_1) d q_1 d q_2\\
\nonumber   & - \int F(k_1,q) G(k_2,q) \cf(k_2) R_0\bigl(z-\wb(q) -\wf(k_1) - \wf(k_2) \bigr) \af(k_1)  dk_1 dk_2 dq \\
 & + \int F(k,q) G(k,q) R_0\bigl(z-\wb(q) -\wf(k)  \bigr)   dk dq 
\end{align}
and we will estimate each term separately. For the first term on the right-hand side of \eqref{C2-Step1}: 
\begin{align*}
& \Bigl| \Bigl\langle \Psi \Big| R_0(z)^{\gamma} \int \cb(q_2) \af\bigl(F(.,q_1)\bigr) R_0\bigl(z-\wb(q_1) - \wb(q_2)\bigr) \cf\bigl(G(.,q_2)\bigr) \ab(q_1) d q_1 d q_2  R_0(z)^{\delta} \Phi \Bigr\rangle \Bigr|  \\
& \quad  \leq  \int \Bigl| \Bigl\langle \ab(q_2) R_0\bigl(\bar{z}\bigr)^{\frac12+\frac{\gamma+\delta}{2}}\Psi \Big| \bigl(H_0-z+\wb(q_2)\bigr)^{\frac{1+\delta-\gamma}{2}}  \af\bigl(F(.,q_1)\bigr) R_0\bigl(z-\wb(q_1) - \wb(q_2)\bigr)  \\
& \qquad  \cf\bigl(G(.,q_2)\bigr)\bigl(H_0-z+\wb(q_1)\bigr)^{\frac{1+\gamma-\delta}{2}} \ab(q_1)   R_0\bigl(z\bigr)^{\frac12+\frac{\gamma+\delta}{2}} \Phi \Bigr\rangle  \Bigr| d q_1 d q_2   \\
&  \quad \leq  \int \Bigl\| \ab(q_2) R_0\bigl(\bar{z}\bigr)^{\frac12+\frac{\gamma+\delta}{2}}\Psi \Bigr\| \Bigl\| \bigl(H_0-z+\wb(q_2)\bigr)^{\frac{1+\delta-\gamma}{2}}  \af\bigl(F(.,q_1)\bigr) R_0\bigl(z-\wb(q_1) - \wb(q_2)\bigr)    \\
& \qquad   \cf\bigl(G(.,q_2)\bigr)\bigl(H_0-z+\wb(q_1)\bigr)^{\frac{1+\gamma-\delta}{2}} \ab(q_1)   R_0\bigl(z\bigr)^{\frac12+\frac{\gamma+\delta}{2}} \Phi \Bigr\| d q_1 d q_2.
\end{align*}
Using Lemma \ref{RegFermionProp}, we arrive at:
\begin{align*}
& \Bigl| \Bigl\langle \Psi \Big| R_0\bigl(z\bigr)^{\gamma} \int \cb(q_2) \af\bigl(F(.,q_1)\bigr) R_0\bigl(z-\wb(q_1) - \wb(q_2)\bigr) \cf\bigl(G(.,q_2)\bigr) \ab(q_1) d q_1 d q_2  R_0\bigl(z\bigr)^{\delta} \Phi \Bigr\rangle  \Bigr| \\
 & \quad \leq 4   \int \bigl\| F(.,q_1) \bigr\| \bigl\| G(.,q_2) \bigr\|  \Bigl\| \ab(q_2) R_0\bigl(\bar{z}\bigr)^{\frac12+\frac{\gamma+\delta}{2}}\Psi \Bigr\| \Bigl\|  \ab(q_1)   R_0\bigl(z\bigr)^{\frac12+\frac{\gamma+\delta}{2}} \Phi \Bigr\| d q_1 d q_2 \\
  & \quad \leq 4 \int  \frac{\| F(.,q_1) \|}{\wb(q_1)^{\frac12} \bigl[\wb(q_1) + |\Re(z)|\bigr]^{\frac{\gamma + \delta}{2}}} \frac{\| G(.,q_2) \|}{ \wb(q_2)^{\frac12} \bigl[\wb(q_2) + |\Re(z)|\bigr]^{\frac{\gamma + \delta}{2}}}  \Bigl\| \wb(q_2)^{\frac12} \ab(q_2) R_0\bigl(\bar{z}\bigr)^{\frac12}\Psi \Bigr\|\\
  &\qquad \Bigl\| \wb(q_1)^{\frac12}  \ab(q_1)   R_0\bigl(z\bigr)^{\frac12} \Phi \Bigr\| d q_1 d q_2 \\
    & \quad \leq 4 \biggl( \int \biggl( \frac{\| F(.,q_1) \|}{\wb(q_1)^{\frac12} \bigl[\wb(q_1) + |\Re(z)|\bigr]^{\frac{\gamma + \delta}{2}}} \frac{\| G(.,q_2) \|}{ \wb(q_2)^{\frac12} \bigl[\wb(q_2) + |\Re(z)|\bigr]^{\frac{\gamma + \delta}{2}}}  \biggr)^2 d q_1 d q_2\biggr)^{\frac12}\\
  &\qquad  \biggl(\int \Bigl\| \wb(q_2)^{\frac12} \ab(q_2) R_0\bigl(\bar{z}\bigr)^{\frac12}\Psi \Bigr\|^2 \Bigl\| \wb(q_1)^{\frac12}  \ab(q_1)   R_0\bigl(z\bigr)^{\frac12} \Phi \Bigr\|^2 d q_1 d q_2 \biggr)^{\frac12}\\
 & \quad \leq 4   \biggl(\int \frac{|F(k,q)|^2}{\wb(q) \bigl[\wb(q) + |\Re(z)|\bigr]^{\gamma + \delta}} dk dq\biggr)^{\frac12} \biggl(\int \frac{|G(k,q)|^2}{\wb(q) \bigl[\wb(q) + |\Re(z)|\bigr]^{\gamma + \delta}} dk dq\biggr)^{\frac12} \bigl\| \Psi \bigr\| \bigl\| \Phi \bigr\|.
\end{align*}
As for the second term on the right-hand side of \eqref{C2-Step1}, we have: 
\begin{align*}
& \Bigl| \Bigl\langle \Psi \Big| R_0\bigl(z\bigr)^{\gamma} \int  \overline{F(k_1,q)} G(k_2,q) \cf(k_2) R_0\bigl(z-\wb(q) - \wf(k_1)-\wf(k_2)\bigr) \af(k_1)  d k_1 d k_2 d q  R_0\bigl(z\bigr)^{\delta} \Phi \Bigr\rangle\Bigr|  \\
& \quad \leq   \int \biggl( \int\bigl|  F(k_1,q) G(k_2,q)\bigr| dq \biggr) \Bigl\| \af(k_2) R_0\bigl(\bar{z}\bigr)^{\frac{1+\gamma+\delta}{2}} \Psi \Bigr\| \Bigl\|     \af(k_1)   R_0\bigl(z\bigr)^{\frac{1+\gamma+\delta}{2}} \Phi \Bigr\| d k_1 d k_2   \\
& \quad \leq   \Biggl(\int \frac{\Bigl(\int\bigl|   F(k_1,q) G(k_2,q)\bigr| dq \Bigr)^2}{\wf(k_1)\bigl[\wf(k_1)+|\Re(z)|\bigr]^{\gamma+\delta} \wf(k_2)\bigl[\wf(k_2)+|\Re(z)|\bigr]^{\gamma+\delta}} d k_1 d k_2 \Biggr)^{\frac12}\bigl\|  \Psi \bigr\| \bigl\|      \Phi \bigr\|  \\
& \quad \leq   \biggl( \int\frac{|F(k,q)|^2}{\wf(k) \bigl[\wf(k) + |\Re(z)|\bigr]^{\delta + \gamma}} d k d q \biggr)^{\frac12} \biggl( \int \frac{|G(k,q)|^2}{\wf(k) \bigl[\wf(k) + |\Re(z)|\bigr]^{\delta + \gamma}} d k d q \biggr)^{\frac12} \bigl\|  \Psi \bigr\| \bigl\| \Phi \bigr\|.
\end{align*}
By using Lemma \ref{EstimateWtihRespectToOtherPowers} we can conclude that 
\begin{align*}
& \Bigl| \Bigl\langle \Psi \Big| R_0\bigl(z\bigr)^{\gamma} \int  \overline{F(k_1,q)} G(k_2,q) \cf(k_2) R_0\bigl(z-\wb(q) - \wf(k_1)-\wf(k_2)\bigr) \af(k_1)  d k_1 d k_2 d q  R_0\bigl(z\bigr)^{\delta} \Phi \Bigr\rangle\Bigr|  \\
& \quad \leq   C_s^{1+\delta+\gamma}\biggl( \int\frac{|F(k,q)|^2}{\wf(k) \bigl[\wf(k) + |\Re(z)|\bigr]^{\delta + \gamma}} d k d q \biggr)^{\frac12} \biggl( \int \frac{|G(k,q)|^2}{\wf(k) \bigl[\wf(k) + |\Re(z)|\bigr]^{\delta + \gamma}} d k d q \biggr)^{\frac12} \bigl\|  \Psi \bigr\| \bigl\| \Phi \bigr\|.
\end{align*}

Finally, we estimate the difference between the third term on the right-hand side of \eqref{C2-Step1} and the counter-term $E(F,G)$:
\begin{align*}
&\biggl| \biggl\langle \Psi \bigg| R_0\bigl(z\bigr)^{\gamma}\biggl\{ \int\overline{F(k,q)}  G(k,q) R_0\bigl(z-\wb(q) -\wf(k)  \bigr)   dk dq - \int \frac{\overline{F(k,q)} G(k,q)}{\wb(q)+\wf(k)}d q d k \biggr\} R_0\bigl(z\bigr)^{\delta} \Phi \bigg\rangle  \biggr|  \\
& \quad \leq   \int \bigl|F(k,q) G(k,q)\bigr| \biggl| \biggl\langle \Psi \bigg| R_0\bigl(z\bigr)^{\gamma}\biggl\{  R_0\bigl(z-\wb(q) -\wf(k)  \bigr)   - \frac{1}{\wb(q)+\wf(k)} \biggr\} R_0\bigl(z\bigr)^{\delta} \Phi \biggr\rangle  \biggr| dk dq   \\
& \quad =   \int \bigl|F(k,q) G(k,q)\bigr| \biggl| \biggl\langle \Psi \bigg| R_0\bigl(z\bigr)^{\gamma}\biggl\{      \frac{(H_0-z) R_0(z-\wb(q) -\wf(k)  )}{\wb(q)+\wf(k)} \biggr\} R_0\bigl(z\bigr)^{\delta} \Phi \biggr\rangle  \biggr| dk dq  \\
& \quad \leq   \int \bigl|F(k,q) G(k,q)\bigr| \bigl\|  \Psi \bigr\| \biggl\| R_0\bigl(z\bigr)^{\gamma}\biggl\{ \frac{(H_0-z) R_0(z-\wb(q) -\wf(k)  )}{\wb(q)+\wf(k)} \biggr\} R_0\bigl(z\bigr)^{\delta} \Phi \biggr\| dk dq  \\
& \quad \leq  \int \bigl|F(k,q) G(k,q)\bigr|  \bigl\| \Psi \bigr\| \biggl\| \biggl\{ \frac{ R_0(z-\wb(q) -\wf(k)  )^{\gamma+\delta}}{\wb(q)+\wf(k)} \biggr\} \Phi \biggr\| dk dq   \\
& \quad \leq  \int \frac{|F(k,q) G(k,q)|}{\bigl[\wb(q)+\wf(k)\bigr] \bigl[\wb(q)+\wf(k)+|\Re(z)|\bigr]^{\gamma + \delta}} dk dq  \bigl\| \Psi \bigr\| \bigl\| \Phi \bigr\|  \\
& \quad \leq  \biggl(\int \frac{|F(k,q)|^2}{\bigl[\wb(q)+\wf(k)\bigr] \bigl[\wb(q)+\wf(k)+|\Re(z)|\bigr]^{\gamma + \delta}} dk dq \biggr)^{\frac12} \\ & \qquad \biggl(\int \frac{|G(k,q)|^2}{\bigl[\wb(q)+\wf(k)\bigr] \bigl[\wb(q)+\wf(k)+|\Re(z)|\bigr]^{\gamma + \delta}} dk dq \biggr)^{\frac12}  \bigl\| \Psi \bigr\| \bigl\| \Phi \bigr\|. \\
\end{align*}
This completes the proof of the lemma.
\end{proof}


\begin{lem}
\label{IE3}
Let us assume that $\Re(z)<-1$ and $0\leq \beta \leq 1$. We then have: 
\begin{align*}
\bigl\| H^{\ab \af}(F) R_0(z) H^{\cb \af}(G) R_0(z)^{\beta}  \bigr\| & \leq 4 ( 4 + C_s^{1+\beta})  K^{(2)}_{z,\beta}(F,G)    \\
\bigl\| R_0(z)^{\beta} H^{\ab \cf}(G) R_0(z) H^{\cb \cf}(F)   \bigr\| & \leq 4 ( 4 + C_s^{1+\beta})  K^{(2)}_{z,\beta}(F,G) 
\end{align*}
where $K^{(2)}_{z,\beta}$ is defined by \eqref{K2}.
\end{lem}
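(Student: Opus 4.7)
The two bounds in the lemma are related by adjunction (the second is the adjoint of the first with $z$ replaced by $\bar z$), so I focus on the first, $\|H_I^{\ab\af}(F) R_0(z) H_I^{\cb\af}(G) R_0(z)^{\beta}\| \lesssim K^{(2)}_{z,\beta}(F,G)$. The plan is to follow the structure of the proof of Lemma~\ref{IE2}, with a crucial simplification: since both $H_I^{\ab\af}(F)$ and $H_I^{\cb\af}(G)$ contain fermionic annihilation operators $\af$, and $\{\af(k_1),\af(k_2)\}=0$ pointwise, there is no singular fermion--fermion contraction. The only non-trivial contraction comes from the bosonic commutator $[\ab(q_1),\cb(q_2)] = \delta(q_1-q_2)$, and hence \emph{no counter-term is required}.

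After normal ordering using $R_0(z)\cb(q_2) = \cb(q_2) R_0(z - \wb(q_2))$, the commutation of $\af$ with $\cb$, and the bosonic commutation relation, one obtains
\[
H_I^{\ab\af}(F) R_0(z) H_I^{\cb\af}(G) = \int F(k_1, q_1)\overline{G(k_2, q_2)}\, \cb(q_2)\af(k_1) R_0(z-\wb(q_1)-\wb(q_2)) \af(k_2)\ab(q_1)\, dk_1 dk_2 dq_1 dq_2 + T_{\mathrm{c}},
\]
where $T_{\mathrm{c}} = \int F(k_1,q)\overline{G(k_2,q)}\, \af(k_1)R_0(z-\wb(q))\af(k_2)\, dk_1 dk_2 dq$. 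I estimate the two pieces applied to $R_0(z)^{\beta}$ separately.

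For the normal-ordered piece, I pair with $\Psi$, move $\cb(q_2)$ and $\af(k_1)$ to the $\Psi$-side (as $\ab(q_2)$ and $\cf(k_1)$), pull $\ab(q_1)$ through the outer $R_0(z)^{\beta}$ (producing $R_0(z-\wb(q_1))^{\beta}\ab(q_1)$), and split the middle resolvent asymmetrically as $R_0(z-\wb(q_1)-\wb(q_2))^{(1+\beta)/2}\cdot R_0(z-\wb(q_1)-\wb(q_2))^{(1-\beta)/2}$, which is valid for $0\leq \beta \leq 1$. Using Cauchy--Schwarz and then Lemma~\ref{EstiamtesofAsharp} (second assertion for the $\Psi$-side; first assertion for the $\Phi$-side after the additional pull-through $R_0(z-\wb(q_1)-\wb(q_2))^{(1-\beta)/2}\ab(q_1)=\ab(q_1)R_0(z-\wb(q_2))^{(1-\beta)/2}$), both with $\delta+\gamma = (1+\beta)/2$, yields for each fixed $q_1$ and each fixed $q_2$ respectively the key bounds
\[
\int \wb(q_2)[\wb(q_2)+|Re(z)|]^{\beta}\, \|(\Psi\text{-side})(q_1,q_2)\|^2\, dq_2 \;\lesssim\; \|F(\cdot,q_1)\|^2\|\Psi\|^2
\]
and the symmetric statement with $F\leftrightarrow G$, $q_1\leftrightarrow q_2$ and $\Psi\leftrightarrow\Phi$ on the $\Phi$-side. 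An iterated Cauchy--Schwarz --- first integrating in $q_2$ against the weight $\wb(q_2)[\wb(q_2)+|Re(z)|]^\beta$, then in $q_1$ after inserting $1/\sqrt{\wb(q_1)[\wb(q_1)+|Re(z)|]^\beta}$ into the $F$-factor --- produces the bound $A_{z,\beta}(F)A_{z,\beta}(G)\|\Psi\|\|\Phi\|$.

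For the contraction piece $T_{\mathrm{c}}R_0(z)^{\beta}$, after smearing the two $\af$-operators against $F(\cdot,q)$ and $G(\cdot,q)$ (which yields bounded operators of norm $\leq \|F(\cdot,q)\|$ and $\|G(\cdot,q)\|$), I apply Lemma~\ref{RegFermionProp} with $\alpha=\beta$, $C=\wb(q)$, $C'=0$ to factor the right-most block as $\af(G(\cdot,q))R_0(z)^{\beta} = R_0(z-\wb(q))^{\beta}K_q$ with $\|K_q\|\leq 2\|G(\cdot,q)\|$. Pairing with $\Psi$ and using the trivial bound $\|R_0(z-\wb(q))^{1+\beta}\|\leq (\wb(q)+|Re(z)|)^{-(1+\beta)}$, the integrand is bounded by $\|F(\cdot,q)\|\|G(\cdot,q)\|\|\Psi\|\|\Phi\|/(\wb(q)+|Re(z)|)^{1+\beta}$; using $(\wb(q)+|Re(z)|)^{1+\beta} \geq \wb(q)(\wb(q)+|Re(z)|)^{\beta}$ and a final Cauchy--Schwarz in $q$ delivers the bound $A_{z,\beta}(F)A_{z,\beta}(G)\|\Psi\|\|\Phi\|$. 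The main obstacle I anticipate is the absence of a free resolvent on the left of $H_I^{\ab\af}(F)$: this forces the asymmetric split of the middle resolvent and breaks the symmetry enjoyed by the proof of Lemma~\ref{IE2}. The delicate point is to verify that, after the pull-through of $\ab(q_1)$ through $R_0(z)^{\beta}$, the parameters in Lemma~\ref{EstiamtesofAsharp} align so that the weight $\wb(\cdot)[\wb(\cdot)+|Re(z)|]^{\beta}$ characteristic of $A_{z,\beta}$ emerges simultaneously on both sides.
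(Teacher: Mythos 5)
Your splitting of $H_I^{\ab\af}(F)R_0(z)H_I^{\cb\af}(G)$ into a normal-ordered term plus a single bosonic contraction with no counter-term is exactly the paper's decomposition, and your treatment of the normal-ordered term is essentially the paper's argument: Cauchy--Schwarz with the weights $\wb(q_i)[\wb(q_i)+|Re(z)|]^{\beta}$ and two applications of Lemma~\ref{EstiamtesofAsharp} with $\delta+\gamma=\tfrac{1+\beta}{2}$; the parameters do align as you anticipated, and the iterated Cauchy--Schwarz closes to $A_{z,\beta}(F)A_{z,\beta}(G)\|\Psi\|\|\Phi\|$.

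The contraction term, however, contains a genuine gap. You factor $\af(G(\cdot,q))R_0(z)^{\beta}=R_0(z-\wb(q))^{\beta}K_q$ and claim $\|K_q\|\leq 2\|G(\cdot,q)\|$ by Lemma~\ref{RegFermionProp} with $\alpha=\beta$, $C=\wb(q)$, $C'=0$; but that lemma requires $C'\geq C$, and here $C=\wb(q)>0=C'$. The bound is in fact false: $K_q=(H_0-z+\wb(q))^{\beta}\af(G(\cdot,q))R_0(z)^{\beta}$ applied to a one-fermion vector $\cf(u)\Omega$ with $u$ proportional to $G(\cdot,q)$ produces (after annihilation) a multiple of the vacuum, on which $H_0=0$, so the left factor contributes the full $(\wb(q)+|Re(z)|)^{\beta}$ while $R_0(z)^{\beta}$ only yields $(\wf+|Re(z)|)^{-\beta}$ on the one-fermion sector; hence $\|K_q\|$ grows like $(\wb(q)+|Re(z)|)^{\beta}\|G(\cdot,q)\|$. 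With this correct growth, the leftover factor $\wb(q)^{1+\beta}/(\wb(q)+|Re(z)|)$ in your final Cauchy--Schwarz in $q$ is unbounded and the estimate does not close. A symptom of the problem is that your argument never uses the structure $F(k,q)=h(k,q)g(k\pm q)$ assumed in this section, yet for generic square-integrable kernels the bound $\|T_{\mathrm c}R_0(z)^{\beta}\|\lesssim K^{(2)}_{z,\beta}(F,G)$ fails for $\beta>0$ (take $F=u\otimes v$, $G=u'\otimes v$ with $v$ concentrated at $|q|\sim Q$: the left side decays like $Q^{-1}$, the right side like $Q^{-(1+\beta)}$). The paper handles this term differently: it spends only the power $\tfrac{1+\beta}{2}$ of the shifted resolvent as the scalar decay $[\wb(q)+|Re(z)|]^{-\frac{1+\beta}{2}}$, uses the third inequality of Lemma~\ref{EstiamtesofAsharp} with the weight placed on the \emph{fermionic} variable of $F$, obtaining the factors $\bigl(\int |F|^2/(\wf(k)[\wf(k)+|Re(z)|]^{\beta})\bigr)^{1/2}$ and $\bigl(\int |G|^2/[\wb(q)+|Re(z)|]^{1+\beta}\bigr)^{1/2}$, and only then converts the $\wf$-weights into $\wb$-weights via Lemma~\ref{EstimateWtihRespectToOtherPowers}, which is precisely where Hypothesis~\ref{MainHypothesis} on $g$ enters. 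Your contraction estimate needs to be replaced by an argument of this type; the shortcut through Lemma~\ref{RegFermionProp} cannot be repaired.
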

\begin{proof}
Let $\Psi,\Phi \in \mathscr{H}$. We will treat only the first case as the second one is the adjoint case. First, 
\begin{align*}
H^{\ab \af}(F) R_0(z) H^{\cb \af}(G) & =  \int \cb(q_2) \af(F(.,q_1)) R_0(z-\wb(q_1) - \wb(q_2)) \af(G(.,q_2)) \ab(q_1)  d q_1 d q_2\\
& \quad + \int  \af(F(.,q)) R_0(z-\wb(q)) \af(G(.,q))   d q.
\end{align*}
On the one hand, 
\begin{align*}
& \left| \left< \Psi \left|  \int \cb(q_2) \af(F(.,q_1)) R_0(z-\wb(q_1) - \wb(q_2)) \af(G(.,q_2)) \ab(q_1) d q_1 d q_2 R_0(z)^{\beta}\Phi \right> \right| \right. \\ 
& \quad\leq  \int \left\| R_0(z-\wb(q_2))^{\frac{1+\beta}{2}} \cf(F(.,q_1))\ab(q_2)  \Psi \right\| \left\|    R_0(z- \wb(q_1))^{\frac{1-\beta}{2}} \af(G(.,q_2)) \ab(q_1)    R_0(z)^{\beta}\Phi \right\| dq_1 d q_2  \\
& \quad\leq  \int \left( \int \left( \wb(q_2) [\wb(q_2) + |\Re(z)|]^{\beta}\right)\left\| R_0(z-\wb(q_2))^{\frac{1+\beta}{2}} \cf(F(.,q_1))\ab(q_2)  \Psi \right\|^2 d q_2 \right)^{\frac12}  \\
&  \qquad \biggl(\int \frac{1}{\left( \wb(q_2) [\wb(q_2) + |\Re(z)|]^{\beta}\right)}\left\|    R_0(z- \wb(q_1))^{\frac{1-\beta}{2}} \af(G(.,q_2)) \ab(q_1)    R_0(z)^{\beta}\Phi \right\|^2 dq_2 \biggr)^{\frac12} dq_1   \\
\end{align*}
and using Lemma \ref{EstiamtesofAsharp} we have:  
\begin{align*}
&\left| \left< \Psi \left|  \int \cb(q_2) \af(F(.,q_1)) R_0(z-\wb(q_1) - \wb(q_2)) \af(G(.,q_2)) \ab(q_1) d q_1 d q_2 R_0(z)^{\beta}\Phi \right> \right| \right.\\ 
 & \quad \leq 4  \int \frac{1}{\left( \wb(q_1) [\wb(q_1) + |\Re(z)|]^{\beta}\right)^{\frac12}}\left\| F(.,q_1) \right\| \bigl\| \Psi \bigr\| \\
&\qquad \biggl(\int \frac{\left( \wb(q_1) [\wb(q_1) + |\Re(z)|]^{\beta}\right)}{\left( \wb(q_2) [\wb(q_2) + |\Re(z)|]^{\beta}\right)}\left\|    R_0(z- \wb(q_1))^{\frac{1-\beta}{2}} \af(G(.,q_2)) \ab(q_1)    R_0(z)^{\beta}\Phi \right\|^2 dq_2 \biggr)^{\frac12} dq_1    \\
 & \quad\leq 4 \biggl(\int \frac{|F(k_1,q_1)|^2}{\left( \wb(q_1) [\wb(q_1) + |\Re(z)|]^{\beta}\right)} d k_1 d q_1\biggr)^{\frac12} \bigl\|\Psi\bigr\| \\
& \qquad \biggl(\int \biggl(\int \frac{\left( \wb(q_1) [\wb(q_1) + |\Re(z)|]^{\beta}\right)}{\left( \wb(q_2) [\wb(q_2) + |\Re(z)|]^{\beta}\right)}\left\|    R_0(z- \wb(q_1))^{\frac{1-\beta}{2}} \af(G(.,q_2)) \ab(q_1)    R_0(z)^{\beta}\Phi \right\|^2 dq_1 \biggr) dq_2 \biggr)^{\frac12} \\
 & \quad \leq 16 \biggl(\int \frac{|F(k,q)|^2}{\left( \wb(q) [\wb(q) + |\Re(z)|]^{\beta}\right)} d k d q\biggr)^{\frac12} \biggl(\int \frac{|G(k,q)|^2}{\wb(q) [\wb(q) + |\Re(z)|]^{\beta}} d k d q\biggr)^{\frac12} \bigl\| \Psi \bigr\| \bigl\| \Phi \bigr\|
\end{align*}
On the other hand: 
\begin{align*}
& \left| \left< \Psi \left|  \int  \af(F(.,q)) R_0(z-\wb(q) ) \af(G(.,q))   d q R_0(z)^{\beta}\Phi \right> \right| \right. \\
& \quad \leq \bigl\| \Psi \bigr\| \int \left| F(k_1,q) \right| \left\|     R_0(z-\wf(k_1) -\wb(q)) \af(G(.,q))  \af(k_1) R_0(z)^{\beta}\Phi \right\|  d q d k_1  \\
& \quad \leq  \bigl\| \Psi \bigr\| \int \frac{\left| F(k_1,q) \right|}{[\wb(q)+|\Re(z)|]^{\frac{1+\beta}{2}}} \left\|     R_0(z-\wf(k_1) -\wb(q))^{\frac{1-\beta}{2}} \af(G(.,q))  \af(k_1) R_0(z)^{\beta}\Phi \right\|  d q d k_1 \\
& \quad \leq  \bigl\| \Psi \bigr\| \int \biggl( \int \frac{\left| F(k_1,q) \right|^2}{[\wb(q)+|\Re(z)|]^{1+\beta} \wf(k_1) [\wf(k_1) + |\Re(z)|]^{\beta}} d k_1 \biggr)^{\frac12} \\
&\qquad \biggl(\int \left(\wf(k_1) [\wf(k_1) + |\Re(z)|]^{\beta}\biggr) \left\|     R_0(z-\wf(k_1) -\wb(q))^{\frac{1-\beta}{2}} \af(G(.,q))  \af(k_1) R_0(z)^{\beta}\Phi \right\|^2   d k_1 \right)^{\frac12} d q \\
& \quad \leq  4 \bigl\|\Psi\bigr\| \int \biggl( \int \frac{\left| F(k_1,q) \right|^2}{[\wb(q)+|\Re(z)|]^{1+\beta} \wf(k_1) [\wf(k_1) + |\Re(z)|]^{\beta}} d k_1 \biggr)^{\frac12} \bigl\|G(.,q)\bigr\| \bigl\| \Phi\bigr\| d q \\
& \quad \leq 4 \bigl\| \Phi\bigr\|  \bigl\| \Psi \bigr\|  \biggl( \int \frac{\left| F(k,q) \right|^2}{ \wf(k) [\wf(k) + |\Re(z)|]^{\beta}} d k dq \biggr)^{\frac12} \biggl( \int \frac{\left| G(k,q) \right|^2}{[\wb(q)+|\Re(z)|]^{1+\beta} } d k dq \biggr)^{\frac12},
\end{align*}
where Lemma \ref{EstiamtesofAsharp} has been used.  Lemma \ref{EstimateWtihRespectToOtherPowers} can now be used to conclude that 
\begin{align*}
& \left| \left< \Psi \left|  \int  \af(F(.,q)) R_0(z-\wb(q) ) \af(G(.,q))   d q R_0(z)^{\beta}\Phi \right> \right| \right. \\
& \quad \leq  4 C_s^{1+\beta} \bigl\| \Phi\bigr\|  \bigl\| \Psi \bigr\|  \biggl( \int \frac{\left| F(k,q) \right|^2}{ \wb(q) [\wb(q) + |\Re(z)|]^{\beta}} d k dq \biggr)^{\frac12} \biggl( \int \frac{\left| G(k,q) \right|^2}{\wb(q)[\wb(q)+|\Re(z)|]^{\beta} } d k dq \biggr)^{\frac12}.
\end{align*}
\end{proof}


\begin{lem}
\label{IE4}
Let us assume that $\Re(z)<-1$ and $0 \leq \delta, \gamma \leq 1$. We have: 
\begin{equation*}
\Bigl\| R_0\bigl(z\bigr)^{\delta} H^{\ab \cf}(F) R_0\bigl(z\bigr) H^{\cb \af}(G) R_0\bigl(z\bigr)^{\gamma} \Bigr\| \leq \left(17+2C_s+C_s^2\right) K^{(2)}_{z,\gamma+\delta}(F,G),
\end{equation*}
where $K^{(2)}_{z,\beta}$ is defined by \eqref{K2}.
\end{lem}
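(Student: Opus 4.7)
The plan is to follow the template of Lemma~\ref{IE3}, adapted to the fact that the two interaction kernels now involve a fermion creation on the left and a fermion annihilation on the right. First, I would normal-order the bosonic operators in the middle of the product. Using the pull-through formula $\ab(q_1) R_0(z) = R_0(z - \wb(q_1)) \ab(q_1)$, the canonical commutation relation $[\ab(q_1), \cb(q_2)] = \delta(q_1 - q_2)$, and the fact that bosonic operators commute with fermionic ones, the product decomposes as
\[
H^{\ab\cf}(F) R_0(z) H^{\cb\af}(G) = A + B,
\]
where
\[
A = \int \cb(q_2)\, \cf(F(\cdot, q_1))\, R_0(z - \wb(q_1) - \wb(q_2))\, \af(\overline{G(\cdot, q_2)})\, \ab(q_1)\, dq_1\, dq_2
\]
is the boson-normal-ordered four-operator term, and
\[
B = \int \cf(F(\cdot, q))\, R_0(z - \wb(q))\, \af(\overline{G(\cdot, q)})\, dq
\]
is the contact term produced by the boson contraction.

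To bound $\langle \Psi | R_0(z)^\delta A R_0(z)^\gamma \Phi \rangle$, I would distribute the total resolvent power $1 + \delta + \gamma$ symmetrically between the two sides of the pairing, exactly as in the proof of Lemma~\ref{IE3}. Push $\cb(q_2)$ through $R_0(z)^\delta$ so that $\ab(q_2) R_0(\bar z)^{(1+\delta-\gamma)/2}$ acts on $\Psi$, and push $\ab(q_1)$ through $R_0(z)^\gamma$ so that $\ab(q_1) R_0(z)^{(1+\gamma-\delta)/2}$ acts on $\Phi$. Then bound the two smeared fermionic operators $\cf(F(\cdot, q_1))$ and $\af(\overline{G(\cdot, q_2)})$, conjugated by the residual inverse resolvents, by $\|F(\cdot, q_1)\|$ and $\|G(\cdot, q_2)\|$ via Lemma~\ref{RegFermionProp}. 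Cauchy--Schwarz in $(q_1, q_2)$, combined with Lemma~\ref{EstiamtesofAsharp1} applied to the $\ab$-factors, then gives an estimate of the desired form $A_{z, \delta+\gamma}(F)\, A_{z, \delta+\gamma}(G)\, \|\Psi\| \|\Phi\|$.

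For the contact term $B$, I would expand pointwise in $(k_1, k_2, q)$ and use the fermionic pull-through identities $R_0(z)^\delta \cf(k_1) = \cf(k_1) R_0(z - \wf(k_1))^\delta$ and $\af(k_2) R_0(z)^\gamma = R_0(z - \wf(k_2))^\gamma \af(k_2)$, together with the identity $R_0(\bar z - \wf(k_1))^\delta \af(k_1) = \af(k_1) R_0(\bar z)^\delta$, to rewrite the pairing as
\[
\int F(k_1, q)\, \overline{G(k_2, q)}\, \langle \af(k_1) R_0(\bar z)^\delta \Psi\, |\, R_0(z - \wb(q))\, \af(k_2) R_0(z)^\gamma \Phi \rangle\, dk_1\, dk_2\, dq.
\]
I would then split the interior resolvent $R_0(z - \wb(q))$ into two factors and distribute them across the pairing so that Lemma~\ref{EstiamtesofAsharp1} (whose hypothesis $\delta' + \gamma' \geq \tfrac12$ would otherwise fail for $\delta, \gamma < \tfrac12$) applies on each side. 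A Cauchy--Schwarz in $(k_1, k_2, q)$ then produces weighted $L^2$-norms of $F$ and $G$ with $\wf$-weights on $k_1, k_2$ and $\wb$-weights on $q$, and Lemma~\ref{EstimateWtihRespectToOtherPowers}, applicable because of the structure $F(k, q) = h(k, q) g(k \pm q)$, converts any $\wf$-weighted factors into the $\wb$-weighted form of $A_{z, \delta+\gamma}(\cdot)$, delivering the stated bound.

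The main obstacle is the contact term $B$. In Lemma~\ref{IE3}, the corresponding contact term has two fermion annihilations that can be pulled to the same side of the pairing and handled by a single application of Lemma~\ref{EstiamtesofAsharp}. Here, by contrast, the creation $\cf(k_1)$ and the annihilation $\af(k_2)$ are separated by the interior resolvent and must be split off to opposite sides. The delicate bookkeeping of how to distribute the resolvent exponents $\delta, \gamma$ and the interior resolvent $R_0(z - \wb(q))$ so as to simultaneously meet the hypothesis of Lemma~\ref{EstiamtesofAsharp1} on each side while leaving precisely the right scalar weight in $\wb(q)$ for $A_{z, \delta+\gamma}$ is the technical heart of the argument.
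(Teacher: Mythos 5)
Your decomposition of $H^{\ab \cf}(F) R_0(z) H^{\cb \af}(G)$ into the boson--normal-ordered term and the contraction term is exactly the paper's starting point, and your treatment of the normal-ordered term (distributing the resolvent powers between the two sides of the pairing and combining Lemma~\ref{RegFermionProp} with the $\ab$-estimates, which is precisely what Lemma~\ref{EstiamtesofAsharp} packages) is essentially the paper's argument, up to some loose exponent bookkeeping. The gap is in the contact term, which you correctly single out as the heart of the matter but do not actually close. Lemma~\ref{EstiamtesofAsharp1} cannot be applied to the factors you produce: its fermionic estimate concerns $R_0(z-\wf(k)-C)^{\delta}\af(k)R_0(z-C')^{\gamma}$, i.e.\ the resolvent to the \emph{left} of $\af(k)$ must carry the shift $\wf(k)$, whereas after splitting the interior resolvent your left factor is $R_0(\bar z-\wb(q))^{a}\af(k_1)R_0(\bar z)^{\delta}\Psi$, whose left resolvent carries only the $\wb(q)$ shift. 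Operator monotonicity gives $\bigl\|R_0(z-\wb(q)-\wf(k_1))^{a}\theta\bigr\|\leq\bigl\|R_0(z-\wb(q))^{a}\theta\bigr\|$, so the comparison with the lemma's (shifted) quantity goes in the useless direction and the missing shift cannot simply be inserted. Concretely, when $\delta<\tfrac12$ --- and the lemma is invoked in the main proof with $\nu_5=\mu_5=\tfrac14$, so $\delta=\gamma=\tfrac14$ is the case that matters --- the only resolvent to the right of $\af(k_1)$ is $R_0(\bar z)^{\delta}$, and $\int \wf(k_1)\bigl\|\af(k_1)R_0(\bar z)^{\delta}\Psi\bigr\|^{2}dk_1=\bigl\|\mathrm{d}\Gamma(\wf)^{\frac12}R_0(\bar z)^{\delta}\Psi\bigr\|^{2}$ is \emph{not} controlled by $\|\Psi\|^{2}$ (states with many fermions make $\mathrm{d}\Gamma(\wf)^{\frac12}R_0^{\delta}$ unbounded for $\delta<\tfrac12$); no rearrangement of the $\wb(q)$-shifted exponents supplies the missing half power of a $\wf(k_1)$-shifted resolvent adjacent to the annihilation operator.

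What the paper does at exactly this point is apply Lemma~\ref{RegTermAlone}: it keeps the fermion operators smeared, splits $R_0(z-\wb(q))$ into the powers $\tfrac{1+\gamma-\delta}{2}$ and $\tfrac{1+\delta-\gamma}{2}$, so that each side of the pairing has the form $R_0(z-\wb(q))^{\delta'}\af(F(\cdot,q))R_0(z)^{\gamma'}$ with $\delta'+\gamma'=\tfrac{1+\delta+\gamma}{2}\geq\tfrac12$, and then invokes Lemma~\ref{RegTermAlone}, whose proof uses the commutator $\bigl[R_0(z-\wb(q))^{\delta'},\af(F(\cdot,q))\bigr]$ and a binomial series precisely in order to transfer resolvent power across the annihilation operator when the exponent on its right is below $\tfrac12$ (the paper even notes that a Hadamard three-lines argument does not work here). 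The resulting two weighted $L^{2}$ bounds, after Cauchy--Schwarz in $q$ and Lemma~\ref{EstimateWtihRespectToOtherPowers}, give $K^{(2)}_{z,\delta+\gamma}(F,G)$. Your outline never invokes Lemma~\ref{RegTermAlone} nor any substitute for this commutator mechanism, so as written the estimate of the contact term fails for small $\delta,\gamma$; filling the acknowledged ``delicate bookkeeping'' would amount to reproving that lemma.
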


\begin{proof} Compute first
\begin{align*}
 H^{\ab \cf}(F) R_0\bigl(z\bigr) H^{\cb \af}(G)  & =  \int  \cf\bigl(F(.,q_1)\bigr) \cb(q_1) R_0\bigl(z-\wb(q_2)-\wb(q_1)\bigr) \ab(q_2) \af\bigl(G(.,q_2)\bigr) d q_1 d q_2 \\
 & \quad + \int   \cf\bigl(F(.,q)\bigr)  R_0\bigl(z- \wb(q)\bigr)  \af\bigl(G(.,q)\bigr) d q.  \\
\end{align*}
Let $\Psi,\Phi\in \mathscr{H}$. We then have:
\begin{align*}
& \int \Bigl| \Bigl\langle \Psi  \Big|   R_0\bigl(z\bigr)^{\delta}  \cf\bigl(F(.,q_1)\bigr) \cb(q_1) R_0\bigl(z-\wb(q_2)-\wb(q_1)\bigr) \ab(q_2) \af\bigl(G(.,q_2)\bigr) R_0\bigl(z\bigr)^{\gamma}  \Phi \Bigr\rangle \Bigr| d q_1 d q_2  \\
& \quad \leq  \int \Bigl\| R_0\bigl(z-\wb(q_2)-\wb(q_1)\bigr)^{\frac{1+\gamma-\delta}{2}}\ab(q_1) \af\bigl(F(.,q_1)\bigr) R_0\bigl(z\bigr)^{\delta}  \Psi \Bigr\| \\
&\qquad \Bigl\|     R_0\bigl(z-\wb(q_2)-\wb(q_1)\bigr)^{\frac{1+\delta-\gamma}{2}} \ab(q_2) \af\bigl(G(.,q_2)\bigr)  R_0\bigl(z\bigr)^{\gamma}  \Phi \Bigr\| d q_1 d q_2 
\end{align*}
and using Lemma \ref{EstiamtesofAsharp} we obtain:
\begin{align*}
    & \int \Bigl| \Bigl\langle \Psi  \Big|   R_0(z)^{\delta}  \cf\bigl(F(.,q_1)\bigr) \cb(q_1) R_0\bigl(z-\wb(q_2)-\wb(q_1)\bigr) \ab(q_2) \af\bigl(G(.,q_2)\bigr) R_0\bigl(z\bigr)^{\gamma}  \Phi \Bigr\rangle \Bigr|  d q_1 d q_2 \leq 16 K^{(2)}_{z,\gamma+\delta}(F,G) .
\end{align*}
On the other hand,
\begin{align*}
&\int  \Bigl| \Bigl\langle \Psi \Big| R_0\bigl(z\bigr)^{\delta} \cf\bigl(F(.,q)\bigr)  R_0\bigl(z- \wb(q)\bigr)  \af\bigl(G(.,q)\bigr) R_0\bigl(z\bigr)^{\gamma} \Phi \Bigr\rangle \Bigr| d q \\
 & \quad \leq  \int  \Bigl\|  R_0\bigl(\bar{z}- \wb(q)\bigr)^{\frac{1+\gamma-\delta}{2}}   \af\bigl(F(.,q)\bigr) R_0\bigl(\bar{z}\bigr)^{\delta}\Psi \Bigr\| \Bigl\|  R_0\bigl(z- \wb(q)\bigr)^{\frac{1+\delta-\gamma}{2}}  \af\bigl(G(.,q)\bigr) R_0\bigl(z\bigr)^{\gamma} \Phi \Bigr\| d q,
 \end{align*}
 and using Lemma~\ref{RegTermAlone}:
 \begin{align*}
 &\int  \Bigl| \Bigl\langle \Psi \Big| R_0\bigl(z\bigr)^{\delta} \cf\bigl(F(.,q)\bigr)  R_0\bigl(z- \wb(q)\bigr)  \af\bigl(G(.,q)\bigr) R_0\bigl(z\bigr)^{\gamma} \Phi \Bigr\rangle  \Bigr| d q \\
 & \quad\leq  \int \biggl\{  \biggl(\int \frac{|F(k_1,q)|^2}{\wf(k_1)\bigl[\wf(k_1)+\wb(q)+|\Re(z)|\bigr]^{2(\delta+\gamma)-1}} dk_1\biggr)^{\frac12}\\
 & \qquad + \biggl(\int \frac{|F(k_1,q)|^2}{\bigl[\wb(q)+|\Re(z)|\bigr]^{2\delta}\bigl[\wb(q)+\wf(k_1)+|\Re(z)|\bigr]^{2\gamma}}   dk_1\biggr)^{\frac12} \biggr\}  \Bigl\|  R_0\bigl(z- \wb(q)\bigr)^{\frac{1+\delta-\gamma}{2}}  \af\bigl(G(.,q)\bigr) R_0\bigl(z\bigr)^{\gamma} \Phi \Bigr\| d q \\
 &\quad \leq \int \biggl\{  \biggl(\int \frac{|F(k_1,q)|^2}{\wf(k)\bigl[\wf(k_1)+\wb(q)+|\Re(z)|\bigr]^{2(\delta+\gamma)-1}} dk_1\biggr)^{\frac12}\\
 &\qquad + \biggl(\int \frac{|F(k_1,q)|^2}{\bigl[\wb(q)+|\Re(z)|\bigr]^{2\delta}\bigl[\wb(q)+\wf(k_1)+|\Re(z)|\bigr]^{2\gamma}}   dk_1\biggr)^{\frac12} \biggr\}   \\
 &\qquad \quad \biggl\{  \biggl(\int \frac{|G(k_2,q)|^2}{\wf(k_2)\bigl[\wf(k_2)+\wb(q)+|\Re(z)|\bigr]^{2(\delta+\gamma)-1}} dk_2\biggr)^{\frac12}\\
 &\qquad +   \biggl(\int \frac{|G(k_2,q)|^2}{\bigl[\wb(q)+|\Re(z)|\bigr]^{2\delta}\bigl[\wb(q)+\wf(k_2)+|\Re(z)|\bigr]^{2\gamma}}   dk_2\biggr)^{\frac12}\biggr\} dq\\
 &\quad \leq  \biggl\{\biggl(\int \frac{|F(k_1,q)|^2}{\wf(k_1)\bigl[\wf(k_1)+\wb(q)+|\Re(z)|\bigr]^{2(\delta+\gamma)-1}} dk_1 dq\biggr)^{\frac12}\\
& \qquad + \biggl(\int \frac{|F(k_1,q)|^2}{\bigl[\wb(q)+|\Re(z)|\bigr]^{2\delta}\bigl[\wb(q)+\wf(k_1)+|\Re(z)|\bigr]^{2\gamma}}dk_1 dq\biggr)^{\frac12} \biggr\}\\
&\qquad \quad \biggl\{ \biggl(\int \frac{|G(k_2,q)|^2}{\wf(k_2)\bigl[\wf(k_2)+\wb(q)+|\Re(z)|\bigr]^{2(\delta+\gamma)-1}} dk_2 dq \biggr)^{\frac12}\\
&\qquad \quad \biggl(\int \frac{|G(k_2,q)|^2}{\bigl[\wb(q)+|\Re(z)|\bigr]^{2\delta}\bigl[\wb(q)+\wf(k_2)+|\Re(z)|\bigr]^{2\gamma}}   dk_2 dq \biggr)^{\frac12}\biggr\}.
\end{align*}

In the same way as before, Lemma \ref{EstimateWtihRespectToOtherPowers} can now be used  to conclude that 
 \begin{align*}
 &\int  \Bigl| \Bigl\langle \Psi \Big| R_0\bigl(z\bigr)^{\delta} \cf\bigl(F(.,q)\bigr)  R_0\bigl(z- \wb(q)\bigr)  \af\bigl(G(.,q)\bigr) R_0\bigl(z\bigr)^{\gamma} \Phi \Bigr\rangle  \Bigr| d q \\
 &\quad \leq  \biggl\{C_s\biggl(\int \frac{|F(k_1,q)|^2}{\wb(q)\bigl[\wb(q)+|\Re(z)|\bigr]^{2(\delta+\gamma)-1}} dk_1 dq\biggr)^{\frac12}\\
& \qquad + \biggl(\int \frac{|F(k_1,q)|^2}{\wb(q)\bigl[\wb(q)|\Re(z)|\bigr]^{2(\gamma+\delta)-1}}dk_1 dq\biggr)^{\frac12} \biggr\}\\
&\qquad \quad \biggl\{ C_s\biggl(\int \frac{|G(k_2,q)|^2}{\wb(q)\bigl[\wb(q)+|\Re(z)|\bigr]^{2(\delta+\gamma)-1}} dk_2 dq \biggr)^{\frac12}\\
&\qquad \quad \biggl(\int \frac{|G(k_2,q)|^2}{\wb(q)\bigl[\wb(q)+\wf(k_2)+|\Re(z)|\bigr]^{2(\gamma+\delta)-1}}   dk_2 dq \biggr)^{\frac12}\biggr\}.
\end{align*}
\end{proof}

\begin{lem}
\label{IE5}
Let us assume that $\Re(z)<-1$ and $0 \leq \gamma \leq 1$. We then have: 
\begin{align*}
\Bigl\|  H^{\ab \af}(F_1) R_0\bigl(z\bigr)  H^{\ab \cf}(F_2) R_0\bigl(z\bigr) H^{\cb \cf} (F_3) R_0\bigl(z\bigr)^{\gamma} \Bigr\| & \leq \left(64 C_s^{4+2\gamma}+ 3C_s^{\frac{4+2\gamma}{3}} + 2 C_s\right)  K^{(3)}_{z, \gamma}(F_1,F_2,F_3)\\
 \Bigl\| R_0\bigl(z\bigr)^{\gamma} H^{\ab \af}(F_3) R_0\bigl(z\bigr)  H^{\cb \af}(F_2) R_0\bigl(z\bigr) H^{\cb \cf} (F_1)  \Bigr\| & \leq \left(64 C_s^{4+2\gamma}+ 3C_s^{\frac{4+2\gamma}{3}} + 2 C_s\right) K^{(3)}_{z, \gamma}(F_1,F_2,F_3)
\end{align*}
where 
\begin{equation}
K^{(3)}_{z,\gamma}(F_1,F_2,F_3) = \prod^{3}_{i=1}\biggl(\int \frac{|F_i(k_i,q_i)|^2}{\wf(k_i) \bigl[\wf(k_i)+|\Re(z)|\bigr]^{\frac{1}{3} + 2\frac{\gamma}{3}}}d k_1 dq_1\biggr)^{\frac12}.
\end{equation}
\end{lem}

\begin{proof} Let $\Psi,\Phi\in \mathscr H$. We start by reordering the bosonic operators:
\begin{align}\label{C5-Step1}
 \nonumber  & \Bigl|\Bigl\langle \Psi \Big| H^{\ab \af}(F_1) R_0\bigl(z\bigr)  H^{\ab \cf}(F_2) R_0\bigl(z\bigr) H^{\cb \cf}(F_3) R_0\bigl(z\bigr)^{\gamma}   \Phi \Bigr\rangle \Bigr| \\
\nonumber   & \quad  \leq  \biggl|\biggl\langle\Psi \bigg|  \int \af\bigl(F_1(.,q_1)\bigr) \cb(q_3) R_0\bigl(z  - \wb(q_1)-\wb(q_3)\bigr)  \cf\bigl(F_2(.,q_2)\bigr) \\
\nonumber &\qquad \quad \ab(q_1) R_0\bigl(z-\wb(q_3) - \wb(q_2)\bigr)  \cf\bigl(F_3(.,q_3)\bigr)\ab(q_2) dq_1 dq_2 dq_3  R_0\bigl(z\bigr)^{\gamma}\Phi \biggr>\biggr|\\
\nonumber & \qquad + \biggl|\biggl\langle \Psi \bigg|  \int \af\bigl(F_1(.,q)\bigr)  R_0\bigl(z  - \wb(q)\bigr)  \cf\bigl(F_2(.,q_2)\bigr) R_0\bigl(z-\wb(q) - \wb(q_2)\bigr)  \\
\nonumber &\qquad \quad  \cf\bigl(F_3(.,q)\bigr)\ab(q_2)  dq_2 dq R_0\bigl(z\bigr)^{\gamma}\Phi \biggr\rangle\biggr|\\
\nonumber & \qquad  + \biggl|\biggl\langle \Psi \bigg|  \int \af\bigl(F_1(.,q_1)\bigr)  R_0\bigl(z  - \wb(q_1)\bigr)  \cf\bigl(F_2(.,q)\bigr)R_0\bigl(z-\wb(q) - \wb(q_1)\bigr)  \\ 
&\qquad \quad \cf\bigl(F_3(.,q)\bigr)  
\ab(q_1) dq_1 dq R_0\bigl(z\bigr)^{\gamma}\Phi \biggr\rangle\biggr|.
\end{align}
We now normal order the fermionic operators related to contracted bosonic operators in the two last terms on the right-hand side of \eqref{C5-Step1}:
\begin{align}\label{C5-Step2a}
& \nonumber \biggl|\biggl\langle \Psi \bigg|  \int \af\bigl(F_1(.,q)\bigr)  R_0\bigl(z  - \wb(q)\bigr)  \cf\bigl(F_2(.,q_2)\bigr) R_0\bigl(z-\wb(q) - \wb(q_2)\bigr) \cf\bigl(F_3(.,q)\bigr)\ab(q_2)  dq_2 dq R_0\bigl(z\bigr)^{\gamma}\Phi \biggr\rangle\biggr| \\
&\nonumber \quad \leq
\biggl|\biggl\langle \Psi \bigg|  \int \overline{F_1(k,q)} F_3(k,q)  R_0\bigl(z - \wb(q)-\wf(k)\bigr)  \cf\bigl(F_2(.,q_2)\bigr)\\ 
&\nonumber \qquad \quad
 R_0\bigl(z-\wb(q) - \wb(q_2)-\wf(k)\bigr)  \ab(q_2) dq dq_2 dk  R_0\bigl(z\bigr)^{\gamma} \Phi \biggr\rangle\biggr| \\
&\nonumber \qquad  +
\biggl|\biggl\langle \Psi \bigg|  \int \overline{F_1(k_1,q)} F_3(k_3,q) \cf(k_3) R_0\bigl(z - \wb(q)-\wf(k_1)-\wf(k_3)\bigr)  \cf\bigl(F_2(.,q_2)\bigr) \\ 
& \nonumber \qquad \quad
 \af(k_1) R_0\bigl(z-\wb(q) - \wb(q_2)-\wf(k_3)\bigr)  \ab(q_2) dq dq_2 dk_1 dk_3 R_0\bigl(z\bigr)^{\gamma} \Phi \biggr\rangle\biggr| \\ 
&\nonumber \qquad  +
\biggl|\biggl\langle \Psi \bigg| \int \overline{F_1(k,q) } F_3(k_3,q) F_2(k,q_2)  R_0\bigl(z - \wb(q)-\wf(k) \bigr) \cf(k_3) \\ 
&\qquad \quad  R_0\bigl(z-\wb(q) - \wb(q_2)-\wf(k_3)\bigr)  \ab(q_2) dq dq_2 dk dk_3 R_0\bigl(z\bigr)^{\gamma}\Phi \biggr\rangle\biggr|
\end{align}
and
\begin{align}\label{C5-Step2b}
\nonumber & \biggl|\biggl\langle \Psi \bigg|  \int \af\bigl(F_1(.,q_1)\bigr)  R_0\bigl(z  - \wb(q_1)\bigr)  \cf\bigl(F_2(.,q)\bigr)R_0\bigl(z-\wb(q) - \wb(q_1)\bigr)  
\cf\bigl(F_3(.,q)\bigr)  
\ab(q_1) dq_1 dq R_0(z)^{\gamma}\Phi \biggr\rangle\biggr|\\
&\nonumber \quad \leq \biggl|\biggl\langle \Psi \bigg| \int F_2(k_2,q) F_3(k_3,q)  \cf(k_2)\af\bigl(F_1(.,q_1)\bigr) R_0\bigl(z  - \wb(q_1)-\wf(k_2)\bigr)   \cf(k_3)\\
 &\nonumber \qquad \quad R_0\bigl(z-\wb(q)-\wf(k_3)- \wb(q_1)\bigr)  \ab(q_1) dq_1 dq dk_2 dk_3  R_0\bigl(z\bigr)^{\gamma} \Phi \biggr\rangle\biggr| \\ 
&\nonumber \qquad + \biggl|\biggl\langle \Psi \bigg|  \int \overline{F_1(k, q_1)} F_2(k,q) F_3(k_3,q)   \cf(k_3) R_0\bigl(z-\wf(k)-\wb(q_1)-\wf(k_3)\bigr) \\ 
&\qquad \quad R_0\bigl(z-\wb(q)-\wf(k_3)-\wb(q_1)\bigr)\ab(q_1)  dq_1 dq d k d k_3  R_0\bigl(z\bigr)^{\gamma} \Phi \biggr\rangle\biggr|.
\end{align}

We start by estimating the first term on the right-hand side of \eqref{C5-Step1}
\begin{align}\label{C5-Step3a}
   &\nonumber \biggl|\biggl\langle \Psi \bigg|  \int \af\bigl(F_1(.,q_1)\bigr) \cb(q_3) R_0\bigl(z  - \wb(q_1)-\wb(q_3)\bigr)  \cf\bigl(F_2(.,q_2)\bigr)   \\
& \nonumber \qquad \quad \ab(q_1) R_0\bigl(z-\wb(q_3) - \wb(q_2)\bigr)  \cf\bigl(F_3(.,q_3)\bigr)\ab(q_2) dq_1 dq_2 dq_3 R_0\bigl(z\bigr)^{\gamma}\Phi \biggr\rangle \biggr|\\
& \nonumber \quad \leq \int \Bigl|\Bigl\langle R_0\bigl(\bar{z}  - \wb(q_1)-\wb(q_3)\bigr)^{\frac23+\frac13\gamma} \cf\bigl(F_1(.,q_1)\bigr) \ab(q_3)\Psi \Big|  \\
&\nonumber  \qquad \quad      R_0\bigl(z  - \wb(q_1)-\wb(q_3)\bigr)^{\frac13-\frac13\gamma}  \cf\bigl(F_2(.,q_2)\bigr)\ab(q_1) R_0\bigl(z-\wb(q_3) - \wb(q_2)\bigr)^{\frac13+\frac23\gamma}\\
&\nonumber  \qquad  \quad R_0\bigl(z-\wb(q_3) - \wb(q_2)\bigr)^{\frac23-\frac23\gamma}  \cf\bigl(F_3(.,q_3)\bigr)\ab(q_2)  R_0\bigl(z\bigr)^{\gamma}\Phi \Bigr\rangle\Bigr|dq_1 dq_2 dq_3\\
& \nonumber \quad \leq \int \Bigl\| R_0\bigl(\bar{z}  - \wb(q_1)-\wb(q_3)\bigr)^{\frac23+\frac13\gamma} \cf\bigl(F_1(.,q_1)\bigr) \ab(q_3)\Psi \Bigr\| \\
& \nonumber \qquad \quad \Bigl\|    R_0\bigl(z  - \wb(q_1)-\wb(q_3)\bigr)^{\frac13-\frac13\gamma}  \cf\bigl(F_2(.,q_2)\bigr)\ab(q_1) R_0\bigl(z-\wb(q_3) - \wb(q_2)\bigr)^{\frac13+\frac23\gamma}\\
&  \qquad \quad  R_0\bigl(z-\wb(q_3) - \wb(q_2)\bigr)^{\frac23-\frac23\gamma}  \cf\bigl(F_3(.,q_3)\bigr)\ab(q_2)  R_0\bigl(z\bigr)^{\gamma}\Phi\Bigr\| dq_1 dq_2 dq_3,
\end{align}
where we used the Cauchy-Schwarz inequality in the last step. 
Multiplying and dividing by $\wb(q_3)^{\frac12} [\wb(q_3) + |\Re(z)|]^{\frac16+\frac13\gamma}$, followed by an application of Cauchy-Schwarz inequality with respect to the $q_3$-integration yields:
\begin{align}\label{C5-Step3b}
&\nonumber \int \Bigl\| R_0\bigl(\bar{z}  - \wb(q_1)-\wb(q_3)\bigr)^{\frac23+\frac13\gamma} \cf\bigl(F_1(.,q_1)\bigr) \ab(q_3)\Psi \Bigr\| \\
&\nonumber  \qquad \quad \Bigl\|    R_0\bigl(z  - \wb(q_1)-\wb(q_3)\bigr)^{\frac13-\frac13\gamma}  \cf\bigl(F_2(.,q_2)\bigr)\ab(q_1) R_0\bigl(z-\wb(q_3) - \wb(q_2)\bigr)^{\frac13+\frac23\gamma}\\
& \nonumber \qquad \quad  R_0\bigl(z-\wb(q_3) - \wb(q_2)\bigr)^{\frac23-\frac23\gamma}  \cf\bigl(F_3(.,q_3)\bigr)\ab(q_2)  R_0\bigl(z\bigr)^{\gamma}\Phi\Bigr\| dq_1 dq_2 dq_3\\
& \nonumber \quad \leq \int \biggl(\int \wb(q_3) \bigl[\wb(q_3) + |\Re(z)|\bigr]^{\frac13+\frac23\gamma}\biggl\| R_0\bigl(-|\Re(z)|  -\wb(q_3)\bigr)^{\frac23+\frac13\gamma} \cf\bigl(F_1(.,q_1)\bigr) \ab(q_3)\Psi \biggr\|^2 dq_3\biggr)^{\frac12} \\
& \nonumber\qquad  \quad\biggl(\int \frac{\biggl\|    R_0\bigl(z  - \wb(q_1)-\wb(q_3)\bigr)^{\frac13-\frac13\gamma}  \cf\bigl(F_2(.,q_2)\bigr)\ab(q_1) R_0\bigl(z-\wb(q_3) - \wb(q_2)\bigr)^{\frac13+\frac23\gamma}}{\wb(q_3) \bigl[\wb(q_3) + |\Re(z)|\bigr]^{\frac13+\frac23\gamma}} \\
& \qquad \quad  R_0\bigl(z-\wb(q_3) - \wb(q_2)\bigr)^{\frac23-\frac23\gamma}  \cf\bigl(F_3(.,q_3)\bigr)\ab(q_2)  R_0\bigl(z\bigr)^{\gamma}\Phi\biggr\|^2dq_3\biggr)^{\frac12} dq_1 dq_2.
\end{align}
 Lemma~\ref{EstiamtesofAsharp} can be used to establish that 
\begin{equation*}
    \int \wb(q_3) \bigl[\wb(q_3) + |\Re(z)|\bigr]^{\frac13+\frac23\gamma}\Bigl\| R_0\bigl(-|\Re(z)|  -\wb(q_3)\bigr)^{\frac23+\frac13\gamma} \cf\bigl(F_1(.,q_1)\bigr) \ab(q_3)\Psi \Bigr\|^2 dq_3 \leq 4 \bigl\|F_1(.,q_1)\bigr\|^2 \bigl\|\Psi\bigr\|^2.
\end{equation*}
Inserting this into \eqref{C5-Step3b}, we may continue the estimate  \eqref{C5-Step3a}, multiplying and dividing by $\wb(q_1)^{\frac12} [\wb(q_1)+|\Re(z)|]^{\frac16 + \frac13\gamma}$: 
\begin{align*}
   &\biggl|\biggl\langle \Psi \bigg|  \int \af\bigl(F_1(.,q_1)\bigr) \cb(q_3) R_0\bigl(z  - \wb(q_1)-\wb(q_3)\bigr)  \cf\bigl(F_2(.,q_2)\bigr) \\
& \qquad\quad  \ab(q_1) R_0\bigl(z-\wb(q_3) - \wb(q_2)\bigr)  \cf\bigl(F_3(.,q_3)\bigr)\ab(q_2) dq_1 dq_2 dq_3 R_0\bigl(z\bigr)^{\gamma}\Phi \biggr\rangle\biggr|\\
& \quad \leq 4 \bigl\|\Psi\bigr\|  \int \frac{\|F_1(.,q_1)\|}{\wb(q_1)^{\frac12} \bigl[\wb(q_1)+|\Re(z)|\bigr]^{\frac16 + \frac13\gamma}} \biggl(\frac{1}{\wb(q_3) \bigl[\wb(q_3) + |\Re(z)|\bigr]^{\frac13+\frac23\gamma}} \int \wb(q_1) \bigl[\wb(q_1)+|\Re(z)|\bigr]^{\frac13 + \frac23\gamma} \\
& \qquad \quad  \Bigl\|    R_0\bigl(z  - \wb(q_1)-\wb(q_3)\bigr)^{\frac13-\frac13\gamma}  \cf\bigl(F_2(.,q_2)\bigr)\ab(q_1) R_0\bigl(z-\wb(q_3) - \wb(q_2)\bigr)^{\frac13+\frac23\gamma} \\
& \qquad \quad  R_0\bigl(z-\wb(q_3) - \wb(q_2)\bigr)^{\frac23-\frac23\gamma}  \cf\bigl(F_3(.,q_3)\bigr)\ab(q_2)  R_0\bigl(z\bigr)^{\gamma}\Phi\Bigr\|^2dq_3\biggr)^{\frac12} dq_1 dq_2. \\
\end{align*}
Using Cauchy-Schwarz inequality with respect to the $q_1$-integration, followed by another application of Lemma~\ref{EstiamtesofAsharp}, also with respect to the $q_1$-integration,  we arrive at
\begin{align*}
   &\biggl|\biggl\langle \Psi \bigg|  \int \af\bigl(F_1(.,q_1)\bigr) \cb(q_3) R_0\bigl(z  - \wb(q_1)-\wb(q_3)\bigr)  \cf\bigl(F_2(.,q_2)\bigr)  \\
& \qquad \quad \ab(q_1) R_0\bigl(z-\wb(q_3) - \wb(q_2)\bigr)  \cf\bigl(F_3(.,q_3)\bigr)\ab(q_2) dq_1 dq_2 dq_3 R_0\bigl(z\bigr)^{\gamma}\Phi \biggr\rangle\biggr|\\
& \quad \leq 4 \bigl\|\Psi\bigr\|  \biggl(\int \frac{\|F_1(.,q'_1)\|^2}{\wb(q'_1) \bigl[\wb(q'_1)+|\Re(z)|\bigr]^{\frac13 + \frac23\gamma}} dq'_1\biggr)^{\frac12} \int \biggl( \int \frac{\wb(q_1) \bigl[\wb(q_1)+|\Re(z)|\bigr]^{\frac13 + \frac23\gamma}}{\wb(q_3) \bigl[\wb(q_3) + |\Re(z)|\bigr]^{\frac13+\frac23\gamma}} \\
&\qquad \quad \Bigl\|    R_0\bigl(z  - \wb(q_1)-\wb(q_3)\bigr)^{\frac13-\frac13\gamma}  \cf\bigl(F_2(.,q_2)\bigr)\ab(q_1) R_0\bigl(z-\wb(q_3) - \wb(q_2)\bigr)^{\frac13+\frac23\gamma} \\
& \qquad \quad  R_0\bigl(z-\wb(q_3) - \wb(q_2)\bigr)^{\frac23-\frac23\gamma}  \cf\bigl(F_3(.,q_3)\bigr)\ab(q_2)  R_0\bigl(z\bigr)^{\gamma}\Phi\Bigr\|^2dq_3 dq_1\biggr)^{\frac12} dq_2 \\
& \quad \leq 16 \bigl\|\Psi\bigr\|  \biggl(\int \frac{\|F_1(.,q'_1)\|^2}{\wb(q'_1) \bigl[\wb(q'_1)+|\Re(z)|\bigr]^{\frac13 + \frac23\gamma}} dq'_1\biggr)^{\frac12} \int \bigl\|F_2(.,q_2)\bigr\| \biggl(\int\frac{1}{\wb(q_3) \bigl[\wb(q_3) + |\Re(z)|\bigr]^{\frac13+\frac23\gamma}}    \\
& \qquad \quad  \Bigl\| R_0\bigl(z-\wb(q_3) - \wb(q_2)\bigr)^{\frac23-\frac23\gamma}  \cf\bigl(F_3(.,q_3)\bigr)\ab(q_2)  R_0\bigl(z\bigr)^{\gamma}\Phi\Bigr\|^2dq_3\biggr)^{\frac12} dq_2  \\
\end{align*}
Continuing a third time, multiplying and dividing by $\wb(q_2)^{\frac12} [\wb(q_2) + |\Re(z)|]^{\frac16+\frac13\gamma}$ followed by an application of the Cauchy-Schwarz inequality with respect to the $q_2$-integration and yet another application of  Lemma~\ref{EstiamtesofAsharp}, also with respect to the $q_2$-integration, yields
\begin{align*}
   &\biggl|\biggl\langle\Psi \bigg|  \int \af\bigl(F_1(.,q_1)\bigr) \cb(q_3) R_0\bigl(z  - \wb(q_1)-\wb(q_3)\bigr)  \cf\bigl(F_2(.,q_2)\bigr) \\
& \qquad \quad \ab(q_1) R_0\bigl(z-\wb(q_3) - \wb(q_2)\bigr)  \cf\bigl(F_3(.,q_3)\bigr)\ab(q_2) dq_1 dq_2 dq_3 R_0\bigl(z\bigr)^{\gamma}\Phi \biggr\rangle\biggr|\\
& \quad \leq 16 \bigl\|\Psi\bigr\|  \biggl(\int \frac{\|F_1(.,q_1)\|^2}{\wb(q_1) \bigl[\wb(q_1)+|\Re(z)|\bigr]^{\frac13 + \frac23\gamma}} dq_1\biggr)^{\frac12} \int \frac{\|F_2(.,q_2)\|}{\wb(q_2)^{\frac12} \bigl[\wb(q_2) + |\Re(z)|\bigr]^{\frac16+\frac13\gamma}}   \\
& \qquad  \quad \biggl(\int\frac{\wb(q_2) \bigl[\wb(q_2) + |\Re(z)|\bigr]^{\frac13+\frac23\gamma}}{\wb(q_3) \bigl[\wb(q_3) + |\Re(z)|\bigr]^{\frac13+\frac23\gamma}}   \biggl\| R_0\bigl(z-\wb(q_3) - \wb(q_2)\bigr)^{\frac23-\frac23\gamma}  \cf\bigl(F_3(.,q_3)\bigr)\ab(q_2)  R_0\bigl(z\bigr)^{\gamma}\Phi\biggr\|^2dq_3 \biggr)^{\frac12} dq_2  \\
& \quad \leq 16 \bigl\|\Psi\bigr\|  \biggl(\int \frac{\|F_1(.,q_1)\|^2}{\wb(q_1) \bigl[\wb(q_1)+|\Re(z)|\bigr]^{\frac13 + \frac23\gamma}} dq_1\biggr)^{\frac12} \biggl(\int \frac{\|F_2(.,q'_2)\|^2}{\wb(q'_2) \bigl[\wb(q'_2) + |\Re(z)|\bigr]^{\frac13+\frac23\gamma}} dq'_2\biggr)^{\frac12}   \\
& \qquad   \quad  \biggl(\int\frac{\wb(q_2) \bigl[\wb(q_2) + |\Re(z)|\bigr]^{\frac13+\frac23\gamma}}{\wb(q_3) \bigl[\wb(q_3) + |\Re(z)|\bigr]^{\frac13+\frac23\gamma}}   \Bigl\|R_0\bigl(z-\wb(q_3) - \wb(q_2)\bigr)^{\frac23-\frac23\gamma}  \cf\bigl(F_3(.,q_3)\bigr)\ab(q_2)  R_0\bigl(z\bigr)^{\gamma}\Phi\Bigr\|^2dq_3  dq_2 \biggr)^{\frac12}\\
& \quad  \qquad 64 \bigl\|\Phi\bigr\| \bigl\|\Psi\bigr\|  \biggl(\int \frac{\|F_1(.,q_1)\|^2}{\wb(q_1) \bigl[\wb(q_1)+|\Re(z)|\bigr]^{\frac13 + \frac23\gamma}} dq_1\biggr)^{\frac12} \biggl(\int \frac{\|F_2(.,q_2)\|^2}{\wb(q_2) \bigl[\wb(q_2) + |\Re(z)|\bigr]^{\frac13+\frac23\gamma}} dq_2\biggr)^{\frac12}   \\
& \qquad  \quad \biggl(\int \frac{\|F_3(.,q_3)\|^2}{\wb(q_3) \bigl[\wb(q_3) + |\Re(z)|\bigr]^{\frac13+\frac23\gamma}} dq_3\biggr)^{\frac12}\\
& \quad \leq 64 C_s^{4+2\gamma} K^{(3)}_{z,\gamma}(F_1,F_2,F_3).
\end{align*}
Here Lemma~\ref{EstimateWtihRespectToOtherPowers} have been used in the last inequality. This proof can be straightforwardly modified so that one can estimate the terms on the right-hand sides of \eqref{C5-Step2a} and \eqref{C5-Step2b}, using Lemma~\ref{EstiamtesofAsharp},~\ref{EstiamtesofAsharp1} and Lemma~\ref{EstimateWtihRespectToOtherPowers}. 

\end{proof}

The same type of arguments enable us to prove the following lemma

\begin{lem}
\label{IE6}
Let us assume that $\Re(z)<-1$ and $0\leq \gamma \leq 1$ we then have: 
\begin{align*}
\Bigl\|  H^{\ab \cf}(F_1) R_0(z)  H^{\ab \cf}(F_2) R_0(z) H^{\cb \cf} (F_3) R_0(z)^{\gamma} \Bigr\| & \leq \Bigl(64 C_s^{4+2\gamma}+2C_s^{\frac{4+2\gamma}{3}}\Bigr)  K^{(3)}_{z, \gamma}(F_1,F_2,F_3)\\
 \Bigl\| R_0(z)^{\gamma} H^{\ab \af}(F_3) R_0(z)  H^{\cb \af}(F_2) R_0(z) H^{\cb \af} (F_1)  \Bigr\| & \leq \Bigl(64 C_s^{4+2\gamma}+2C_s^{\frac{4+2\gamma}{3}}\Bigr)  K^{(3)}_{z, \gamma}(F_1,F_2,F_3).
\end{align*}
where 
\begin{equation}
K^{(3)}_{z,\gamma}(F_1,F_2,F_3) = \prod^{3}_{i=1}\biggl(\int \frac{|F_i(k_i,q_i)|^2}{\wf(k_i) \bigl[\wf(k_i)+|\Re(z)|\bigr]^{\frac{1}{3} + 2\frac{\gamma}{3}}}d k_i dq_i\biggr)^{\frac12}.
\end{equation}
\end{lem}

\begin{lem}
\label{RegUniformKernel}
Let $\beta \in [0,1]$. The following holds true:
\begin{enumerate}
    \item If $\beta  > \frac{d}{2}-p$, then $ K^{(1)}_{z,\beta}(G^{(\sharp)}_{\Lambda})$ is uniformly bounded with respect to $\Lambda$. Moreover, for any $\epsilon>0$ there exists $R$ such that for any $\Lambda, \Lambda'\geq R$, 
    \[
    K^{(1)}_{z,\beta}\bigl(G^{(\sharp)}_{\Lambda} - G^{(\sharp)}_{\Lambda'}\bigr)\leq \epsilon.
    \]
   \item If $\beta  > d-2p-1$, then $  K^{(2)}_{z,\beta}(G^{(\sharp)}_{\Lambda},G^{(\sharp)}_{\Lambda})$ is uniformly bounded with respect to $\Lambda$. Moreover, for any $\epsilon>0$ there exists $R$ such that for any $\Lambda, \Lambda'\geq R$, 
   \[
    K^{(2)}_{z,\beta}\bigl(G^{(\sharp)}_{\Lambda} - G^{(\sharp)}_{\Lambda'},G^{(\sharp)}_{\Lambda}\bigr) \leq \epsilon\qquad \textup{and} \qquad 
         K^{(2)}_{z,\beta}\bigl(G^{(\sharp)}_{\Lambda'}, G^{(\sharp)}_{\Lambda} - G^{(\sharp)}_{\Lambda'}\bigr) \leq \epsilon.
   \]
   \item If $\beta  > \frac{3}{2}d-3p-2$, then $  K^{(3)}_{z,\beta}(G^{(\sharp)}_{\Lambda},G^{(\sharp)}_{\Lambda},G^{(\sharp)}_{\Lambda})$ is uniformly bounded with respect to $\Lambda$. Moreover, for any $\epsilon>0$ there exists $R$ such that for any $\Lambda, \Lambda'\geq R$, 
   \[
      K^{(3)}_{z,\beta}\bigl(G^{(\sharp)}_{\Lambda} - G^{(\sharp)}_{\Lambda'},G^{(\sharp)}_{\Lambda},G^{(\sharp)}_{\Lambda}\bigr) \leq \epsilon,\qquad
       K^{(3)}_{z,\beta}\bigl(G^{(\sharp)}_{\Lambda'}, G^{(\sharp)}_{\Lambda} - G^{(\sharp)}_{\Lambda'},G^{(\sharp)}_{\Lambda}\bigr) \leq \epsilon\qquad \textup{and}\qquad
       K^{(3)}_{z,\beta}\bigl(G^{(\sharp)}_{\Lambda'}, G^{(\sharp)}_{\Lambda'}, G^{(\sharp)}_{\Lambda} - G^{(\sharp)}_{\Lambda'}\bigr) \leq \epsilon.
   \]
\end{enumerate}
\end{lem}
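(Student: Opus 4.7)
The plan is to verify each bullet by reducing every $K$-norm to a single integral in the bosonic variable $q$, using the explicit form of $G^{(\sharp)}_\Lambda$ together with Lemma \ref{EstimateWtihRespectToOtherPowers} and the normalisation $\int |g|^2 < \infty$. Observe first that
\[
\bigl|G^{(\sharp)}_\Lambda(k,q)\bigr|^2 \leq \|h^{(\sharp)}\|_\infty^2 \|\chi\|_\infty^4 \frac{|g(k\pm q)|^2}{\wb(q)^{2p}}
\]
and that, for the fixed $z$ under consideration, $\wf(k)+|Re(z)|$ and $\wf(k)$ are comparable (as are the bosonic analogues). For the first bullet, substituting this bound into $K^{(1)}_{z,\beta}(G^{(\sharp)}_\Lambda)^2$ and applying Lemma \ref{EstimateWtihRespectToOtherPowers} to move the fermionic weight $\wf(k)^{-2\beta}$ onto the bosonic side yields, after the substitution $u=k\pm q$,
\[
K^{(1)}_{z,\beta}\bigl(G^{(\sharp)}_\Lambda\bigr)^2 \lesssim \int \frac{|g(k\pm q)|^2}{\wb(q)^{2p+2\beta}}\, dk\, dq = \|g\|_2^2 \int \frac{dq}{\wb(q)^{2p+2\beta}},
\]
which is finite precisely when $2p+2\beta>d$, i.e.\ $\beta>\frac{d}{2}-p$.

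The same mechanism handles the remaining two bullets. For the second, $A_{z,\beta}(G^{(\sharp)}_\Lambda)^2$ reduces in the same way to $\|g\|_2^2\int dq/\wb(q)^{2p+1+\beta}$, finite iff $\beta>d-2p-1$; the product form $K^{(2)}=A\cdot A$ inherits the uniform bound. For the third, each of the three factors in $K^{(3)}_{z,\gamma}(G^{(\sharp)}_\Lambda,G^{(\sharp)}_\Lambda,G^{(\sharp)}_\Lambda)$ is bounded analogously by
\[
\int \frac{|g(k\pm q)|^2}{\wb(q)^{2p+\frac43+\frac{2\gamma}{3}}}\, dk\, dq,
\]
finite iff $\gamma>\frac{3d}{2}-3p-2$. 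A minor technicality is to respect the non-negativity requirement on the exponents in Lemma \ref{EstimateWtihRespectToOtherPowers}; this is harmless because $\wb(q)\geq \mb>0$ and $\wf(k)\geq \mf>0$, so any small deficit in an exponent is absorbed into a constant before applying the lemma.

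For the three Cauchy statements, write
\[
G^{(\sharp)}_\Lambda(k,q)-G^{(\sharp)}_{\Lambda'}(k,q) = \frac{h^{(\sharp)}(k,q)}{\wb(q)^p}\, g(k\pm q)\Bigl[\chi\bigl(\tfrac{|k|}{\Lambda}\bigr)\chi\bigl(\tfrac{|q|}{\Lambda}\bigr)-\chi\bigl(\tfrac{|k|}{\Lambda'}\bigr)\chi\bigl(\tfrac{|q|}{\Lambda'}\bigr)\Bigr].
\]
The bracketed factor is uniformly bounded by $2\|\chi\|_\infty^2$, and by continuity of $\chi$ at the origin it tends to $0$ pointwise as $\Lambda,\Lambda'\to\infty$ (both products converge to $\chi(0)^2$). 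Under the stated assumptions on $\beta$ or $\gamma$, the estimates of the previous paragraph provide an integrable dominating function independent of $\Lambda,\Lambda'$, so dominated convergence gives $K^{(1)}_{z,\beta}(G^{(\sharp)}_\Lambda-G^{(\sharp)}_{\Lambda'})\to 0$, and similarly for the mixed $K^{(2)}$ and $K^{(3)}$ expressions by expanding $G^{(\sharp)}_\Lambda-G^{(\sharp)}_{\Lambda'}$ in one slot at a time while keeping the other slots uniformly bounded. The main thing to be careful about is precisely this expansion step, since $K^{(2)}$ and $K^{(3)}$ are multilinear in their arguments; the bilinear/trilinear Cauchy-Schwarz structure of $A$ makes each summand estimable by an $\epsilon$-small factor times uniformly bounded ones.
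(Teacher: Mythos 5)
Your proof is correct and follows essentially the same route as the paper's: bound $h^{(\sharp)}$ and $\chi$ by their sup-norms, use the weight-shifting argument of Lemma~\ref{EstimateWtihRespectToOtherPowers} to put all dispersion weights on one variable, and integrate the other variable out with $\|g\|^2$, which reproduces exactly the stated exponent conditions $\beta>\tfrac d2-p$, $\beta>d-2p-1$, $\beta>\tfrac32 d-3p-2$. The only differences are cosmetic (you end with a $q$-integral where the paper ends with a $k$-integral, and your constants absorb $|Re(z)|$) and that you spell out the dominated-convergence and multilinearity argument for the Cauchy statements, which the paper leaves to the reader.
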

\begin{proof}
We prove only the Lemma for $K^{(1)}_{z,\beta}(G^{(j)}_{\Lambda})$, $j=1,2$, the other statements can be proved in the same way. For $j=1,2$, we have
\begin{align*}
    K^{(1)}_{z,\beta}\bigl(G^{(j)}_{\Lambda}\bigr)^2 &= \int \frac{\bigl|G^{(j)}_{\Lambda}(k,q)\bigr|^2}{\bigl[\wf(k)+|\Re(z)|\bigr]^{2\beta-1} \wf(k)} dq dk \\
    & = \int \frac{\bigl|h^{(j)}(k,q)\bigr|\bigl|g(k- (-1)^j q)\bigr|^{2}  \chi_\Lambda(k)^2 \chi_\Lambda(q)^2}{\wb(q)^{2p}\bigl[\wf(k)+|\Re(z)|\bigr]^{2\beta-1} \wf(k)} dq dk. 
\end{align*}
By using the same type of argument as the one used in the proof of Lemma~\ref{EstimateWtihRespectToOtherPowers}, there exists a constant $C$, only depending on the masses $\mb$ and $\mf$, as well as the exponent $\beta$, such that we have: 
\begin{align*}
    K^{(1)}_{z,\beta}\bigl(G^{(j)}_{\Lambda}\bigr) & \leq C \bigl\|g\bigr\| \bigl\|h^{(j)}\bigr\|_{\infty} \biggl(\int \frac{1}{\bigl[\wf(k)+|\Re(z)|\bigr]^{2\beta-1} \wf(k)^{1+2p}}  dk \biggr)^{\frac12}.
\end{align*}
The claim now follows, since the integral is finite for $\beta>\frac{d}2 - p$.
\end{proof}

\section{Construction Theorem}

We recall, without proof, a construction theorem presented in \cite{AW2017}. 
\begin{Thannex}
\label{ThAW2017}
Let $\mathscr{H}$ be a Hilbert space, $\delta>0$, and $\rho_{\delta} = \{z \in \mathbb{C} \, |\, \Re(z) < -\delta \}$. Let $\{R(z)\}_{z \in \rho_{\delta}}$ be a family of bounded linear operators in $\mathscr{H}$ with: 
\begin{enumerate}
\item $R(z)^* = R(\overline{z})$~~for all $z\in \rho_{\delta} $,
\item $R(z_1) - R(z_2) = (z_2-z_1)R(z_1) R(z_2)$~~for all $z\in \rho_{\delta} $,
\item $z R(z) \Psi \to \Psi $ as $\Re(z) \to - \infty$  for all $\Psi \in \mathscr{H}$.
\end{enumerate}
Then, there exists a self-adjoint operator $H \colon D(H) \subset \mathscr{H} \to \mathscr{H}$ with  $\sigma(H) \subset [-\delta, \infty)$ and $(z-H)^{-1} = R(z)$ for all $z \in \rho_{\delta}$. 
\end{Thannex}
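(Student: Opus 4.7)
The plan is to follow the classical reconstruction of a self-adjoint operator from a pseudo-resolvent, organised in four steps: first show that each $R(z)$ is injective; next show that $\operatorname{Ran} R(z)$ does not depend on $z$; then use this common range as the domain of a candidate operator $H$ and prove it is symmetric; finally upgrade $H$ to a self-adjoint operator via a range condition and read off the spectrum.

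For injectivity, the resolvent identity applied with $(z_1,z_2)$ and again with $(z_2,z_1)$ shows that $R(z_1)R(z_2)=R(z_2)R(z_1)$ and that $\ker R(z_1)\subset\ker R(z_2)$, so the kernel is in fact $z$-independent. The hypothesis $zR(z)\Psi\to\Psi$ as $\mathrm{Re}(z)\to-\infty$ then forces this common kernel to be $\{0\}$. Rewriting the same identity as $R(z_2)=R(z_1)\bigl(I-(z_2-z_1)R(z_2)\bigr)$ gives $\operatorname{Ran} R(z_2)\subset\operatorname{Ran} R(z_1)$, and symmetry in $z_1,z_2$ upgrades this to equality; set $D:=\operatorname{Ran} R(z)$, independent of $z$.

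On $D$ define $H\phi:=z\phi-R(z)^{-1}\phi$. Well-definedness reduces to the observation that if $\phi=R(z_1)\Psi_1=R(z_2)\Psi_2$, then applying the resolvent identity to $R(z_2)(\Psi_1-\Psi_2)$ and using injectivity of $R(z_2)$ yields $\Psi_1-\Psi_2=(z_1-z_2)\phi$, which is precisely the relation required for $z_1\phi-\Psi_1=z_2\phi-\Psi_2$. Symmetry of $H$ follows by writing $\phi=R(z)\Psi$ and $\psi=R(\bar z)\Psi'$ (both possible since $D$ is $z$-independent) and combining $R(z)^*=R(\bar z)$ with the definition of $H$ to obtain $\langle H\phi,\psi\rangle=\bar z\langle\phi,\psi\rangle-\langle\phi,\Psi'\rangle=\langle\phi,H\psi\rangle$.

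To conclude, the third hypothesis also implies that $D$ is dense, since $zR(z)\Psi\to\Psi$ for every $\Psi\in\mathscr{H}$. A direct substitution gives $(z-H)R(z)=I$ on $\mathscr{H}$ and $R(z)(z-H)=I$ on $D$, so $\operatorname{Ran}(z-H)=\mathscr{H}$ for every $z\in\rho_\delta$, including non-real $z$; the standard basic criterion then promotes the symmetric densely defined operator $H$ to a self-adjoint one. The same identity identifies $(z-H)^{-1}=R(z)$, whence $\rho_\delta\subset\rho(H)$ and therefore $\sigma(H)\subset[-\delta,\infty)$ since $H$ is self-adjoint. The main delicate point is the well-definedness in step three, which hinges on a careful and simultaneous use of the commutativity and injectivity extracted from the resolvent identity; the rest is bookkeeping.
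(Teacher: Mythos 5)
Your proof is correct. Note that the paper states Theorem~\ref{ThAW2017} explicitly \emph{without proof}, deferring to \cite{AW2017}, so there is no in-paper argument to compare with; your argument is the standard reconstruction of a self-adjoint operator from a symmetric pseudo-resolvent, and it goes through as written. The key steps all check out: applying the resolvent identity with $(z_1,z_2)$ and $(z_2,z_1)$ gives commutativity, and then $R(z_1)\Psi=0$ forces $R(z_2)\Psi=(z_1-z_2)R(z_2)R(z_1)\Psi=0$, so the kernel is common and hypothesis three kills it; the factorisation $R(z_2)=R(z_1)\bigl(I-(z_2-z_1)R(z_2)\bigr)$ gives the common range $D$; well-definedness of $H\phi=z\phi-R(z)^{-1}\phi$ follows, as you say, from $R(z_2)\bigl(\Psi_1-\Psi_2-(z_1-z_2)\phi\bigr)=0$ together with injectivity; the symmetry computation using $R(\bar z)^{*}=R(z)$ is correct; and hypothesis three gives density of $D$ since $zR(z)\Psi\in D$ converges to $\Psi$. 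The only point worth spelling out in a final write-up is the self-adjointness criterion: it is usually stated as $\mathrm{Ran}(H\pm i)=\mathscr{H}$, whereas you invoke surjectivity of $z-H$ at points of $\rho_\delta$. This is legitimate because $\rho_\delta$ contains a conjugate pair $z_0,\bar z_0$ of non-real points, and for a densely defined symmetric $H$ one has $\ker(H^{*}-z_0)=\mathrm{Ran}(\bar z_0-H)^{\perp}=\{0\}$, so any $\psi\in D(H^{*})$ agrees with the solution $\phi\in D(H)$ of $(z_0-H)\phi=(z_0-H^{*})\psi$, giving $D(H^{*})\subset D(H)$ and hence $H=H^{*}$; combined with $\rho_\delta\subset\rho(H)$ and $\sigma(H)\subset\mathbb{R}$ this yields $\sigma(H)\subset[-\delta,\infty)$ and $(z-H)^{-1}=R(z)$, as claimed.
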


\noindent\textbf{Acknowledgement.} The authors are grateful for support by the Independent Research Fund Denmark, via the project grant “Mathematical Aspects of Ultraviolet Renormalization” (8021-00242B).

\bibliographystyle{amsalpha}

\end{document}